\DeclareMathOperator{\tr}{tr}
\DeclareMathOperator{\rank}{rank}
\newcommand{\R}{\mathbb{R}}
\newcommand{\poly}{\mathop\mathrm{poly}}
\DeclareMathOperator{\nnz}{nnz}
\newcommand{\norm}[1]{\|#1\|}
\newcommand{\bv}[1]{\mathbf{#1}}
\newcommand{\algoname}[1]{\textnormal{\textsc{#1}}}
\newtheorem*{rep@theorem}{\rep@title}
\newcommand{\newreptheorem}[2]{%
\newenvironment{rep#1}[1]{%
 \def\rep@title{#2 \ref{##1}}%
 \begin{rep@theorem}}%
 {\end{rep@theorem}}}
\newtheorem{theorem}{Theorem}
\newtheorem{lemma}[theorem]{Lemma}
\newtheorem*{lemma*}{Lemma}
\newtheorem{definition}[theorem]{Definition}
\title{Randomized Block Krylov Methods for Stronger and Faster Approximate Singular Value Decomposition}
\author{
Cameron Musco \\
Massachusetts Institute of Technology, EECS\\
Cambridge, MA 02139, USA \\
\texttt{cnmusco@mit.edu} \\
\And
Christopher Musco \\
Massachusetts Institute of Technology, EECS\\
Cambridge, MA 02139, USA \\
\texttt{cpmusco@mit.edu}
}
\begin{document}

\maketitle

\begin{abstract}
Since being analyzed by Rokhlin, Szlam, and Tygert \cite{RokhlinTygertPCA} and popularized by Halko, Martinsson, and Tropp \cite{Halko:2011}, randomized Simultaneous Power Iteration has become the method of choice for 
approximate singular value decomposition. It 
is more accurate 
than simpler sketching algorithms, yet still converges quickly for \emph{any} matrix, independently of singular value gaps. 
After $\tilde{O}(1/\epsilon)$ iterations,
 it
gives a low-rank approximation within $(1+\epsilon)$ of optimal for spectral norm error.

We give the first provable runtime improvement on Simultaneous Iteration: a simple randomized block Krylov method, closely related to the classic Block Lanczos algorithm, gives the same guarantees in just $\tilde{O}(1/\sqrt{\epsilon})$ iterations and performs substantially better experimentally. Despite their long history, our analysis is the first of a Krylov subspace method that does not depend on singular value gaps, which are unreliable in practice.

Furthermore, while it is a simple accuracy benchmark, even $(1+\epsilon)$ error for spectral norm low-rank approximation does not imply that an algorithm returns high quality principal components, a major issue for data applications. We address this problem for the first time by showing that both Block Krylov Iteration and a minor modification of Simultaneous Iteration give nearly optimal PCA for any matrix. This result further justifies their strength over non-iterative sketching methods.

Finally, we give insight beyond the worst case, justifying why both algorithms can run much faster in practice than predicted. We clarify how simple techniques can take advantage of common matrix properties to significantly improve runtime.
\end{abstract}

\section{Introduction}
Any matrix $\bv{A} \in \mathbb{R}^{n \times d}$ with rank $r$ can be written using a singular value decomposition (SVD) as $\bv{A} = \bv{U}\bv{\Sigma}\bv{V^T}$. $\bv{U} \in \mathbb{R}^{n \times r}$ and $\bv{V} \in \mathbb{R}^{d \times r}$ have orthonormal columns ($\bv{A}$'s left and right singular vectors) and $\bv{\Sigma} \in \mathbb{R}^{r \times r}$ is a positive diagonal matrix containing $\bv{A}$'s singular values: $\sigma_1 \ge \ldots \ge \sigma_r$. A rank $k$ \emph{partial SVD} algorithm returns just the top $k$ left or right singular vectors of $\bv{A}$. These are the first $k$ columns of $\bv{U}$ or $\bv{V}$, denoted $\bv{U}_k$ and $\bv{V}_k$ respectively.

Among countless applications, the SVD is used for optimal low-rank approximation and principal component analysis (PCA)\footnote{Typically after mean centering $\bv{A}$'s columns or rows, depending on which principal components we want.}. 
Specifically, for $k < r$, a partial SVD can be used to construct a rank $k$ approximation $\bv{A}_k$ such that both $\|\bv{A} - \bv{A}_k\|_F$ and $\|\bv{A} - \bv{A}_k\|_2$ are as small as possible. We simply set $\bv{A}_k=\bv{U}_k \bv{U}_k^T \bv{A}$. That is, $\bv{A}_k$ is $\bv{A}$ projected onto the space spanned by its top $k$ singular vectors.

For principal component analysis, $\bv{A}$'s top singular vector $\bv{u}_1$ provides 
a top principal component, which describes the direction of greatest variance within $\bv{A}$.
The $i^\text{th}$ singular vector $\bv{u}_i$ provides the $i^\text{th}$ principal component, which is the direction of greatest variance orthogonal to all higher principal components. Formally, denoting $\bv{A}$'s $i^\text{th}$ singular value as $\sigma_i$,
\begin{align*}
\bv{u}_i^T \bv{A} \bv{A}^T \bv{u}_i = \sigma_i^2 = \max_{\bv{x}: \|\bv{x}\|_2 = 1 \text{, } \bv{x} \perp \bv{u}_j \forall j < i} \bv{x}^T\bv{A} \bv{A}^T\bv{x}.
\end{align*}
Traditional SVD algorithms are expensive, 
typically running in $O(nd^2)$ 
time\footnote{This is somewhat of an oversimplicifcation. By the Abel-Ruffini Theorem, an \emph{exact} SVD is incomputable even with exact arithmetic \cite{trefethen1997numerical}.
Accordingly, all SVD algorithm are inherently iteratively. Nevertheless, traditional methods including the ubiquitous QR algorithm obtain superlinear convergence rates for the low-rank approximation problem. In any reasonable computing environment, they can be taken to run in $O(nd^2)$ time.}. Hence, there has been substantial research on randomized techniques that seek nearly optimal low-rank approximation and PCA \cite{sarlos2006improved,rokhlinOriginal,RokhlinTygertPCA,Halko:2011,clarkson2013low}. These methods are quickly becoming standard tools in practice and implementations are widely available \cite{matlabrandsvd,cpprandsvd,scalarandsvd,libskylark}, including in popular learning libraries like scikit-learn \cite{scikit-learn}. 

Recent work focuses on algorithms whose runtimes \emph{do not depend on properties of $\bv{A}$}. 
In contrast, classical literature typically gives runtime bounds that depend on the gaps between $\bv{A}$'s singular values and become useless when these gaps are small (which is often the case in practice -- see Section \ref{sec:experiments}). 
This limitation is due to a focus on how quickly approximate singular vectors converge to the actual singular vectors of $\bv{A}$. When two singular vectors have nearly identical values they are difficult to distinguish, so convergence inherently depends on singular value gaps. 

Only recently has a shift in approximation goal, along with an improved understanding of randomization, allowed for algorithms that avoid gap dependence and thus run provably fast for \emph{any} matrix. For low-rank approximation and PCA, we only need to find a subspace that captures nearly as much variance as $\bv{A}$'s top singular vectors -- distinguishing between two close singular values is overkill. 

\subsection{Prior Work}

The fastest randomized SVD algorithms \cite{sarlos2006improved, clarkson2013low} run in $O(\nnz(\bv{A}))$ time\footnote{Here $\nnz(\bv{A})$ is the number of non-zero entries in $\bv{A}$ and this runtime hides lower order terms.}, are based on non-iterative sketching methods, and return a rank $k$ matrix $\bv{Z}$ with orthonormal columns $\bv{z}_1, \ldots, \bv{z}_k$ satisfying
\begin{align}
&\text{Frobenius Norm Error:} &&\|\bv{A} - \bv{Z}\bv{Z}^T\bv{A}\|_F \leq (1+\epsilon)\|\bv{A} - \bv{A}_k\|_F. \label{eq:frob_error}
\end{align}

Unfortunately, as emphasized in prior work \cite{RokhlinTygertPCA,Halko:2011,tygertImpl,onlinepcaspectral}, Frobenius norm error is often hopelessly insufficient, \emph{especially} for data analysis and learning applications. When $\bv{A}$ has a ``heavy-tail'' of singular values, which is common for noisy data, $\|\bv{A} - \bv{A}_k\|_F^2 = \sum_{i>k} \sigma_i^2$ can be huge, potentially much larger than $\bv{A}$'s top singular value. This renders \eqref{eq:frob_error} meaningless since $\bv{Z}$ does not need to align with any large singular vectors to obtain good multiplicative error.

To address this shortcoming, a number of papers \cite{sarlos2006improved, tygertImpl,onlinepcaspectral,woodruff2014sketching} suggest targeting spectral norm low-rank approximation error,
\begin{align}
&\text{Spectral Norm Error:} &&\|\bv{A} - \bv{Z}\bv{Z}^T\bv{A}\|_2 \leq (1+\epsilon)\|\bv{A} - \bv{A}_k\|_2,\label{eq:spectral_error}
\end{align}
which is intuitively stronger. 
When looking for a rank $k$ approximation, $\bv{A}$'s top $k$ singular vectors are often considered data and the remaining tail is considered noise. A spectral norm guarantee roughly ensures that $\bv{Z}\bv{Z}^T\bv{A}$ recovers $\bv{A}$ up to this noise threshold. 

A series of work \cite{RokhlinTygertPCA,Halko:2011,candesSubspace,boutsidis2014near,woodruff2014sketching} shows that decades old Simultaneous Power Iteration (also called subspace iteration or orthogonal iteration) implemented with random start vectors, achieves \eqref{eq:spectral_error} after $\tilde{O}(1/\epsilon)$ iterations. Hence, this
method, which was popularized by Halko, Martinsson, and Tropp in \cite{Halko:2011}, has become the randomized SVD algorithm of choice for practitioners \cite{scikit-learn,facebookBlog}.

\section{Our Results}
\subsection{Faster Algorithm}
We show that Algorithm \ref{blanczos}, a randomized relative of the Block Lanczos algorithm \cite{4045283, golub77}, which we call Block Krylov Iteration, gives the same guarantees as Simultaneous Iteration (Algorithm \ref{simultaneous_iteration}) in just $\tilde{O}(1/\sqrt{\epsilon})$ iterations. This not only gives the fastest known theoretical runtime for achieving \eqref{eq:spectral_error}, but also yields substantially better performance in practice (see Section \ref{sec:experiments}).

Even though the algorithm has been discussed and tested for potential improvement over Simultaneous Iteration \cite{RokhlinTygertPCA,HalkoRokhlin:2011,halko2012randomized}, theoretical bounds for Krylov subspace and Lanczos methods are much more limited. 
As highlighted in \cite{tygertImpl},
\begin{quote}
``Despite decades of research on Lanczos methods, the theory for [randomized power iteration] is more complete and provides strong guarantees of excellent accuracy, whether or not there exist any gaps between the singular values.''
\end{quote}
Our work addresses this issue, giving the first gap independent bound for a Krylov subspace method.

\vspace{-1em}
\noindent\begin{minipage}[t]{.492\linewidth}
\begin{algorithm}[H]
\caption{\algoname{Simultaneous Iteration}}
{\bf input}: $\bv{A} \in \mathbb{R}^{n \times d}$, error $\epsilon \in (0,1)$, rank $k \le n,d$ \\
{\bf output}: $\bv{Z} \in \mathbb{R}^{n \times k}$

\begin{algorithmic}[1]\label{simultaneous_iteration}
\STATE{$q := \Theta(\frac{\log d}{\epsilon})$, $\bv{\Pi} \sim \mathcal{N}(0,1)^{d \times k}$}
\STATE{$\bv{K} := \left(\bv{A}\bv{A}^T \right )^q \bv{A\Pi}$}
\STATE{Orthonormalize the columns of $\bv{K}$ to obtain $\bv{Q} \in \R^{n\times k}$.}
\STATE{Compute $\bv{M} := \bv{Q}^T \bv{AA}^T \bv{Q} \in \mathbb{R}^{k \times k}$.}
\STATE{Set $\bv{\bar U}_k$ to the top $k$ singular vectors of $\bv{M}$.}
\RETURN{$\bv{Z} = \bv{Q\bv{\bar U}_k}$}.
\end{algorithmic}
\end{algorithm}
\end{minipage} \hspace{.005\linewidth}
\begin{minipage}[t]{.493\linewidth}
\begin{algorithm}[H]
\caption{\algoname{Block Krylov Iteration}}
{\bf input}: $\bv{A} \in \mathbb{R}^{n \times d}$, error $\epsilon \in (0,1)$, rank $k \le n,d$ \\
{\bf output}: $\bv{Z} \in \mathbb{R}^{n \times k}$

\begin{algorithmic}[1]\label{blanczos}
\STATE{$q := \Theta(\frac{\log d}{\sqrt{\epsilon}})$, $\bv{\Pi} \sim \mathcal{N}(0,1)^{d \times k}$}
\STATE{$\bv{K} := \left [ \bv{A\Pi}, (\bv{A}\bv{A}^T  )\bv{A\Pi},..., (\bv{A}\bv{A}^T  )^q \bv{A\Pi}\right]$}
\STATE{Orthonormalize the columns of $\bv{K}$ to obtain $\bv{Q} \in \R^{n\times qk}$.}
\STATE{Compute $\bv{M} := \bv{Q}^T \bv{AA}^T \bv{Q} \in \mathbb{R}^{qk \times qk}$.}
\STATE{Set $\bv{\bar U}_k$ to the top $k$ singular vectors of $\bv{M}$.}
\RETURN{$\bv{Z} = \bv{Q\bv{\bar U}_k}$}.
\end{algorithmic}
\end{algorithm}
\end{minipage}

\subsection{Stronger Guarantees}
In addition to runtime improvements, we target a much stronger notion of approximate SVD that is needed for many applications, but for which no gap-independent analysis was known.

Specifically, as noted in \cite{guNewPaper}, while intuitively stronger than Frobenius norm error, $(1+\epsilon)$ spectral norm low-rank approximation error does not guarantee any accuracy in $\bv{Z}$ for many matrices\footnote{In fact, it does not even imply $(1+\epsilon)$ Frobenius norm error.}.
Consider $\bv{A}$ with its top $k+1$ squared singular values all equal to $10$ followed by a tail of smaller singular values (e.g. $1000k$ at $1$). $\|\bv{A} - \bv{A}_k\|^2_2 = 10$ but in fact $\|\bv{A} - \bv{Z}\bv{Z}^T\bv{A}\|^2_2 = 10$ for \emph{any} rank $k$ $\bv{Z}$, leaving the spectral norm bound useless. At the same time, $\|\bv{A} - \bv{A}_k\|_F^2$ is large, so Frobenius error is meaningless as well. For example, \emph{any} $\bv{Z}$ obtains $\|\bv{A} - \bv{Z}\bv{Z}^T\bv{A}\|_F^2 \leq (1.01)\|\bv{A} - \bv{A}_k\|_F^2$.


With this scenario in mind, it is unsurprising that low-rank approximation guarantees fail as an accuracy measure in practice.
We ran a standard sketch-and-solve approximate SVD algorithm (see Section \ref{sec:frob_norm_error}) on \textsc{SNAP/amazon0302}, an Amazon product co-purchasing dataset \cite{Davis:2011:UFS:2049662.2049663,Leskovec:2007}, and achieved very good low-rank approximation error in both norms for $k=30$:
\begin{align*}
\|\bv{A} - \bv{Z}\bv{Z}^T\bv{A}\|_F < 1.001\|\bv{A} -\bv{A}_k\|_F \text{\hspace{1em}and\hspace{1em}} \|\bv{A} - \bv{Z}\bv{Z}^T\bv{A}\|_2 < 1.038\|\bv{A} -\bv{A}_k\|_2.
\end{align*}
However, the approximate principal components given by $\bv{Z}$ are of significantly lower quality than $\bv{A}$'s true singular vectors (see Figure \ref{fig:amazon_failure}). We saw a similar phenomenon for the popular  \textsc{20 Newsgroups} dataset \cite{newsGroupsDownload} and several others. Additionally, the potential failure of low rank approximation measures was recently raised in \cite{guNewPaper}.
\begin{figure}[h]
\centering
	\includegraphics[width=.8\textwidth]{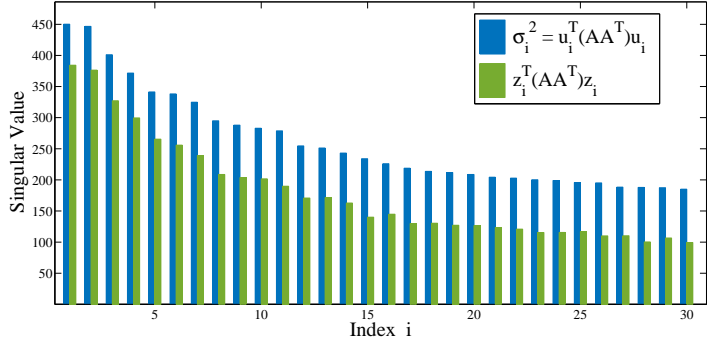}
\caption{Poor per vector error \eqref{eq:vector_error} for \textsc{SNAP/amazon0302} returned by a sketch-and-solve approximate SVD that gives very good low-rank approximation in both spectral and Frobenius norm.}
\label{fig:amazon_failure}
\end{figure}

We address this issue by introducing a per vector guarantee that requires each approximate singular vector $\bv{z}_1, \ldots, \bv{z}_k$ to capture nearly as much variance as the corresponding true singular vector:
\begin{align}
&\text{Per Vector Error:} &&\forall i\text{, }  \left|\bv{u}_i^T \bv{A} \bv{A}^T \bv{u}_i - \bv{z}_i^T \bv{A} \bv{A}^T \bv{z}_i \right| \leq \epsilon \sigma_{k+1}^2. \label{eq:vector_error}
\end{align}
 The error bound \eqref{eq:vector_error} is very strong in that it depends on $\epsilon\sigma^2_{k+1}$, meaning that it is better then relative error, i.e. $\left|\bv{u}_i^T \bv{A} \bv{A}^T \bv{u}_i - \bv{z}_i^T \bv{A} \bv{A}^T \bv{z}_i \right| \leq \epsilon \sigma_{i}^2$, for $\bv{A}$'s large singular vectors. While it is reminiscent of the bounds sought in classical numerical analysis \cite{saadOldAnalysis}, we stress that it does not require each $\bv{z}_i$ to converge to $\bv{u}_i$ in the presence of small singular value gaps. 
In fact, we show that both randomized Block Krylov Iteration and our slightly modified Simultaneous Iteration algorithm\footnote{For guarantee \eqref{eq:vector_error} it is important that Algorithm \ref{simultaneous_iteration} includes post-processing steps 4 and 5 rather than just returning a basis for $\bv{K}$, which is sufficient for the low-rank approximation guarantees.}  achieve \eqref{eq:vector_error} in gap-independent runtimes. 

\subsection{Main Result}
Our contributions are summarized in Theorem \ref{intro_theorem}, whose proof appears in parts as Theorems \ref{simult_runtime} and \ref{blanczos_runtime} in Section \ref{sec:runtimes} (runtime) and Theorems \ref{spectral_theorem}, \ref{frob_theorem}, and \ref{per_vector_theorem} in Section \ref{sec:error_bounds} (accuracy).

\begin{theorem}[Main Theorem]\label{intro_theorem} With high probability, Algorithms \ref{simultaneous_iteration} and \ref{blanczos} find approximate singular vectors $\bv{Z} = \left[\bv{z}_1,\ldots,\bv{z}_k\right]$ satisfying guarantees \eqref{eq:frob_error} and \eqref{eq:spectral_error} for low-rank approximation and \eqref{eq:vector_error} for PCA.
For error $\epsilon$, Algorithm \ref{simultaneous_iteration} requires $q = O(\log d/\epsilon)$ iterations while Algorithm \ref{blanczos} requires $q = O(\log d/\sqrt{\epsilon})$ iterations. Excluding lower order terms, both algorithms run in time $O(\nnz(\bv{A})kq)$.
\end{theorem}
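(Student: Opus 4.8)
The plan is to establish the runtime bound and then the three accuracy guarantees \eqref{eq:frob_error}, \eqref{eq:spectral_error}, \eqref{eq:vector_error}, treating both algorithms uniformly through one observation: after orthonormalization, the span of $\bv{Q}$ contains $p(\bv{A}\bv{A}^T)\bv{A}\bs{\Pi}$ where $p$ ranges over all polynomials of degree $\le q$ for Block Krylov Iteration, and $p(x)=x^q$ for Simultaneous Iteration. Hence the accuracy analysis reduces to a polynomial approximation question: find a low-degree $p$ that is large on $\bv{A}$'s top singular values and tiny on its tail.

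\textbf{Runtime.} Forming $\bv{K}$ is $q$ (resp.\ $2q$) alternating multiplications of the $n\times k$ or $d\times k$ block by $\bv{A}$ and $\bv{A}^T$, costing $O(\nnz(\bv{A})kq)$. Orthonormalizing the $n\times qk$ matrix $\bv{K}$, forming $\bv{M}=\bv{Q}^T\bv{A}\bv{A}^T\bv{Q}$, and taking an SVD of the $qk\times qk$ matrix $\bv{M}$ together cost $O(\nnz(\bv{A})kq + nk^2q^2 + k^3q^3)$; the last two are lower order, leaving $O(\nnz(\bv{A})kq)$.

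\textbf{Gap-independent polynomial and concentration.} The obstacle is that $\sigma_k$ and $\sigma_{k+1}$ may be arbitrarily close, so no polynomial can separate "the top $k$" from "the rest." The fix is a soft threshold at $(1+\gamma)\sigma_{k+1}$ with $\gamma=\Theta(\epsilon)$: singular directions with value below the threshold contribute at most $(1+\gamma)^2\sigma_{k+1}^2$ to any quadratic form, hence are negligible up to $O(\epsilon)\sigma_{k+1}^2$, while the $m\le k$ directions above it must be captured. One exhibits $p$ with $p(x)$ essentially $0$ for $x\le\sigma_{k+1}^2$ and $p(x)$ at least a $\poly(d)/\epsilon$ factor larger than its maximum on $[0,\sigma_{k+1}^2]$ once $x\ge(1+\gamma)\sigma_{k+1}^2$: the monomial $x^q$ with $q=\Theta(\log d/\gamma)=\Theta(\log d/\epsilon)$ works for Algorithm~\ref{simultaneous_iteration}, and a shifted, scaled degree-$q$ Chebyshev polynomial achieves the same separation with $q=\Theta(\log d/\sqrt\gamma)=\Theta(\log d/\sqrt\epsilon)$ for Algorithm~\ref{blanczos} --- the latter being exactly the classical Chebyshev/Lanczos acceleration, here made gap-free. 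Since $\bs{\Pi}$ is Gaussian it does not, with high probability, annihilate the above-threshold directions: $\bv{V}_m^T\bs{\Pi}$ is well conditioned and the remaining probability mass is $\poly(d)$-bounded, a standard random-matrix argument. Combining these, the span of $\bv{Q}$ contains a rank-$k$ subspace $\mathcal{W}$ with $\|\bv{A}-\bv{P}_{\mathcal{W}}\bv{A}\|_2^2\le(1+\epsilon)\sigma_{k+1}^2$. Because steps 4--5 perform Rayleigh--Ritz, i.e.\ $\bv{Z}=\bv{Q}\bv{\bar U}_k$ realizes the best rank-$k$ approximation inside $\bv{Q}$'s span, $\bv{Z}$ is at least as good as $\mathcal{W}$, giving \eqref{eq:spectral_error}; bounding the Frobenius energy remaining in the below-threshold and uncaptured directions similarly gives \eqref{eq:frob_error}.

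\textbf{Per-vector bound.} This is the delicate part. Writing $\bv{C}=\bv{A}\bv{A}^T$ and $\bv{z}_i=\bv{Q}\bv{\bar u}_i$, we have $\bv{z}_i^T\bv{C}\bv{z}_i=\lambda_i(\bv{M})$, so Cauchy interlacing gives $\bv{z}_i^T\bv{C}\bv{z}_i\le\sigma_i^2$ for free and \eqref{eq:vector_error} reduces to the lower bound $\lambda_i(\bv{M})\ge\sigma_i^2-\epsilon\sigma_{k+1}^2$. By Rayleigh--Ritz, $\lambda_i(\bv{M})$ equals the largest value of $\min_{0\ne x\in S}x^T\bv{C}x/\|x\|^2$ over $i$-dimensional $S\subseteq\mathrm{span}(\bv{Q})$, so it suffices to produce one good $S$. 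For $i\le m$ (above threshold) the polynomial separation makes $\mathrm{span}(\bv{Q})$ contain a subspace close enough to $\mathrm{span}(\bv{u}_1,\dots,\bv{u}_i)$ that its $\bv{C}$-Rayleigh quotients stay within $\epsilon\sigma_{k+1}^2$ of $\sigma_i^2$ --- here it is crucial that $p$'s amplification be strong enough that the residual scales like $\epsilon\sigma_{k+1}^2$ rather than merely $\epsilon\sigma_i^2$. For $i>m$, one appends to this subspace $i-m$ further directions of $\mathrm{span}(\bv{Q})$ drawn from the near-threshold block of $\bv{C}$; since each such direction has $\bv{C}$-value close to $\sigma_{k+1}^2$ while $\sigma_i^2\le(1+\gamma)\sigma_{k+1}^2$, the target $\sigma_i^2-\epsilon\sigma_{k+1}^2$ is met once $\gamma$ is a small enough multiple of $\epsilon$. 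I expect this last case --- showing Rayleigh--Ritz recovers enough near-threshold variance even when the singular values there are essentially flat --- together with making all constants in $q$ line up across the subspace and per-vector arguments, to be the main obstacle.
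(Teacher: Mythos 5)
Your high-level plan — treat $\mathrm{span}(\bv{Q})$ as containing $p(\bv{A}\bv{A}^T)\bv{A}\bv{\Pi}$ for a well-chosen polynomial $p$, use a scaled monomial for Simultaneous Iteration and a shifted Chebyshev polynomial for Block Krylov to get the $1/\sqrt{\epsilon}$ saving, invoke Gaussian concentration for the random start, and finish via Rayleigh--Ritz — correctly matches the paper's strategy. But there is a real gap in the step where you pass from ``$\mathrm{span}(\bv{Q})$ contains a good rank-$k$ subspace $\mathcal{W}$'' to the spectral-norm guarantee. You write that because steps 4--5 perform Rayleigh--Ritz, $\bv{Z}=\bv{Q}\bv{\bar U}_k$ ``realizes the best rank-$k$ approximation inside $\bv{Q}$'s span'' so ``$\bv{Z}$ is at least as good as $\mathcal{W}$.'' Rayleigh--Ritz returns the best rank-$k$ \emph{Frobenius}-norm approximation within $\mathrm{span}(\bv{Q})$ (Lemma~\ref{frobopt}); it does \emph{not} return the best rank-$k$ spectral-norm approximation, and there is no reason the Frobenius-optimal choice dominates $\mathcal{W}$ in spectral norm. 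This is exactly the obstacle the paper flags in Section~\ref{sec:beating_simultaneous_iteration}: computing the best spectral-norm rank-$k$ approximation inside a given subspace is not known to be tractable, which is why a naive comparison against a spectral-good $\mathcal{W}$ cannot close the argument. The paper's fix is different: it proves an \emph{additive} Frobenius bound at rank $m$ (Property~\ref{prop1} of Lemma~\ref{main_lemma}, with $m$ the number of singular values above $(1+\epsilon/2)\sigma_{k+1}$), then translates additive Frobenius error into additive spectral error via Lemma~\ref{additive_error} (Theorem~3.4 of \cite{guNewPaper}), and finally uses that projecting onto the full rank-$k$ $\bv{Z}$ can only help (Lemma~\ref{spectralopt}). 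Without some analogue of the additive Frobenius-to-spectral lemma, your spectral argument does not go through.

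Two further, smaller gaps that should be flagged: (i) Your Frobenius argument (``bounding the Frobenius energy remaining in the below-threshold and uncaptured directions'') is not developed; the paper needs Property~\ref{prop3} of Lemma~\ref{main_lemma}, which bounds the total additive error by $(w+1)\cdot 3\epsilon\sigma_{k+1}^2$ where $w$ counts singular values in $[\frac{1}{1+\epsilon/2}\sigma_k,\sigma_k)$, and then charges this against $\|\bv{A}-\bv{A}_k\|_F^2 \ge w\,(\sigma_k/(1+\epsilon/2))^2$. That counting step is the essential content and is missing. (ii) For the per-vector bound you correctly identify the $i>m$ case as the hard one and correctly use Cauchy interlacing for the easy upper bound, but you leave the lower bound open. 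The paper closes it by constructing a second polynomial $p_2$ thresholded at $\frac{1}{1+\epsilon/2}\sigma_k$, decomposing $\bv{A}$ into an inner block (values in $[\frac{1}{1+\epsilon/2}\sigma_k,\sigma_k)$) and an outer block, and showing (Property~\ref{prop2}) that every unit vector in $\mathrm{span}(p_2(\bv{A})\bv{\Pi})$ has Rayleigh quotient $\ge(1-\epsilon)\sigma_k^2$. You should look at that inner/outer decomposition — it is the missing piece you anticipated.
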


We note that, while Simultaneous Iteration was known to achieve \eqref{eq:spectral_error} \cite{woodruff2014sketching}, surprisingly we are first to prove that it gives \eqref{eq:frob_error}, a qualitatively weaker goal.

In Section \ref{decay} we use our results to give an alternative analysis of both algorithms that \emph{does} depend on singular value gaps and can offer significantly faster convergence when $\bv{A}$ has decaying singular values. It is possible to take further advantage of this result by running Algorithms \ref{simultaneous_iteration} and \ref{blanczos} with a $\bv{\Pi}$ that has $> k$ columns, a simple modification for accelerating either method.

Finally, Section \ref{sec:experiments} contains a number of experiments on large data problems. We justify the importance of gap independent bounds for predicting algorithm convergence and we show that Block Krylov Iteration in fact significantly outperforms the more popular Simultaneous Iteration.

\subsection{Comparison to Classical Bounds}
Decades of work has produced a variety of gap \emph{dependent} bounds for power iteration and Krylov subspace methods. We refer the reader to Saad's standard reference \cite{saad2011numerical}. Most relevant to our work are bounds for block Krylov methods with block size equal to $k$ \cite{Golub:1981:BLM:355945.355946}. Roughly speaking, with randomized initialization, these results offer guarantees equivalent to our strong equation \eqref{eq:vector_error} for the top $k$ singular directions after:
\begin{align*}
O\left (\frac{\log(d/\epsilon)}{\sqrt{\frac{\sigma_k}{\sigma_{k+1}} -1}}\right) \text{ iterations}.
\end{align*}
This bound is recovered by our Section \ref{decay} results and, when the target accuracy $\epsilon$ is smaller than the relative singular value gap $(\sigma_k/\sigma_{k+1} -1)$, it is tighter than our gap independent results. However, as discussed in Section \ref{sec:experiments}, for high dimensional data problems where $\epsilon$ is set far above machine precision, gap independent bounds more accurately predict required iteration count. 

Less comparable to our results are attempts to analyze algorithms with block size \emph{smaller} than $k$ \cite{saadOldAnalysis}. While ``small block'' or single vector algorithms offer runtime advantages, it is well understood that with $b$ duplicate singular values, it is impossible to recover the top $k$ singular directions with a block of size $< b$ \cite{golub1996matrix}. More generally, large singular value clusters slow convergence, so any small block algorithm must have runtime dependence on the gaps between \emph{each adjacent pair of top $k$ singular values} \cite{blockGaps}. We believe that obtaining simpler theoretical bounds for small block methods is an interesting direction for future work.

\section{Background and Intuition}
\label{sec:algos_background_and_intuition}
We will start by 1) providing background on algorithms for approximate singular value decomposition and 2) giving intuition for Simultaneous Power Iteration and Block Krylov methods and justifying why they can give strong gap-independent error guarantees. 

\subsection{Frobenius Norm Error}
\label{sec:frob_norm_error}
Progress on algorithms for Frobenius norm error low-rank approximation \eqref{eq:frob_error} has been considerable. 
Work in this direction dates back to the strong rank-revealing QR factorizations of Gu and Eisenstat \cite{GuStanley}. They give deterministic algorithms that run in approximately $O(ndk)$ time, vs. $O(nd^2)$
for a full SVD, but only guarantee polynomial factor Frobenius norm error.

Recently, randomization has been applied to achieve even faster algorithms with $(1+\epsilon)$ error. The paradigm is to compute a \emph{linear sketch} of $\bv{A}$ into very few dimensions using either a column sampling matrix or Johnson-Lindenstrauss random projection matrix $\bv{\Pi}$. Typically $\bv{A}\bv{\Pi}$ has at most $\poly(k/\epsilon)$ columns and can be used to quickly find $\bv{Z}$. Specifically, $\bv{Z}$ is typically taken to be the top $k$ left singular vectors of $\bv{A}\bv{\Pi}$ or of $\bv{A}$ projected onto $\bv{A}\bv{\Pi}$ \cite{cohen2014dimensionality,sarlos2006improved}.
\begin{align*}
\centering
\bv{A}_{n\times d}\times \bv{\Pi}_{d\times \poly(k/\epsilon)} = (\bv{A}\bv{\Pi})_{n\times \poly(k/\epsilon)}
\end{align*}
This approach was developed and refined in several pioneering results, including \cite{frieze2004fast,nphardkmeans,Drineas:2006:FMC2, deshpandeVempala} for column sampling, \cite{Papadimitriou:1998,rokhlinOriginal} for random projection, and definitive work by Sarl\'{o}s \cite{sarlos2006improved}. Recent work on sparse Johnson-Lindenstrauss type matrices \cite{clarkson2013low,meng2013,DBLP:conf/focs/NelsonN13} has significantly reduced the cost of multiplying $\bv{A}\bv{\Pi}$, bringing the cost of Frobenius error low-rank approximation down to $O(\nnz(\bv{A}) + n\poly(k/\epsilon))$ time, where the first term 
is considered to dominate since typically $k\ll n,d$. 

The sketch-and-solve method is very efficient -- the computation of $\bv{A\Pi}$ is easily parallelized and, regardless, pass-efficient in a single processor setting. Furthermore, once a small compression of $\bv{A}$ is obtained, it can be manipulated in fast memory to find $\bv{Z}$. This is not typically true of $\bv{A}$ itself, making it difficult to directly process the original matrix at all. 

\subsection{Spectral Norm Error via Simultaneous Iteration}\label{spectral_intuition}

Unfortunately, as discussed, Frobenius norm error is often insufficient when $\bv{A}$ has a heavy singular value tail. Moreover, it seems an inherent limitation of sketch-and-solve methods. The noise from $\bv{A}$'s lower $r- k$ singular values corrupts $\bv{A\Pi}$, making it impossible to extract a good partial SVD if the sum of these singular values (equal to $\|\bv{A} - \bv{A}_k\|_F^2$) is too large. In other words, any error inherently depends on the size of this tail.

In order to achieve spectral norm error \eqref{eq:spectral_error}, Simultaneous Iteration must reduce this noise down to the scale of $\sigma_{k+1} = \|\bv{A} - \bv{A}_k\|_2$.
It does this by working with
the powered matrix $\bv{A}^q$ \cite{origSimultaneous,almostOrigSimultaneous}.\footnote{For nonsymmetric matrices we work with ($\bv{A}\bv{A}^T)^q\bv{A}$, but present the symmetric case here for simplicity.}
By the spectral theorem, $\bv{A}^q$ has exactly the same \emph{singular vectors} as $\bv{A}$, but its \emph{singular values} are equal to the singular values of $\bv{A}$ raised to the $q^\text{th}$ power. Powering spreads the values apart and accordingly, 
$\bv{A}^q$'s lower singular values are relatively much smaller than its top singular values (see Figure \ref{fig:denoising} for an example). 

Specifically, $q = O(\frac{\log d}{\epsilon})$ is sufficient to increase any singular value $\ge (1+\epsilon)\sigma_{k+1}$ to be significantly (i.e. $\poly(d)$ times) larger than any value $ \le \sigma_{k+1}$. 
This effectively denoises our problem -- if we use a sketching method to find a good $\bv{Z}$ for approximating $\bv{A}^q$ up to Frobenius norm error, $\bv{Z}$ will have to align very well with every singular vector with value $\ge (1+\epsilon)\sigma_{k+1}$. It thus provides an accurate basis for approximating $\bv{A}$ up to small spectral norm error.
\begin{figure}[h]
\centering
\begin{subfigure}[b]{0.45\textwidth}
	\centering
                \includegraphics[width=.9\textwidth]{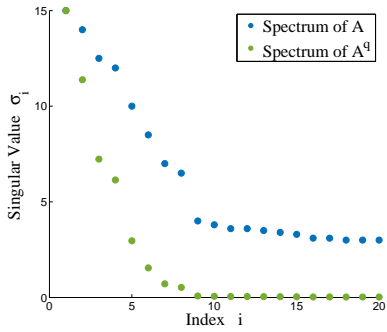}
                \caption{$\bv{A}$'s singular values compared to those of $\bv{A}^q$, rescaled to match on $\sigma_1$. Notice the significantly reduced tail after $\sigma_8$.\\}
\label{fig:denoising}
\end{subfigure}\hfill
\begin{subfigure}[b]{0.45\textwidth}
	\centering
                \includegraphics[width=.9\textwidth]{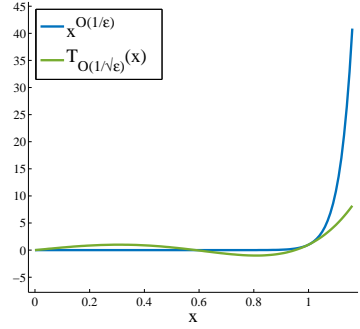}
                \caption{An $O(1/\sqrt{\epsilon})$-degree Chebyshev polynomial, $T_{O(1/\sqrt{\epsilon})}(x)$, pushes low values nearly as close to zero as $x^{O(1/\epsilon)}$ while spreading higher values less significantly.}
\label{fig:cheb}
\end{subfigure}
\caption{Replacing $\bv{A}$ with a matrix polynomial facilitates higher accuracy approximation.}
\label{fig:sampled_regression}
\end{figure}

Computing $\bv{A}^q$ directly is costly, so
$\bv{A}^q\bv{\Pi}$ is computed iteratively. We start with a random $\bv{\Pi}$ and repeatedly multiply by $\bv{A}$ on the left.
Since even a rough Frobenius norm approximation for $\bv{A}^q$ suffices, $\bv{\Pi}$ is often chosen to have just $k$ columns. Each iteration thus takes $O(\nnz(\bv{A})k)$ time. 
After $\bv{A}^q\bv{\Pi}$ is computed, $\bv{Z}$ can simply be set to a basis for its column span.

To the best of our knowledge, this approach to analyzing Simultaneous Iteration without dependence on singular value gaps began with \cite{RokhlinTygertPCA}. The technique was popularized in \cite{Halko:2011} and its analysis improved in \cite{candesSubspace} and \cite{boutsidis2014near}. \cite{woodruff2014sketching} gives the first bound that directly achieves \eqref{eq:spectral_error} with $O(\log d/\epsilon)$ power iterations. 
All of these papers rely on an improved understanding of the benefits of starting with a randomized $\bv{\Pi}$, which has developed from work on the sketch-and-solve paradigm. 

\subsection{Beating Simultaneous Iteration with Krylov Methods}
\label{sec:beating_simultaneous_iteration}
As mentioned, numerous papers hint at the possibility of beating Simultaneous Iteration with block Krylov methods \cite{4045283, golub77,Golub:1981:BLM:355945.355946}. In particular, \cite{RokhlinTygertPCA}, \cite{HalkoRokhlin:2011} and \cite{halko2012randomized} suggest and experimentally confirm the potential of a randomized variant of the Block Lanczos algorithm, which we refer to as Block Krylov Iteration (Algorithm \ref{blanczos}). However, none of these papers give theoretical bounds on the algorithm's performance.

The intuition behind Block Krylov Iteration matches that of many accelerated iterative methods. Simply put, there are better polynomials than $\bv{A}^q$ for denoising tail singular values. In particular, we can use a \emph{lower degree} polynomial, allowing us to compute fewer powers of $\bv{A}$ and thus leading to an algorithm with fewer iterations. For example, an appropriately shifted $q = O(\frac{\log d}{\sqrt{\epsilon}})$ degree Chebyshev polynomial can push the tail of $\bv{A}$ nearly as close to zero as $\bv{A}^{O(\log d/\epsilon)}$
, even if the long run growth of the polynomial is much lower 
(see Figure \ref{fig:cheb}). 

Block Krylov Iteration takes advantage of such polynomials by working with the Krylov subspace,
\begin{align*}
\bv{K} = \begin{bmatrix}\bv{\Pi} & \bv{A}\bv{\Pi} & \bv{A}^2\bv{\Pi} & \bv{A}^3\bv{\Pi} & \ldots & \bv{A}^{q}\bv{\Pi}\end{bmatrix},
\end{align*}
from which we can construct $p_{q}(\bv{A})\bv{\Pi}$ for any polynomial $p_{q}(\cdot)$ of degree $q$.\footnote{Algorithm \ref{blanczos} in fact only constructs odd powered terms in $\bv{K}$, which is sufficient for our choice of $p_{q}(x)$.}
Since an effective polynomial for denoising $\bv{A}$ must be scaled and shifted based on the value of $\sigma_{k+1}$, we cannot easily compute it directly. Instead, we argue that the very best $k$ rank approximation to $\bv{A}$ lying in the span of $\bv{K}$ at least matches the approximation achieved by projecting onto the span of $p_{q}(\bv{A})\bv{\Pi}$. Finding this best approximation will therefore give a nearly optimal low-rank approximation to $\bv{A}$.

Unfortunately, there's a catch. Perhaps surprisingly, it is not clear how to efficiently compute the best spectral norm error low-rank approximation to $\bv{A}$ lying in a specific subspace (e.g. $\bv{K}$'s span)  \cite{boutsidis2014near,sou2010minimum}. This challenge precludes an analysis of Krylov methods parallel to the recent work on Simultaneous Iteration.
Nevertheless, we show that computing the best Frobenius error low-rank approximation in the span of $\bv{K}$, exactly the post-processing step taken by classic Block Lanczos and our method, will give a good enough spectral norm approximation for achieving $(1+\epsilon)$ error.

\subsection{Stronger Per Vector Error Guarantees}\label{per_vector_intuition}

Achieving the per vector guarantee of \eqref{eq:vector_error} requires a more nuanced understanding of how Simultaneous Iteration and Block Krylov Iteration denoise the spectrum of $\bv{A}$. The analysis for spectral norm low-rank approximation relies on the fact that $\bv{A}^{q}$ (or $p_{q}(\bv{A})$ for Block Krylov Iteration)
blows up any singular value $\ge (1+\epsilon) \sigma_{k+1}$ to much larger than any singular value $\le \sigma_{k+1}$. This ensures that the $\bv{Z}$ outputted by both algorithms aligns very well with the singular vectors corresponding to these large singular values. 

If $\sigma_k \ge (1+\epsilon)\sigma_{k+1}$, then $\bv{Z}$ aligns well with all top $k$ singular vectors of $\bv{A}$ and we get good Frobenius norm error and the per vector guarantee \eqref{eq:vector_error}.
Unfortunately, when there is a small gap between $\sigma_k$ and $\sigma_{k+1}$, $\bv{Z}$ could miss intermediate singular vectors whose values lie between $\sigma_{k+1}$ and $(1+\epsilon)\sigma_{k+1}$. This is the case where gap dependent guarantees of classical analysis break down.

However, $\bv{A}^{q}$ or, for Block Krylov Iteration, some $q$-degree polynomial in our Krylov subspace, also significantly separates singular values $> \sigma_{k+1}$ from those $< (1-\epsilon) \sigma_{k+1}$. 
Thus, each column of $\bv{Z}$ at least aligns with $\bv{A}$ nearly as well as $\bv{u}_{k+1}$. So, even if we miss singular values between $\sigma_{k+1}$ and $(1+\epsilon)\sigma_{k+1}$, they will be replaced with approximate singular values $>(1-\epsilon)\sigma_{k+1}$, enough for \eqref{eq:vector_error}.

For Frobenius norm low-rank approximation, we prove that the degree to which $\bv{Z}$ falls outside of the span of $\bv{A}$'s top $k$ singular vectors depends on the number of singular values between $\sigma_{k+1}$ and $(1-\epsilon)\sigma_{k+1}$. These are the values that could be `swapped in' for the true top $k$ singular values. Since their weight counts towards $\bv{A}$'s tail, our total loss compared to optimal is at worst $\epsilon \|\bv{A}-\bv{A}_k\|^2_F$.

\section{Preliminaries}
Before proceeding to the full technical analysis, we overview required results from linear algebra, polynomial approximation, and randomized low-rank approximation.

\subsection{Singular Value Decomposition and Low-Rank Approximation}
Using the SVD, we compute
the pseudoinverse of $\bv{A}\in \R^{n\times d}$ as $\bv{A}^+ = \bv{V} \bv{\Sigma}^{-1} \bv{U}^T$. Additionally, for any polynomial $p(x)$, we define $p(\bv{A}) =  \bv{U}p(\bv{\Sigma})\bv{V^T}$. Note that, since singular values are always take to be non-negative, $p(\bv{A})$'s singular values are given by $|p(\bv{\Sigma})|$.

Let $\bv{\Sigma}_k$ be $\bv{\Sigma}$ with all but its largest $k$ singular values zeroed out. Let $\bv{U}_k$ and $\bv{V}_k$ be $\bv{U}$ and $\bv{V}$ with all but their first $k$ columns zeroed out. For any $k$, $\bv{A}_k = \bv{U}\bv{\Sigma}_k\bv{V^T} =  \bv{U}_k\bv{\Sigma}_k\bv{V}_k^T$ is the closest rank $k$ approximation to $\bv{A}$ for any unitarily invariant norm, including the Frobenius norm and spectral norm \cite{1960QJMat1150M}. The squared Frobenius norm is given by $\norm{ \bv{A}}^2_F = \sum_{i,j} \bv{A}_{i,j}^2 = \tr(\bv{A A^T}) = \sum_{i} \sigma_i^2$. The spectral norm is given by $\norm{\bv{A}}_{2} =\sigma_{1}$.
\begin{align*}
\norm{\bv{A}-\bv{A}_k}_F = \min_{\bv{B}\mid \rank(\bv{B}) = k} \norm{\bv{A}-\bv{B}}_F \text{ \,\,\,\,\,and\,\,\,\,\, }
\norm{\bv{A}-\bv{A}_k}_2 = \min_{\bv{B}\mid \rank(\bv{B}) = k} \norm{\bv{A}-\bv{B}}_2.
\end{align*}
We often work with the remainder matrix $\bv{A} - \bv{A}_k$ and label it $\bv{A}_{r \setminus  k}$. Its singular value decomposition is given by $\bv{A}_{r \setminus  k} = \bv{U}_{r \setminus  k}  \bv{\Sigma}_{r \setminus  k}  \bv{V}_{r \setminus  k}^T$ where $\bv{U}_{r \setminus  k}$, $\bv{\Sigma}_{r \setminus  k}$, and $\bv{V}_{r \setminus  k}^T$ have their first $k$ columns zeroed.


While the SVD gives a globally optimal rank $k$ approximation for $\bv{A}$, both Simultaneous Iteration and Block Krylov Iteration return the best $k$ rank approximation falling within some fixed subspace spanned by a basis $\bv{Q}$ (with rank $\geq k$). For the Frobenius norm, this simply requires projecting $\bv{A}$ to $\bv{Q}$ and taking the best rank $k$ approximation of the resulting matrix using an SVD.

\begin{lemma}[Lemma 4.1 of \cite{woodruff2014sketching}]\label{frobopt}
Given $\bv{A} \in \mathbb{R}^{n \times d}$ and  $\bv{Q} \in \mathbb{R}^{m \times n}$ with orthonormal columns,
\begin{align*}
\norm{\bv{A}- (\bv{QQ}^T \bv{A})_k}_F = \norm{\bv{A}-\bv{Q}\left(\bv{Q}^T \bv{A}\right)_k}_F &= \min_{\bv{C}\mid \rank(\bv{C}) = k} \norm{\bv{A}-\bv{Q}\bv{C}}_F.
\end{align*}
This low-rank approximation can be obtained using an SVD (equivalently, eigendecomposition) of the $m \times m$ matrix $\bv{M} = \bv{Q}^T (\bv{AA}^T) \bv{Q}$. Specifically, letting $\bv{M} = \bv{\bar U} \bv{\bar \Sigma}^2 \bv{\bar U}^T$, then:
\begin{align*}
\left (\bv{Q} \bv{\bar U}_k\right)\left(\bv{Q} \bv{\bar U}_k\right)^T \bv{A} = \bv{Q}\left(\bv{Q}^T \bv{A}\right)_k.
\end{align*}
\end{lemma}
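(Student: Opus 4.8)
The plan is to establish the two stated equalities by a combination of the Pythagorean theorem for the Frobenius norm and a reduction of the constrained minimization to an SVD computation in the projected space. First I would observe that any rank-$k$ matrix of the form $\bv{QC}$ lies in the column space of $\bv{Q}$, and conversely any rank-$k$ matrix with columns in $\operatorname{col}(\bv{Q})$ can be written as $\bv{QC}$ with $\rank(\bv{C}) = k$. So the minimization $\min_{\rank(\bv{C}) = k}\norm{\bv{A}-\bv{QC}}_F$ is exactly the problem of finding the best rank-$k$ approximation of $\bv{A}$ whose column span is contained in $\operatorname{col}(\bv{Q})$.

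Next I would split the error using orthogonality: since $\bv{QC}$ has columns in $\operatorname{col}(\bv{Q})$, the matrix $\bv{A} - \bv{QC}$ decomposes as $(\bv{A} - \bv{QQ}^T\bv{A}) + (\bv{QQ}^T\bv{A} - \bv{QC})$, and these two terms have orthogonal column spaces (the first lies in $\operatorname{col}(\bv{Q})^{\perp}$, the second in $\operatorname{col}(\bv{Q})$), so $\norm{\bv{A}-\bv{QC}}_F^2 = \norm{\bv{A} - \bv{QQ}^T\bv{A}}_F^2 + \norm{\bv{QQ}^T\bv{A} - \bv{QC}}_F^2$. The first term is a constant independent of $\bv{C}$, so minimizing over $\bv{C}$ reduces to minimizing $\norm{\bv{QQ}^T\bv{A} - \bv{QC}}_F = \norm{\bv{Q}(\bv{Q}^T\bv{A} - \bv{C})}_F = \norm{\bv{Q}^T\bv{A} - \bv{C}}_F$, where the last equality uses that $\bv{Q}$ has orthonormal columns. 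By the Eckart--Young theorem, the minimizer over rank-$k$ $\bv{C}$ is $\bv{C} = (\bv{Q}^T\bv{A})_k$. Substituting back shows $\norm{\bv{A} - \bv{Q}(\bv{Q}^T\bv{A})_k}_F = \min_{\rank(\bv{C})=k}\norm{\bv{A}-\bv{QC}}_F$, and reversing the Pythagorean split (now with $\bv{QC}$ replaced by $\bv{Q}(\bv{Q}^T\bv{A})_k$, whose columns still lie in $\operatorname{col}(\bv{Q})$) gives $\norm{\bv{A} - \bv{Q}(\bv{Q}^T\bv{A})_k}_F = \norm{\bv{A} - (\bv{QQ}^T\bv{A})_k}_F$, since projecting $\bv{QQ}^T\bv{A}$ onto $\operatorname{col}(\bv{Q})$ is the identity and the best rank-$k$ approximation commutes with the column-space restriction.

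For the final claim, I would verify that $\bv{M} = \bv{Q}^T(\bv{AA}^T)\bv{Q} = (\bv{Q}^T\bv{A})(\bv{Q}^T\bv{A})^T$, so if $\bv{Q}^T\bv{A}$ has SVD $\bv{\bar U}\bv{\bar\Sigma}\bv{\bar V}^T$ then $\bv{M} = \bv{\bar U}\bv{\bar\Sigma}^2\bv{\bar U}^T$, matching the stated eigendecomposition, and the top $k$ left singular vectors $\bv{\bar U}_k$ of $\bv{Q}^T\bv{A}$ are exactly the top $k$ eigenvectors of $\bv{M}$. Then $(\bv{Q}^T\bv{A})_k = \bv{\bar U}_k\bv{\bar U}_k^T(\bv{Q}^T\bv{A})$ by the standard SVD truncation identity, so $\bv{Q}(\bv{Q}^T\bv{A})_k = \bv{Q}\bv{\bar U}_k\bv{\bar U}_k^T\bv{Q}^T\bv{A} = (\bv{Q}\bv{\bar U}_k)(\bv{Q}\bv{\bar U}_k)^T\bv{A}$, which is the asserted formula. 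I expect the only delicate point to be a careful justification that the Pythagorean decomposition genuinely isolates a $\bv{C}$-independent term and that Eckart--Young applies to the $m \times d$ matrix $\bv{Q}^T\bv{A}$ rather than to $\bv{A}$ itself; everything else is bookkeeping with orthonormality of $\bv{Q}$ and basic SVD facts.
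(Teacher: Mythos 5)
Your proof is correct, and in fact it proves more than the paper's own text does. The paper cites the first display (the optimality of $\bv{Q}(\bv{Q}^T\bv{A})_k$ among rank-$k$ matrices spanned by $\bv{Q}$) directly from Lemma 4.1 of \cite{woodruff2014sketching} without reproving it, and only supplies a proof of the second display (the construction via the eigendecomposition of $\bv{M}$). For that second part your argument is essentially identical to the paper's: both observe $\bv{M} = (\bv{Q}^T\bv{A})(\bv{Q}^T\bv{A})^T = \bv{\bar U}\bv{\bar\Sigma}^2\bv{\bar U}^T$, so the top eigenvectors of $\bv{M}$ are the top left singular vectors of $\bv{Q}^T\bv{A}$, and then both expand $(\bv{Q}^T\bv{A})_k = \bv{\bar U}_k\bv{\bar U}_k^T\bv{Q}^T\bv{A}$ and regroup to get $(\bv{Q}\bv{\bar U}_k)(\bv{Q}\bv{\bar U}_k)^T\bv{A}$, noting that $\bv{Q}\bv{\bar U}_k$ has orthonormal columns. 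Your self-contained argument for the first display — Pythagorean split across $\operatorname{col}(\bv{Q})$ and $\operatorname{col}(\bv{Q})^\perp$, then Eckart--Young applied to $\bv{Q}^T\bv{A}$ — is the standard clean route and is valid. The one place where you could tighten the wording is the claim that ``the best rank-$k$ approximation commutes with the column-space restriction'': the precise justification is that if $\bv{Q}^T\bv{A} = \bv{\bar U}\bv{\bar\Sigma}\bv{\bar V}^T$ is an SVD, then $(\bv{Q}\bv{\bar U})\bv{\bar\Sigma}\bv{\bar V}^T$ is an SVD of $\bv{QQ}^T\bv{A}$ (since $\bv{Q}\bv{\bar U}$ has orthonormal columns), whence $(\bv{QQ}^T\bv{A})_k = \bv{Q}\bv{\bar U}_k\bv{\bar\Sigma}_k\bv{\bar V}_k^T = \bv{Q}(\bv{Q}^T\bv{A})_k$ as an identity of matrices, not merely of norms. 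With that spelled out, your proof is complete.
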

If the SVD of $\bv{Q}^T \bv{A}$ is given by  $\bv{Q}^T \bv{A} =  \bv{\bar U} \bv{\bar \Sigma} \bv{\bar V}^T$ then $\bv{M} = \bv{Q}^T (\bv{AA}^T) \bv{Q} =  \bv{\bar U} \bv{\bar \Sigma}^2 \bv{\bar U}^T$. So $\bv{Q}\left(\bv{Q}^T \bv{A}\right)_k = \bv{Q} \bv{\bar U}_k \bv{\bar \Sigma}_k \bv{\bar V}_k^T = \bv{Q} \left(\bv{\bar U}_k \bv{\bar U}_k^T \right) \bv{\bar U} \bv{\bar \Sigma} \bv{\bar V}^T = \bv{Q} \bv{\bar U}_k \bv{\bar U}_k^T  \bv{Q}^T \bv{A}$, giving the lower matrix equality. Note that $\bv{Q}\bv{\bar U}_k$ has orthonormal columns since $\bv{\bar U}_k^T \bv{Q}^T \bv{Q}\bv{\bar U}_k = \bv{\bar U}_k^T \bv{I} \bv{\bar U}_k = \bv{I}_k$.

In general, this rank $k$ approximation \emph{does not} give the best spectral norm approximation to $\bv{A}$ falling within $\bv{Q}$ \cite{boutsidis2014near}. A closed form solution can be obtained using the results of \cite{sou2010minimum}, which are related to Parrott's theorem, but we do not know how to compute this solution without essentially performing an SVD of $\bv{A}$. It is at least simple to show that the optimal spectral norm approximation for $\bv{A}$ spanned by a rank $k$ basis is obtained by projecting $\bv{A}$ to the basis:

\begin{lemma}[Lemma 4.14 of \cite{woodruff2014sketching}]\label{spectralopt}
For $\bv{A} \in \mathbb{R}^{n \times d}$ and $\bv{Q} \in \mathbb{R}^{n \times k}$ with orthonormal columns, 
\begin{align*}
\norm{\bv{A}-\bv{QQ}^T \bv{A}}_2 = \min_\bv{C} \norm{\bv{A}-\bv{Q}\bv{C}}_2.
\end{align*}
\end{lemma}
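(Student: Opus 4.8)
\textbf{Proof proposal for Lemma \ref{spectralopt}.}

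The plan is to show that among all matrices of the form $\bv{Q}\bv{C}$ (with $\bv{C} \in \R^{k \times d}$ arbitrary), the choice $\bv{C} = \bv{Q}^T\bv{A}$ minimizes $\norm{\bv{A} - \bv{Q}\bv{C}}_2$. First I would use the fact that $\bv{Q}$ has orthonormal columns to split $\bv{A}$ into its component inside the column span of $\bv{Q}$ and the orthogonal complement: let $\bv{P} = \bv{Q}\bv{Q}^T$ be the orthogonal projector onto $\text{span}(\bv{Q})$, so $\bv{A} = \bv{P}\bv{A} + (\bv{I}-\bv{P})\bv{A}$, and the two summands have orthogonal column spaces. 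For any $\bv{C}$, the matrix $\bv{Q}\bv{C}$ lies entirely in $\text{span}(\bv{Q})$, hence $(\bv{I}-\bv{P})(\bv{A} - \bv{Q}\bv{C}) = (\bv{I}-\bv{P})\bv{A}$ and $\bv{P}(\bv{A}-\bv{Q}\bv{C}) = \bv{P}\bv{A} - \bv{Q}\bv{C} = \bv{Q}(\bv{Q}^T\bv{A} - \bv{C})$.

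The key step is then a ``Pythagorean'' lower bound for the spectral norm: since $\bv{A} - \bv{Q}\bv{C}$ decomposes as a sum of two matrices with mutually orthogonal column spaces (equivalently, $\bigl(\bv{P}(\bv{A}-\bv{Q}\bv{C})\bigr)^T\bigl((\bv{I}-\bv{P})(\bv{A}-\bv{Q}\bv{C})\bigr) = 0$), we have $\norm{\bv{A}-\bv{Q}\bv{C}}_2^2 \ge \norm{(\bv{I}-\bv{P})(\bv{A}-\bv{Q}\bv{C})}_2^2 = \norm{(\bv{I}-\bv{P})\bv{A}}_2^2 = \norm{\bv{A} - \bv{Q}\bv{Q}^T\bv{A}}_2^2$. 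Concretely this follows because for any unit vector $\bv{x} \in \R^d$, writing $\bv{y} = (\bv{A}-\bv{Q}\bv{C})\bv{x}$, the vectors $\bv{P}\bv{y}$ and $(\bv{I}-\bv{P})\bv{y}$ are orthogonal, so $\norm{\bv{y}}_2^2 = \norm{\bv{P}\bv{y}}_2^2 + \norm{(\bv{I}-\bv{P})\bv{y}}_2^2 \ge \norm{(\bv{I}-\bv{P})\bv{y}}_2^2 = \norm{(\bv{I}-\bv{P})\bv{A}\bv{x}}_2^2$; taking the supremum over $\bv{x}$ gives the inequality. Finally, the choice $\bv{C} = \bv{Q}^T\bv{A}$ makes the in-span component $\bv{Q}(\bv{Q}^T\bv{A}-\bv{C})$ vanish, so $\bv{A} - \bv{Q}\bv{Q}^T\bv{A} = (\bv{I}-\bv{P})\bv{A}$ achieves this lower bound, proving $\norm{\bv{A}-\bv{Q}\bv{Q}^T\bv{A}}_2 = \min_{\bv{C}} \norm{\bv{A}-\bv{Q}\bv{C}}_2$.

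I do not anticipate a serious obstacle here — the only thing to be careful about is that the spectral-norm Pythagorean bound is a one-sided inequality (unlike the Frobenius case, where equality holds and one could even read off the minimizer more directly); but since we only need a lower bound matched by the projection, one-sidedness is exactly enough. An alternative phrasing would invoke the fact that $\bv{Q}^T$ is a contraction together with the orthogonal-complement structure, but the direct unit-vector argument above is the cleanest and avoids any appeal to Parrott's theorem, which is only needed for the harder problem of the rank-constrained spectral optimum.
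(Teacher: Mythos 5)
Your proof is correct. Note that the paper itself cites this as Lemma 4.14 of \cite{woodruff2014sketching} without reproducing a proof, so there is no in-paper argument to compare against; your proof is nonetheless the standard one and closely matches what one finds in the cited reference. The key observation—that for any unit vector $\bv{x}$ the decomposition $\norm{(\bv{A}-\bv{Q}\bv{C})\bv{x}}_2^2 = \norm{\bv{P}(\bv{A}-\bv{Q}\bv{C})\bv{x}}_2^2 + \norm{(\bv{I}-\bv{P})(\bv{A}-\bv{Q}\bv{C})\bv{x}}_2^2$ yields a one-sided lower bound whose supremum over $\bv{x}$ is attained by $\bv{C} = \bv{Q}^T\bv{A}$—is exactly right, and your remark that only the one-sided inequality (not the full Pythagorean identity at the matrix-norm level, which fails for spectral norm) is needed is a good sanity check on where the argument would break for rank-constrained $\bv{C}$, which is precisely why Lemma \ref{frobopt} requires a separate statement and why the paper notes that the best rank-$k$ spectral-norm approximation inside a larger span is not computable the same way.
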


\subsection{Other Linear Algebra Tools}

Throughout this paper we use $span(\bv{M})$ to denote the column span of the matrix $\bv{M}$. We say that a matrix $\bv{Q}$ is an orthonormal basis for the column span of $\bv{M}$ if $\bv{Q}$ has orthonormal columns and $\bv{Q} \bv{Q}^T \bv{M} = \bv{M} $. That is, projecting the columns of $\bv{M}$ to $\bv{Q}$ fully recovers those columns. $\bv{QQ}^T$ is the orthogonal projection matrix onto the span of $\bv{Q}$. $(\bv{QQ}^T)(\bv{QQ}^T) = \bv{QIQ}^T = \bv{QQ}^T$.

If $\bv{M}$ and $\bv{N}$ have the same dimension and $\bv{MN^T} = \bv{0}$ then $\norm{\bv{M} + \bv{N}}^2_F = \norm{\bv{M}}^2_F + \norm{\bv{N}}^2_F$. This matrix Pythagorean theorem follows from writing $\norm{\bv{M+N}}^2_F = \tr(\bv{(M+N)(M+N)^T})$. As an example, for any orthogonal projection $\bv{QQ}^T \bv{A}$, $\bv{A}^T (\bv{I}-\bv{QQ}^T)\bv{QQ}^T \bv{A} = \bv{0}$, so
$
\norm{\bv{A}-\bv{QQ}^T \bv{A}}_F^2 =  \norm{\bv{A}}_F^2 -  \norm{\bv{QQ}^T\bv{A}}_F^2.
$
This implies that, since $\bv{A}_k = \bv{U}_k\bv{U}_k^T\bv{A}$ minimizes $\norm{\bv{A} -\bv{A}_k}_F^2$ over all rank $k$ matrices, $\bv{QQ}^T = \bv{U}_k\bv{U}_k$ maximizes $\norm{\bv{QQ}^T\bv{A}}_F^2$ over all rank $k$ orthogonal projections.

\subsection{Randomized Low-Rank Approximation}

Our proofs build on well known sketch-based algorithms for low-rank approximation with Frobenius norm error. A short proof of the following Lemma is in Appendix \ref{app:cheby}:

\begin{lemma}[Frobenius Norm Low-Rank Approximation]\label{frobnorm} Take any $\bv{A} \in \mathbb{R}^{n \times d}$ and $\bv{\Pi} \in \mathbb{R}^{d \times k}$ where the entries of $\bv{\Pi}$ are independent Gaussians drawn from $\mathcal{N}(0,1)$. If we let $\bv{Z}$ be an orthonormal basis for $span\left(\bv{A\Pi}\right)$, then with probability at least $99/100$, for some fixed constant $c$,
\begin{align*}
\norm{\bv{A} - \bv{ZZ}^T \bv{A}}_F^2 \le c \cdot dk \norm{\bv{A} - \bv{A}_k}_F^2.
\end{align*}
\end{lemma}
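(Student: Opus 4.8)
The plan is to prove a weak, crude bound; we do not need the optimal $\poly(k,\epsilon^{-1})$ dependence here, only that projecting onto $\operatorname{span}(\bv{A}\bv{\Pi})$ loses at most a $\poly(d)$ factor in squared Frobenius error. First I would decompose $\bv{A}$ along its SVD into its head $\bv{A}_k = \bv{U}_k\bv{\Sigma}_k\bv{V}_k^T$ and tail $\bv{A}_{r\setminus k}$, and correspondingly split $\bv{A}\bv{\Pi} = \bv{A}_k\bv{\Pi} + \bv{A}_{r\setminus k}\bv{\Pi}$. The key structural fact is that $\bv{A}\bv{\Pi}$ already contains a good rank-$k$ approximant in its span: if $\bv{V}_k^T\bv{\Pi}$ is invertible (a $k\times k$ Gaussian, invertible with probability $1$), then $\bv{A}_k\bv{\Pi}(\bv{V}_k^T\bv{\Pi})^{-1} = \bv{U}_k\bv{\Sigma}_k = \bv{A}_k\bv{V}_k$ lies in $\operatorname{span}(\bv{A}\bv{\Pi})$ up to the perturbation coming from the tail term $\bv{A}_{r\setminus k}\bv{\Pi}(\bv{V}_k^T\bv{\Pi})^{-1}$. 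Hence there is an explicit matrix $\bv{Y} = \bv{A}\bv{\Pi}(\bv{V}_k^T\bv{\Pi})^{-1} = \bv{A}_k\bv{V}_k + \bv{E}$ in the column span of $\bv{A}\bv{\Pi}$, where $\bv{E} = \bv{A}_{r\setminus k}\bv{\Pi}(\bv{V}_k^T\bv{\Pi})^{-1}$.

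Next I would use the optimality of the orthogonal projection: since $\bv{Z}$ is an orthonormal basis for $\operatorname{span}(\bv{A}\bv{\Pi})$, and $\bv{Y}\in\operatorname{span}(\bv{A}\bv{\Pi})$ has rank at most $k$, Lemma~\ref{frobopt} (or just the Pythagorean theorem together with optimality of projection) gives
\begin{align*}
\norm{\bv{A} - \bv{Z}\bv{Z}^T\bv{A}}_F^2 \le \norm{\bv{A} - \bv{Z}\bv{Z}^T\bv{A}_k}_F^2 + \text{something},
\end{align*}
so more cleanly I would bound $\norm{\bv{A}-\bv{Z}\bv{Z}^T\bv{A}}_F^2 \le \norm{\bv{A}-\bv{Y}}_F^2$ is \emph{not} quite right since $\bv{Y}$ need not be the projection; instead I use $\norm{\bv{A}-\bv{Z}\bv{Z}^T\bv{A}}_F^2 = \min_{\bv{C}:\rank(\bv{C})\le k}\norm{\bv{A}-\bv{Z}\bv{C}}_F^2$ is the best rank-$k$ approximation in the span, which is at most $\norm{\bv{A}-\bv{Y}}_F^2$ for our particular rank-$k$ $\bv{Y}$. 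Then expand $\bv{A} - \bv{Y} = (\bv{A}_k - \bv{A}_k\bv{V}_k) + \bv{A}_{r\setminus k} - \bv{E} = \bv{A}_{r\setminus k} - \bv{E}$ (using $\bv{A}_k\bv{V}_k\bv{V}_k^T = \bv{A}_k$, so actually $\bv{A}_k - \bv{Y}$ projected back... I would be careful: $\bv{Y}$ has the columns living in $\bv{U}$-space, so write $\bv{A} - \bv{Y}$ and apply triangle inequality $\norm{\bv{A}-\bv{Y}}_F \le \norm{\bv{A} - \bv{A}_k\bv{V}_k\bv{V}_k^T - (\text{stuff})}$). The upshot is $\norm{\bv{A}-\bv{Y}}_F^2 \lesssim \norm{\bv{A}_{r\setminus k}}_F^2 + \norm{\bv{E}}_F^2 = \norm{\bv{A}-\bv{A}_k}_F^2 + \norm{\bv{A}_{r\setminus k}\bv{\Pi}(\bv{V}_k^T\bv{\Pi})^{-1}}_F^2$.

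It remains to bound $\norm{\bv{A}_{r\setminus k}\bv{\Pi}(\bv{V}_k^T\bv{\Pi})^{-1}}_F^2$ by $\poly(d,k)\cdot\norm{\bv{A}_{r\setminus k}}_F^2$ with probability $99/100$. I would do this by submultiplicativity, $\norm{\bv{A}_{r\setminus k}\bv{\Pi}(\bv{V}_k^T\bv{\Pi})^{-1}}_F \le \norm{\bv{A}_{r\setminus k}\bv{\Pi}}_F\cdot\norm{(\bv{V}_k^T\bv{\Pi})^{-1}}_2$, and then: (i) $\bv{\Pi}$ has operator norm $O(\sqrt{d})$ with high probability by standard Gaussian matrix concentration, so $\norm{\bv{A}_{r\setminus k}\bv{\Pi}}_F \le \norm{\bv{A}_{r\setminus k}}_F\norm{\bv{\Pi}}_2 = O(\sqrt d)\norm{\bv{A}_{r\setminus k}}_F$; actually $\norm{\bv{A}_{r\setminus k}\bv{\Pi}}_F \le \norm{\bv{\Pi}}_2 \norm{\bv{A}_{r\setminus k}}_F$ and also $\E\norm{\bv{A}_{r\setminus k}\bv{\Pi}}_F^2 = k\norm{\bv{A}_{r\setminus k}}_F^2$, so even Markov suffices for an $O(k)$-factor with constant probability; (ii) $(\bv{V}_k^T\bv{\Pi})$ is a $k\times k$ standard Gaussian matrix, and its smallest singular value satisfies $\Pr[\sigma_{\min} \le t/\sqrt k] \le O(t)$ (a standard estimate, e.g. from Rudelson–Vershynin or a direct argument), so $\norm{(\bv{V}_k^T\bv{\Pi})^{-1}}_2 = O(\poly(k))$ with probability $99/100$. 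The main obstacle — really the only non-routine ingredient — is controlling $\norm{(\bv{V}_k^T\bv{\Pi})^{-1}}_2$, i.e. anti-concentration of the smallest singular value of a square Gaussian matrix; everything else is the matrix Pythagorean theorem, submultiplicativity, and a union bound. Since we only need a $\poly(d)$ factor, I can afford to be lossy at every step, and a $\poly(d,k)$ bound on $\norm{(\bv{V}_k^T\bv{\Pi})^{-1}}_2^2$ certainly gives the claimed $c\cdot dk$ (or worse — if the constant $c$ needs to absorb a $k^2$ I would just state it as $\poly(d)$; the statement as written with $c\cdot dk$ presumably comes from a tighter but still elementary accounting that I would tighten at the end). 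A union bound over the two constant-probability events (with appropriately chosen constants) yields the $99/100$ overall success probability.
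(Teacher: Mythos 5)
Your approach is essentially the one the paper takes, except that you re-derive the deterministic structural lemma inline (what the paper delegates to Lemma~\ref{detbound}, cited from \cite{woodruff2014sketching,boutsidis2014near}): construct a rank-$k$ candidate in $\operatorname{span}(\bv{A\Pi})$ by inverting $\bv{V}_k^T\bv{\Pi}$, observe it recovers $\bv{A}_k$ exactly plus a tail perturbation, then kill the perturbation via submultiplicativity and Gaussian concentration on $\bigl(\bv{V}_k^T\bv{\Pi}\bigr)^{-1}$ and the tail factor. Two small points to tighten. First, your $\bv{Y}=\bv{A\Pi}(\bv{V}_k^T\bv{\Pi})^{-1}$ is $n\times k$ while $\bv{A}$ is $n\times d$, so $\norm{\bv{A}-\bv{Y}}_F$ is ill-formed as written; the correct comparison matrix is $\bv{W}=\bv{A\Pi}(\bv{V}_k^T\bv{\Pi})^{-1}\bv{V}_k^T=\bv{A}_k+\bv{E}\bv{V}_k^T$, whose columns still lie in $\operatorname{span}(\bv{A\Pi})$ and which has the right shape. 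Second, you do not need the triangle inequality (and its lost constant): $\bv{A}-\bv{W}=\bv{A}_{r\setminus k}-\bv{E}\bv{V}_k^T$ has the two summands with orthogonal row spaces ($\bv{V}_{r\setminus k}$ versus $\bv{V}_k$), so the matrix Pythagorean theorem gives $\norm{\bv{A}-\bv{W}}_F^2=\norm{\bv{A}-\bv{A}_k}_F^2+\norm{\bv{E}}_F^2$ exactly, which is precisely Lemma~\ref{detbound}. Your two suggested routes for bounding $\norm{\bv{E}}_F$ both work; the Markov-plus-$\sigma_{\min}$ route even gives $O(k^2)$ rather than $O(dk)$, and the paper's submultiplicativity split $\norm{\bv{U}_{r\setminus k}\bv{\Sigma}_{r\setminus k}}_F\norm{\bv{V}_{r\setminus k}^T\bv{\Pi}}_2\norm{(\bv{V}_k^T\bv{\Pi})^{+}}_2$ is a third, equivalent accounting. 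The independence of $\bv{V}_k^T\bv{\Pi}$ and $\bv{V}_{r\setminus k}^T\bv{\Pi}$ by rotational invariance is implicitly what makes the union bound clean, and it is worth stating explicitly.
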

For analyzing block methods, results like Lemma \ref{frobnorm} can effectively serve as a replacement for earlier random initialization analysis that applies to single vector power and Krylov methods \cite{oldRandomStart}.

\subsection{Chebyshev Polynomials}
As outlined in Section \ref{sec:beating_simultaneous_iteration}, our proof also requires polynomials to more effectively denoise the tail of $\bv{A}$. As is standard for Krylov subspace methods, we use a variation on the Chebyshev polynomials. The proof of the following Lemma is relegated to Appendix \ref{app:cheby}.

\begin{lemma}[Chebyshev Minimizing Polynomial]\label{actual_poly}
Given a specified value $\alpha > 0$, gap $\gamma \in (0,1]$, and $q \geq 1$, there exists a degree $q$ polynomial $p(x)$ such that:
\begin{enumerate}
\item $p(\left(1+\gamma)\alpha\right) = (1+\gamma)\alpha$
\item $p(x) \geq x$ for all $x \geq (1+\gamma)\alpha$
\item $|p(x)| \leq \frac{\alpha}{2^{q\sqrt{\gamma}-1}}$ for all $x \in[0,\alpha]$
 \end{enumerate}
Furthermore, when $q$ is odd, the polynomial only contains odd powered monomials.
\end{lemma}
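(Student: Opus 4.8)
\textbf{Proof proposal for Lemma \ref{actual_poly}.}

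The plan is to build $p(x)$ directly from the degree-$q$ Chebyshev polynomial of the first kind $T_q$, exploiting the two classical properties of $T_q$: it is bounded by $1$ in absolute value on $[-1,1]$, and it grows extremely fast outside that interval, with $T_q(1+\eta) \geq \tfrac12 2^{q\sqrt{\eta}}$ for $\eta \in (0,1]$ (this last bound is the quantitative statement one gets from $T_q(1+\eta) = \cosh(q\,\mathrm{arccosh}(1+\eta))$ together with $\mathrm{arccosh}(1+\eta) \geq \sqrt{\eta}$). First I would affinely remap the interval $[0,\alpha]$ — the tail region we want to crush — onto $[-1,1]$, i.e. consider $T_q\!\left(\tfrac{x}{\alpha}\right)$ or, better, an affine map sending $[-\alpha,\alpha]$ (or $[0,\alpha]$) into $[-1,1]$ so that the point $(1+\gamma)\alpha$ lands at $1+\eta$ for an appropriate $\eta$ comparable to $\gamma$. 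Then $q$-degree Chebyshev growth gives $\left|T_q\right|$ at the remapped point $(1+\gamma)\alpha$ at least on the order of $2^{q\sqrt{\gamma}}$, while $\left|T_q\right| \leq 1$ on the remapped tail interval.

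Next I would \emph{rescale}: define $p(x) = (1+\gamma)\alpha \cdot \dfrac{\tilde T_q(x)}{\tilde T_q((1+\gamma)\alpha)}$, where $\tilde T_q$ is the remapped Chebyshev polynomial. By construction this forces property (1), $p((1+\gamma)\alpha) = (1+\gamma)\alpha$, exactly. Property (3) then follows immediately: for $x \in [0,\alpha]$ we have $|\tilde T_q(x)| \leq 1$, so $|p(x)| \leq \dfrac{(1+\gamma)\alpha}{\tilde T_q((1+\gamma)\alpha)} \leq \dfrac{(1+\gamma)\alpha}{\tfrac12 2^{q\sqrt{\gamma}}} = \dfrac{2(1+\gamma)\alpha}{2^{q\sqrt{\gamma}}} \leq \dfrac{\alpha}{2^{q\sqrt{\gamma}-2}}$ up to the constant; I'd need to track the affine map carefully so that the exponent comes out as $q\sqrt{\gamma}-1$ exactly as stated, which may mean choosing the map so the tail sits strictly inside $[-1,1]$ with a little room, or absorbing the $(1+\gamma) \leq 2$ factor into the exponent. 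Property (2), $p(x) \geq x$ for $x \geq (1+\gamma)\alpha$, is the monotonicity/growth claim: on this half-line $\tilde T_q$ is positive and increasing (standard for $T_q$ beyond $1$), and in fact grows faster than linearly, so the normalized $p$, which equals the identity at the left endpoint $(1+\gamma)\alpha$, stays above the identity line thereafter. I would make this precise by comparing derivatives at $(1+\gamma)\alpha$ or by the convexity of $T_q$ on $[1,\infty)$.

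Finally, the odd-powers claim: when $q$ is odd, $T_q$ is an odd function, so $T_q(x)$ contains only odd monomials. If I use an affine remap with a nonzero shift this is destroyed, so for the odd case I must use a \emph{pure scaling} $x \mapsto x/\alpha$ (mapping $[-\alpha,\alpha]$ to $[-1,1]$), which keeps oddness; then $(1+\gamma)\alpha$ maps to $1+\gamma$ and the growth bound applies with $\eta = \gamma$ directly. The even-$q$ / general case can use the shifted map since oddness isn't required there. The main obstacle I anticipate is purely bookkeeping: getting the quantitative Chebyshev growth bound $T_q(1+\gamma) \geq \tfrac12 2^{q\sqrt\gamma}$ with clean constants, and threading the affine normalization so that properties (1)–(3) hold with \emph{exactly} the stated constants (particularly the ``$-1$'' in the exponent and the clean ``$p(x)\ge x$'' rather than ``$p(x)\ge cx$''), rather than up to unspecified absolute constants. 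None of this is conceptually deep — it is the standard Chebyshev-acceleration construction — but the constant-chasing is where the care goes.
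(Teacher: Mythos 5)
Your proposal is exactly the paper's construction: set $p(x) = (1+\gamma)\alpha\,T_q(x/\alpha)/T_q(1+\gamma)$ with the pure scaling $x\mapsto x/\alpha$ (so odd $q$ gives odd monomials for free), prove property (2) via $p'\ge 1$ at $x=(1+\gamma)\alpha$ together with convexity of $T_q$ on $[1,\infty)$, and prove property (3) from a lower bound of the form $T_q(1+\gamma)\gtrsim 2^{q\sqrt\gamma}$; the paper derives that bound as $T_q(1+\gamma)\ge\tfrac12(1+\sqrt\gamma)^q$ combined with $(1+\sqrt\gamma)^{1/\sqrt\gamma}\ge 2$ for $\gamma\le1$, which is interchangeable with your $\cosh(q\,\mathrm{arccosh}(1+\gamma))\ge\tfrac12 e^{q\sqrt\gamma}$ derivation. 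Your instinct that the constant-chasing in property (3) needs care is in fact well-founded: the paper's reduction to showing $1/T_q(1+\gamma)\le 1/2^{q\sqrt\gamma-1}$ silently drops the $(1+\gamma)$ prefactor in the definition of $p$, so the argument as written only yields $|p(x)|\le(1+\gamma)\alpha/2^{q\sqrt\gamma-1}$. The stated bound is still true, but closing the gap takes an extra step along exactly the lines you sketched --- for instance, treat $q=1$ separately (there $p(x)=x$ and $\sqrt\gamma\le1$ makes $\alpha/2^{\sqrt\gamma-1}\ge\alpha$), and for $q\ge 2$ use the sharper lower bound $T_q(1+\gamma)\ge\tfrac12\bigl(1+\gamma+\sqrt{2\gamma+\gamma^2}\bigr)^q$ to absorb the $(1+\gamma)$ factor.
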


\section{Implementation and Runtimes}
\label{sec:runtimes}

We first briefly discuss runtime and implementation considerations for Algorithms \ref{simultaneous_iteration} and \ref{blanczos}, our randomized implementations of Simultaneous Power Iteration and Block Krylov Iteration.

\subsection{Simultaneous Iteration}

Algorithm \ref{simultaneous_iteration} can be modified in a number of ways. $\bv{\Pi}$ can be replaced by a random sign matrix, or any matrix achieving the guarantee of Lemma \ref{frobnorm}. $\bv{\Pi}$ may also be chosen with $p > k$ columns. We will discuss in detail how this approach can give improved accuracy in Section \ref{decay}. 

In our implementation we set $\bv{Z} = \bv{Q\bv{\bar U}_k}$. This ensures that, for all $l \le k$, $\bv{Z}_l$ gives the best rank $l$ Frobenius norm approximation to $\bv{A}$ within the span of $\bv{K}$ (See Lemma \ref{frobopt}). This is necessary for achieving per vector guarantees for approximate PCA. However, if we are only interested in computing a near optimal low-rank approximation, we can simply set $\bv{Z} = \bv{Q}$. Projecting $\bv{A}$ to $\bv{Q\bv{\bar U}_k}$ is equivalent to projecting to $\bv{Q}$ as these two matrices have the same column spans. 

Additionally, since powering $\bv{A}$ spreads its singular values, $\bv{K} = (\bv{AA}^T )^q \bv{A\Pi}$ could be poorly conditioned. As suggested in \cite{martinsson2010normalized}, to improve stability we can orthonormalize $\bv{K}$ after every iteration (or every few iterations). This does not change $\bv{K}$'s column span, so it gives an equivalent algorithm in exact arithmetic, but improves conditioning significantly.

\begin{theorem}[Simultaneous Iteration Runtime]\label{simult_runtime} Algorithm \ref{simultaneous_iteration} runs in time
\begin{align*} 
O\left(\nnz(\bv{A})\frac{k\log d}{\epsilon}+\frac{nk^2\log d}{\epsilon}\right).
\end{align*}
\end{theorem}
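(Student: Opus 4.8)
The plan is to account for the cost of each step of Algorithm \ref{simultaneous_iteration} and sum them, using $q = \Theta(\log d/\epsilon)$. First I would handle line 2, the construction of $\bv{K} = (\bv{AA}^T)^q \bv{A\Pi}$. Rather than forming $(\bv{AA}^T)^q$ explicitly, this is computed by repeated matrix-vector products: starting from the $d \times k$ matrix $\bv{\Pi}$, we alternately multiply by $\bv{A}$ (an $n \times d$ times $d \times k$ product) and by $\bv{A}^T$ (a $d \times n$ times $n \times k$ product). Each such product costs $O(\nnz(\bv{A}) k)$ time, since multiplying a sparse matrix by a dense matrix with $k$ columns touches each nonzero entry $k$ times. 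There are $O(q) = O(\log d/\epsilon)$ such products, giving $O(\nnz(\bv{A}) k \log d / \epsilon)$ for this step.

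Next I would bound the post-processing steps 3--6, which all operate on matrices with $O(n)$ rows (or $O(k)$ rows) and at most $k$ columns. Orthonormalizing the $n \times k$ matrix $\bv{K}$ (line 3) via a QR decomposition or Gram--Schmidt takes $O(nk^2)$ time. Forming $\bv{M} = \bv{Q}^T \bv{AA}^T \bv{Q}$ (line 4) is done as $(\bv{A}^T\bv{Q})^T(\bv{A}^T\bv{Q})$: computing $\bv{A}^T \bv{Q}$ costs $O(\nnz(\bv{A}) k)$ and the final $k \times n$ times $n \times k$ multiply costs $O(nk^2)$. The SVD of the $k \times k$ matrix $\bv{M}$ (line 5) costs $O(k^3)$, and the final product $\bv{Q}\bv{\bar U}_k$ (line 6) costs $O(nk^2)$. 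Since $k \le n$, the $k^3$ term is dominated by $nk^2$, so the post-processing contributes $O(\nnz(\bv{A}) k + nk^2)$.

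Summing, the dominant contributions are $O(\nnz(\bv{A}) k \log d / \epsilon)$ from the iteration and $O(nk^2)$ from a single orthonormalization/projection; but since the stability-improving variant re-orthonormalizes after each of the $O(\log d/\epsilon)$ iterations (as discussed in the text), the $nk^2$ term is also multiplied by $O(\log d/\epsilon)$, yielding the stated $O\left(\nnz(\bv{A})\frac{k\log d}{\epsilon}+\frac{nk^2\log d}{\epsilon}\right)$. I would state explicitly that lower-order terms (such as generating $\bv{\Pi}$, the $O(k^3)$ SVD) are absorbed.

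This proof is essentially bookkeeping, so there is no real obstacle; the only point requiring a little care is justifying that $(\bv{AA}^T)^q \bv{A\Pi}$ should never be formed via an explicit matrix power (which would be catastrophically slow and dense), but instead built up by $O(q)$ sparse-times-dense multiplications — and correspondingly that one should keep the working matrix as an $n \times k$ (or $d \times k$) dense block throughout, never materializing anything of size $n \times n$ or $d \times d$. Once that implementation choice is fixed, each step's cost is immediate.
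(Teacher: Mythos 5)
Your proposal is correct and follows essentially the same bookkeeping as the paper: form $\bv{K}$ by $O(q)$ alternating sparse multiplies by $\bv{A}^T$ and $\bv{A}$ at $O(\nnz(\bv{A})k)$ each, charge $O(nk^2)$ per iteration for re-orthonormalization (which is where the $nk^2\log d/\epsilon$ term actually comes from, as you correctly identify), account for the $O(\nnz(\bv{A})k + nk^2 + k^3)$ post-processing, and set $q = \Theta(\log d/\epsilon)$. The only cosmetic difference is that you compute $\bv{M}$ as $(\bv{A}^T\bv{Q})^T(\bv{A}^T\bv{Q})$ while the paper multiplies $\bv{Q}^T\bv{A}\bv{A}^T\bv{Q}$ left to right, but both give the same cost.
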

\begin{proof}
Computing $\bv{K}$ requires first multiplying $\bv{A}$ by $\bv{\Pi}$, which takes $O(\nnz(\bv{A})k)$ time. 
Computing $\left(\bv{A}\bv{A}^T \right )^i\bv{A\Pi}$ given $\left(\bv{A}\bv{A}^T \right )^{i-1}\bv{A\Pi}$ then takes $O(\nnz(\bv{A})k)$ time to first multiply our  $(n\times k)$ matrix by $\bv{A}^T$ and then by $\bv{A}$. Reorthogonalizing after each iteration takes $O(nk^2)$ time via Gram-Schmidt or Householder reflections. This gives a total runtime of $O(\nnz(\bv{A})kq + nk^2q)$ for computing $\bv{K}$.

Finding $\bv{Q}$ takes $O(nk^2)$ time. Computing $\bv{M}$ by multiplying from left to right requires $O(nnz(\bv{A})k + nk^2)$ time. $\bv{M}$'s SVD then requires $O(k^3)$ time using classical techniques. Finally, multiplying $\bv{\bar U}_k$ by $\bv{Q}$ takes time $O(nk^2)$. Setting $q = \Theta(\log d/\epsilon)$ gives the claimed runtime.
\end{proof}

\subsection{Block Krylov Iteration}
As with Simultaneous Iteration, we can replace $\bv{\Pi}$ with any matrix achieving the guarantee of Lemma \ref{frobnorm} and can use $p > k$ columns to improve accuracy.
$\bv{Q}$ can also be computed in a number of ways. In the traditional Block Lanczos algorithm, one starts by computing an orthonormal basis for $\bv{A\Pi}$, the first block in the Krylov subspace. Bases for subsequent blocks are computed from previous blocks using a three term recurrence that ensures $\bv{Q}^T \bv{AA}^T \bv{Q}$ is block tridiagonal, with $k \times k$ sized blocks \cite{golub77}. This technique can be useful if $qk$ is large, since it is faster to compute the top singular vectors of a block tridiagonal matrix. However, computing $\bv{Q}$ using a recurrence can introduce a number of stability issues, and additional steps may be required to ensure that the matrix remains orthogonal \cite{golub1996matrix}. 

An alternative is to compute $\bv{K}$ explicitly and then compute $\bv{Q}$ using a QR decomposition. This method is used in \cite{RokhlinTygertPCA} and \cite{HalkoRokhlin:2011}. It does not guarantee that $\bv{Q}^T \bv{AA}^T \bv{Q}$ is block tridiagonal, but helps avoid a number of stability issues. Furthermore, if $qk$ is small, taking the SVD of $\bv{Q}^T \bv{AA}^T \bv{Q}$ will still be fast and typically dominated by the cost of computing $\bv{K}$.

As with Simultaneous Iteration, we can also orthonormalize each block of $\bv{K}$ after it is computed, avoiding poorly conditioned blocks and giving an equivalent algorithm in exact arithmetic.

\begin{theorem}[Block Krylov Iteration Runtime]\label{blanczos_runtime} Algorithm \ref{blanczos} runs in time
\begin{align*} 
O\left(\nnz(\bv{A}) \frac{k\log d}{\sqrt{\epsilon}}+\frac{nk^2\log^2 d}{\epsilon} + \frac{k^3\log^3 d}{\epsilon^{3/2}} \right ).
\end{align*}
\end{theorem}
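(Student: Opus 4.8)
The plan is to mirror the runtime analysis of Theorem \ref{simult_runtime}, tracking the cost of each step of Algorithm \ref{blanczos} separately, and then substitute $q = \Theta(\log d/\sqrt{\epsilon})$ at the end. The key observation that drives the differences from Simultaneous Iteration is that the block Krylov space $\bv{K}$ has $qk$ columns rather than $k$, so orthonormalization and the post-processing on $\bv{M}$ become more expensive by polynomial factors in $q$.

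\textbf{Step 1: Cost of forming $\bv{K}$.} First I would bound the cost of building the blocks $\bv{A\Pi}, (\bv{AA}^T)\bv{A\Pi}, \ldots, (\bv{AA}^T)^q\bv{A\Pi}$. As in Theorem \ref{simult_runtime}, the initial multiplication $\bv{A\Pi}$ costs $O(\nnz(\bv{A})k)$, and each subsequent block is obtained from the previous one by one multiplication by $\bv{A}^T$ and one by $\bv{A}$, each costing $O(\nnz(\bv{A})k)$ since the intermediate matrices are $n\times k$ or $d\times k$. Over $q$ iterations this is $O(\nnz(\bv{A})kq)$. If we orthonormalize each $n\times k$ block as it is computed (the stabilization step mentioned in the text), that adds $O(nk^2)$ per block, i.e. $O(nk^2 q)$ total, which is dominated by later terms.

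\textbf{Step 2: Cost of $\bv{Q}$, $\bv{M}$, and its SVD.} Next I would account for the post-processing. Orthonormalizing the full $n \times qk$ matrix $\bv{K}$ via QR (Gram--Schmidt or Householder) costs $O(n(qk)^2) = O(nk^2q^2)$. Forming $\bv{M} = \bv{Q}^T\bv{AA}^T\bv{Q}$ by multiplying left to right: $\bv{A}^T\bv{Q}$ costs $O(\nnz(\bv{A})\cdot qk)$, then another multiplication to get the $qk \times qk$ matrix costs $O(n(qk)^2)$, so $O(\nnz(\bv{A})kq + nk^2q^2)$. The SVD of the $qk\times qk$ matrix $\bv{M}$ costs $O((qk)^3) = O(k^3q^3)$ by classical dense methods. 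Finally multiplying $\bv{Q}$ (which is $n\times qk$) by $\bv{\bar U}_k$ (which is $qk \times k$) costs $O(nk\cdot qk) = O(nk^2q)$.

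\textbf{Step 3: Collect and substitute.} Summing the dominant contributions gives $O(\nnz(\bv{A})kq + nk^2q^2 + k^3q^3)$. Substituting $q = \Theta(\log d/\sqrt{\epsilon})$ yields $\nnz(\bv{A}) \cdot k\log d/\sqrt{\epsilon}$, $nk^2\log^2 d/\epsilon$, and $k^3\log^3 d/\epsilon^{3/2}$ respectively, which is exactly the claimed bound. There is no real obstacle here — the only subtlety worth stating explicitly is that the $q^2$ and $q^3$ overheads (versus the $q$ overhead in Theorem \ref{simult_runtime}) come precisely from the Krylov basis having width $qk$, and one should note that the $k^3 q^3$ term is the reason the tridiagonalization variant described in the text can be preferable when $qk$ is large, since exploiting the block-tridiagonal structure of $\bv{M}$ reduces the cost of its eigendecomposition.
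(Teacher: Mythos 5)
Your proposal matches the paper's own proof step for step: the same cost accounting for forming $\bv{K}$ (with per-block reorthogonalization), for the QR factorization of the $n \times qk$ matrix, for forming and decomposing $\bv{M}$, and for the final product with $\bv{\bar U}_k$, followed by the same substitution $q = \Theta(\log d/\sqrt{\epsilon})$. The extra remark about the block-tridiagonal variant is a nice addition but the underlying argument is identical.
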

\begin{proof}
Computing $\bv{K}$, including block reorthogonalization, requires $O(\nnz(\bv{A})kq + nk^2q)$ time. The remaining steps are analogous to those in Simultaneous Iteration except somewhat more costly as we work an $k \cdot q$ dimensional rather than $k$ dimensional subspace.
Finding $\bv{Q}$ takes $O(n(kq)^2)$ time. Computing $\bv{M}$ take $O(nnz(\bv{A})(kq) + n(kq)^2)$ time and its SVD then requires $O((kq)^3)$ time. Finally, multiplying $\bv{\bar U}_k$ by $\bv{Q}$ takes time $O(nk(kq))$. Setting $q = \Theta(\log d/\sqrt{\epsilon})$ gives the claimed runtime.
\end{proof}

\section{Error Bounds}
\label{sec:error_bounds}

We next prove that both Algorithms \ref{simultaneous_iteration} and \ref{blanczos}
return a basis $\bv{Z}$ that gives relative error Frobenius  \eqref{eq:frob_error} and spectral norm \eqref{eq:spectral_error} low-rank approximation error as well as the per vector guarantees \eqref{eq:vector_error}. 

\subsection{Main Approximation Lemma}
We start with a general approximation lemma, which gives three guarantees formalizing the intuition given in Section \ref{sec:algos_background_and_intuition}. All other proofs follow nearly immediately from this lemma.

For simplicity we assume that $k \le r = \rank(\bv{A}) \le n,d$. However, if $k> r$ it can be seen that both algorithms still return a basis satisfying the proven guarantees.
We start with a definition:

\begin{definition} For a given matrix $\bv{Z} \in \mathbb{R}^{n \times k}$ with orthonormal columns, letting $\bv{Z}_{l} \in \mathbb{R}^{n \times l}$ be the first $l$ columns of $\bv{Z}$, we define the error function:
\begin{align*}
\mathcal{E}(\bv{Z}_{l}, \bv{A}) &= \norm{\bv{A}_l}_F^2 - \norm{\bv{Z}_{l}\bv{Z}_{l}^T \bv{A}}_F^2\\
&= \norm{\bv{A}-\bv{Z}_{l}\bv{Z}_{l}^T \bv{A}}_F^2 - \norm{\bv{A}-\bv{A}_l}_F^2.
\end{align*}
\end{definition}

Recall that $\bv{A}_{l}$ is the best rank $l$ approximation to $\bv{A}$. This error function measures how well $\bv{Z}_{l}\bv{Z}_{l}^T \bv{A}$ approximates $\bv{A}$ in comparison to the optimal.

\begin{lemma}[Main Approximation Lemma]\label{main_lemma}
Let $m$ be the number of singular values $\sigma_i$ of $\bv{A}$ with $\sigma_i \ge (1+\epsilon/2) \sigma_{k+1}$. Let $w$ be the number of singular values with $\frac{1}{1+\epsilon/2}\sigma_{k} \le \sigma_i < \sigma_{k}$.
 With probability $99/100$ Algorithms \ref{simultaneous_iteration} and \ref{blanczos} return $\bv{Z}$ satisfying:
\begin{enumerate}
\item $\forall l\le m$, $\mathcal{E}(\bv{Z}_{l}, \bv{A}) \le (\epsilon/2) \cdot \sigma_{k+1}^2,$\label{prop1}
\item $\forall l\le k$, $\mathcal{E}(\bv{Z}_{l}, \bv{A}) \le \mathcal{E}(\bv{Z}_{l-1}, \bv{A}) + 3\epsilon \cdot \sigma_{k+1}^2,$\label{prop2}
\item $\forall l\le k$, $\mathcal{E}(\bv{Z}_{l}, \bv{A}) \le (w+1) \cdot 3\epsilon \cdot \sigma_{k+1}^2.$\label{prop3}
\end{enumerate}
\end{lemma}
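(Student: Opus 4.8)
The plan is to prove the three properties by exhibiting, for each relevant value of $l$, a particular rank-$l$ matrix living inside $\operatorname{span}(\bv{Q})$ that already achieves the claimed error, and then invoking Lemma~\ref{frobopt}, which says that the output $\bv{Z}_l\bv{Z}_l^T\bv{A}$ (built from the top $k$ singular vectors of $\bv{M}=\bv{Q}^T\bv{AA}^T\bv{Q}$) is the \emph{best} rank-$l$ Frobenius-norm approximation to $\bv{A}$ inside $\operatorname{span}(\bv{Q})$. So it suffices to produce good competitors. The natural competitor is $p_q(\bv{A})\bv{\Pi}$ for a carefully chosen polynomial $p_q$: in the Simultaneous Iteration case $p_q(x)=x^{2q+1}$ (since $\bv{K}=(\bv{AA}^T)^q\bv{A\Pi}$), and in the Block Krylov case $p_q$ is the shifted Chebyshev polynomial of Lemma~\ref{actual_poly}, whose odd monomials all lie in the Krylov span $\bv{K}$. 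Crucially $\operatorname{span}(p_q(\bv{A})\bv{\Pi})$ is a subspace of $\operatorname{span}(\bv{Q})$ in both algorithms, so any rank-$l$ approximation achievable there is achievable from $\bv{Q}$.

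First I would set up the ``denoising'' step. Fix the threshold $\alpha=\sigma_{k+1}^2$ (working with $\bv{AA}^T$, whose eigenvalues are $\sigma_i^2$), gap $\gamma=\epsilon/2$, and degree $q$ as in the two algorithms. Lemma~\ref{actual_poly} (or, for Simultaneous Iteration, a direct computation on $x^{2q+1}$) then gives a polynomial $p$ with $p(\sigma_i^2)\ge\sigma_i^2$ for $\sigma_i^2\ge(1+\gamma)\sigma_{k+1}^2$ while $|p(\sigma_i^2)|\le\sigma_{k+1}^2/\operatorname{poly}(d)$ for $\sigma_i^2\le\sigma_{k+1}^2$, by choosing the hidden constant in $q=\Theta(\log d/\sqrt\epsilon)$ (resp.\ $\Theta(\log d/\epsilon)$) large enough. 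Next, apply Lemma~\ref{frobnorm} to the matrix $p(\bv{A})$: since $\bv{Z}'=\operatorname{orth}(p(\bv{A})\bv{\Pi})$ gives $\|p(\bv{A})-\bv{Z}'\bv{Z}'^Tp(\bv{A})\|_F^2\le c\,dk\,\|p(\bv{A})-p(\bv{A})_l\|_F^2$ with probability $99/100$, and the right side is tiny because $p$ has crushed all singular values below $\sigma_{k+1}$, we conclude that $\operatorname{span}(\bv{Q})$ contains a rank-$l$ matrix whose restriction to the ``large'' singular directions (those with $\sigma_i\ge(1+\epsilon/2)\sigma_{k+1}$) is essentially exact. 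Quantitatively, for $l\le m$ this forces $\|\bv{Z}_l\bv{Z}_l^T\bv{A}\|_F^2\ge\|\bv{A}_l\|_F^2-(\epsilon/2)\sigma_{k+1}^2$, which is exactly property~\ref{prop1} after rearranging via the matrix Pythagorean identity $\mathcal{E}(\bv{Z}_l,\bv{A})=\|\bv{A}_l\|_F^2-\|\bv{Z}_l\bv{Z}_l^T\bv{A}\|_F^2$.

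For property~\ref{prop2}, the incremental bound, I would argue about what adding the $l$-th column of $\bv{Z}$ buys. Since $\bv{Z}_{l-1}$ is nested inside $\bv{Z}_l$ and both are optimal rank-$(l-1)$ and rank-$l$ approximations from $\operatorname{span}(\bv{Q})$ (Lemma~\ref{frobopt}), the gain $\|\bv{Z}_l\bv{Z}_l^T\bv{A}\|_F^2-\|\bv{Z}_{l-1}\bv{Z}_{l-1}^T\bv{A}\|_F^2$ equals the $l$-th eigenvalue $\bar\sigma_l^2$ of $\bv{M}$, which is the Rayleigh quotient of $\bv{AA}^T$ maximized over the subspace of $\operatorname{span}(\bv{Q})$ orthogonal to $\bv{Z}_{l-1}$. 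I need to show $\bar\sigma_l^2\ge\sigma_l^2-O(\epsilon)\sigma_{k+1}^2$; meanwhile $\|\bv{A}_l\|_F^2-\|\bv{A}_{l-1}\|_F^2=\sigma_l^2$, and combining gives $\mathcal{E}(\bv{Z}_l,\bv{A})-\mathcal{E}(\bv{Z}_{l-1},\bv{A})=\sigma_l^2-\bar\sigma_l^2\le O(\epsilon)\sigma_{k+1}^2$. The lower bound on $\bar\sigma_l^2$ again comes from the polynomial: the span contains $p(\bv{A})\bv{\Pi}$, which approximates every singular direction with $\sigma_i\gtrsim\sigma_{k+1}$ well, so after removing the $(l-1)$-dimensional $\bv{Z}_{l-1}$ there remains a unit vector in $\operatorname{span}(\bv{Q})$ with Rayleigh quotient at least $(1-\epsilon/2)\sigma_{l}^2$ (using that the only directions where $p$ is weak have $\sigma_i^2<(1+\epsilon/2)\sigma_{k+1}^2\le(1+\epsilon/2)\sigma_l^2$ when $l\le k$). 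Property~\ref{prop3} then follows by summing: property~\ref{prop2} gives $\mathcal{E}(\bv{Z}_l,\bv{A})\le\mathcal{E}(\bv{Z}_m,\bv{A})+3\epsilon\sigma_{k+1}^2\cdot(\text{number of steps from }m\text{ to }l)$, property~\ref{prop1} gives $\mathcal{E}(\bv{Z}_m,\bv{A})\le(\epsilon/2)\sigma_{k+1}^2$, and the number of indices $i$ with $m<i\le k$ is precisely the count of singular values in $[\sigma_{k+1},(1+\epsilon/2)\sigma_{k+1})$ among the top $k$ — which, after reindexing relative to $\sigma_k$ and observing $(1+\epsilon/2)\sigma_{k+1}\ge\sigma_k$ iff $\sigma_k/(1+\epsilon/2)\le\sigma_{k+1}$, matches $w$ up to an additive constant, yielding the $(w+1)\cdot 3\epsilon\sigma_{k+1}^2$ bound.

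The main obstacle is the second property: making precise the claim that deleting the already-chosen $(l-1)$-dimensional subspace $\bv{Z}_{l-1}$ from $\operatorname{span}(\bv{Q})$ still leaves a direction aligned (up to $O(\epsilon)$) with $\bv{u}_l$. The subtlety is that $\bv{Z}_{l-1}$ could in principle ``use up'' part of the $\bv{u}_l$ direction while not fully capturing it (exactly the small-gap scenario), so one cannot simply say $\bv{u}_l\in\operatorname{span}(\bv{Q})$ survives. The fix is to work with the whole block of directions $\{\bv{u}_i:\sigma_i\ge(1+\epsilon/2)\sigma_{k+1}\}$ simultaneously — property~\ref{prop1} says $\bv{Z}_m$ (and hence the larger $\operatorname{span}(\bv{Q})$) captures their \emph{total} energy to within $(\epsilon/2)\sigma_{k+1}^2$, and a perturbation/interlacing argument (Weyl or Cauchy interlacing applied to $\bv{Q}^T\bv{AA}^T\bv{Q}$ versus $\bv{AA}^T$ restricted to the large block) then forces each individual eigenvalue $\bar\sigma_l^2$ to be close to $\sigma_l^2$, losing only an additive $O(\epsilon)\sigma_{k+1}^2$ spread over the at most $O(1/\epsilon)$-many near-threshold values — and I expect this is where the constant degrades from $\epsilon/2$ in property~\ref{prop1} to $3\epsilon$ in properties~\ref{prop2}--\ref{prop3}.
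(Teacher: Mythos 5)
Your Property~\ref{prop1} argument matches the paper's, but there are two genuine gaps in the rest. First, for Property~\ref{prop2} with $m < l \le k$, your plan leans on Property~\ref{prop1} plus interlacing applied to ``the large block'' $\{\bv{u}_i : \sigma_i \ge (1+\epsilon/2)\sigma_{k+1}\}$. That block has only $m$ directions, so it can force $\bar\sigma_l^2 \approx \sigma_l^2$ only for $l \le m$; it says nothing about whether $span(\bv{Q})$ contains, orthogonal to $\bv{Z}_{l-1}$, any direction with Rayleigh quotient near $\sigma_l^2$ when $l > m$. The polynomial $p_1$ with threshold $(1+\epsilon/2)\sigma_{k+1}$ gives no lower bound on $p_1(\sigma_i)$ for $\sigma_i$ below that threshold, so the $k-m$ remaining dimensions of $span(p_1(\bv{A})\bv{\Pi})$ could in principle be useless. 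The missing idea is a \emph{second} polynomial $p_2$ with the lower threshold $\alpha = \frac{1}{1+\epsilon/2}\sigma_k$, together with a decomposition $\bv{A} = \bv{A}_{inner} + \bv{A}_{outer}$ (where $\bv{A}_{inner}$ carries exactly the $w$ singular values in $[\frac{1}{1+\epsilon/2}\sigma_k,\sigma_k)$, which $p_2$ does not control). One shows every unit vector in $span(p_2(\bv{A}_{inner})\bv{\Pi})$ trivially has Rayleigh quotient $\ge (1-\epsilon)\sigma_k^2$, and every unit vector in $span(p_2(\bv{A}_{outer})\bv{\Pi})$ does too by rerunning the Property~\ref{prop1} argument on $\bv{A}_{outer}$; orthogonality of the two spans then yields a full rank-$k$ subspace $\bv{Y}_2 \subseteq span(\bv{Q})$ in which \emph{every} unit vector has Rayleigh quotient $\ge(1-\epsilon)\sigma_k^2$, from which an $\bv{x}\perp \bv{Z}_{l-1}$ can always be drawn. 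Without this your incremental step is unsupported exactly in the small-gap regime the lemma is designed for.

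Second, Property~\ref{prop3} does not follow by summing Property~\ref{prop2}. The number of steps from $m$ to $k$ is $k-m$, the count of top-$k$ singular values in $[\sigma_k,(1+\epsilon/2)\sigma_{k+1})$, which is a \emph{disjoint} interval from the one defining $w$, namely $[\frac{1}{1+\epsilon/2}\sigma_k,\sigma_k)$. These counts do not agree up to an additive constant: if $\sigma_1=\cdots=\sigma_{k+1}$ and all remaining singular values are far below $\frac{1}{1+\epsilon/2}\sigma_k$, then $m=0$ and $w=0$, so summation gives only $O(k\epsilon\sigma_{k+1}^2)$ while the claim is $3\epsilon\sigma_{k+1}^2$. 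The paper handles this with a separate argument: when $w < k-m$, the matrix $p_2(\bv{A}_{inner})\bv{\Pi}$ has rank only $w$, so a $(k-w)$-dimensional subspace of $\bv{Y}_2$ lies entirely inside $span(p_2(\bv{A}_{outer})\bv{\Pi})$ and therefore captures the \emph{true} top singular values (not merely near-threshold substitutes); only the remaining $\le w$ appended columns pay the $O(\epsilon\sigma_{k+1}^2)$ swap cost each. You would need to add both the $p_2$/inner--outer construction and this rank/null-space argument to complete the proof.
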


Property \ref{prop1} captures the intuition given in Section \ref{spectral_intuition}. Both algorithms return $\bv{Z}$ with $\bv{Z}_l$ equal to the best Frobenius norm low-rank approximation in $span(\bv{K})$. Since $\sigma_1 \ge \ldots \ge \sigma_m \ge (1+\epsilon/2)\sigma_{k+1}$ and our polynomials separate any values above this threshold from anything below $\sigma_{k+1}$, $\bv{Z}$ must align very well with $\bv{A}$'s top $m$ singular vectors. Thus $\mathcal{E}(\bv{Z}_{l}, \bv{A})$ is very small for all $l \le m$.

Property \ref{prop2} captures the intuition of Section \ref{per_vector_intuition} -- outside of the largest $m$ singular values, $\bv{Z}$ still performs well. We may fail to distinguish between vectors with values between $\frac{1}{1+\epsilon/2}\sigma_{k}$ and $(1+\epsilon/2)\sigma_{k+1}$. However, aligning with the smaller vectors in this range rather than the larger vectors can incur a cost of  at most $O(\epsilon) \sigma_{k+1}^2$. Since every column of $\bv{Z}$ outside of the first $m$ may incur such a cost, there is a linear accumulation as characterized by Property \ref{prop2}. 

Finally, Property \ref{prop3} captures the intuition that the total error in $\bv{Z}$ is bounded by the number of singular values falling in the range $\frac{1}{1+\epsilon/2}\sigma_{k} \le \sigma_i < \sigma_{k}$. This is the total number of singular vectors that aren't necessarily separated from and can thus be `swapped in' for any of the $(k-m)$ true top vectors with singular value $ <(1+\epsilon/2) \sigma_{k+1}$. Property \ref{prop3} is critical in achieving near optimal Frobenius norm low-rank approximation.

\begin{proof}

\subsubsection*{Proof of Property \ref{prop1}}
Assume $m \ge 1$. If $m = 0$ then Property \ref{prop1} trivially holds. We will prove the statement for Algorithm \ref{blanczos}, since this is the more complex case, and then explain how the proof extends to Algorithm \ref{simultaneous_iteration}.

Let $p_1$ be the polynomial from Lemma \ref{actual_poly} with $\alpha = \sigma_{k+1}$, $\gamma = \epsilon/2$, and $q \ge c \log(d/\epsilon)/\sqrt{\epsilon}$ for some fixed constant $c$. We can assume $1/\epsilon = O(\poly d)$ and thus $q = O(\log d/\sqrt{\epsilon})$. Otherwise our Krylov subspace would have as many columns as $\bv{A}$ and we may as well use a classical algorithm to compute $\bv{A}$'s partial SVD directly.
Let $\bv{Y}_1 \in \mathbb{R}^{n \times k}$ be an orthonormal basis for the span of $p_1(\bv{A})\bv{\Pi}$. Recall that we defined $p_1(\bv{A}) = \bv{U}p_1(\bv{\Sigma})\bv{V}^T$. As long as we choose $q$ to be odd, by the recursive definition of the Chebyshev polynomials, $p_1(\bv{A})$ only contains odd powers of $\bv{A}$ (see Lemma \ref{actual_poly}). Any odd power $i$ can be evaluated as $\left(\bv{AA}^T\right)^{(i-1)/2}\bv{A}$. Accordingly, $p_1(\bv{A})\bv{\Pi}$ and thus $\bv{Y}_1$ have columns falling within the span of the Krylov subspace from Algorithm \ref{blanczos} (and hence its column basis $\bv{Q}$).

By Lemma \ref{frobnorm} we have with probability $99/100$:
\begin{align}\label{rough_bound}
\norm{p_1(\bv{A}) - \bv{Y}_1\bv{Y}_1^T p_1(\bv{A})}_F^2 &\le cdk \norm{p_1(\bv{A}) -p_1(\bv{A})_k}_F^2.
\end{align}
Furthermore, one possible rank $k$ approximation of $p_1(\bv{A})$ is $p_1(\bv{A}_k)$. By the optimality of $p_1(\bv{A})_k$,
\begin{align*}
\norm{p_1(\bv{A}) -p_1(\bv{A})_k}_F^2 \le \norm{p_1(\bv{A}) -p_1(\bv{A}_k)}_F^2 &\le \sum_{i=k+1}^d p_1(\sigma_{i})^2 \\
&\le d \cdot \left ( \frac{\sigma_{k+1}^2}{2^{2q\sqrt{\epsilon/2} - 2}}\right ) = O\left ( \frac{\epsilon}{2d^2}\sigma_{k+1}^2 \right ).
\end{align*}

The last inequalities follow from setting $q = \Theta(\log(d/\epsilon)/\sqrt{\epsilon})$ and from the fact that $\sigma_{i} \le \sigma_{k+1} = \alpha$ for all $i \ge k+1$ and thus by property 3 of Lemma \ref{actual_poly}, $|p_1(\sigma_{i})| \le \frac{\sigma_{k+1}}{2^{q\sqrt{\epsilon/2} - 1}}$. Noting that $k \le d$, we can plug this bound into \eqref{rough_bound} to get
\begin{align}\label{small_tail_bound}
\norm{p_1(\bv{A}) - \bv{Y}_1\bv{Y}_1^T p_1(\bv{A})}_F^2 &\le  \frac{\epsilon}{2}\sigma_{k+1}^2.
\end{align}
Applying the Pythagorean theorem and the invariance of the Frobenius norm under rotation gives
\begin{align*}
\norm{p_1(\bv{\Sigma})}_F^2 - \frac{\epsilon\sigma_{k+1}^2}{2} &\le \norm{\bv{Y}_1\bv{Y}_1^T \bv{U} p_1(\bv{\Sigma})}_F^2.
\end{align*}
$\bv{Y}_1$ falls within $\bv{A}$'s column span, and therefore $\bv{U}$'s column span. So we can write $\bv{Y}_1 = \bv{U} \bv{C}$ for some $\bv{C} \in \mathbb{R}^{r \times k}$. Since $\bv{Y}_1$ and $\bv{U}$ have orthonormal columns, so must $\bv{C}$. We can now write
\begin{align*}
\norm{p_1(\bv{\Sigma})}_F^2 - \frac{\epsilon\sigma_{k+1}^2}{2}  &\le \norm{\bv{UC}\bv{C}^T \bv{U}^T \bv{U} p_1(\bv{\Sigma})}_F^2 
= \norm{\bv{U}\bv{C}\bv{C}^T p_1(\bv{\Sigma})}_F^2
= \norm{\bv{C}^T p_1(\bv{\Sigma})}_F^2.
\end{align*}

Letting $\bv{c}_i$ be the $i^\text{th}$ row of $\bv{C}$, expanding out these norms gives
\begin{align}\label{expanded_norms}
\sum_{i=1}^r p_1(\sigma_i)^2 - \frac{\epsilon\sigma_{k+1}^2}{2} &\le \sum_{i=1}^r \norm{\bv{c}_i}_2^2 p_1(\sigma_i)^2.
\end{align}
Since $\bv{C}$'s columns are orthonormal, its rows all have norms upper bounded by $1$. So $\norm{\bv{c}_i}_2^2 p_1(\sigma_i)^2 \le p_1(\sigma_i)^2$ for all $i$. So for all $l \le r$, \eqref{expanded_norms} gives us
\begin{align*}
\sum_{i=1}^l (1-\norm{\bv{c}_i}_2^2) p_1(\sigma_i)^2 \le \sum_{i=1}^r (1-\norm{\bv{c}_i}_2^2) p_1(\sigma_i)^2  \le \frac{\epsilon\sigma_{k+1}^2}{2}.
\end{align*}

Recall that $m$ is the number of singular values with $\sigma_i \ge (1+\epsilon/2)\sigma_{k+1}$. By Property 2 of Lemma \ref{actual_poly}, for all $i \le m$ we have $\sigma_i \le p_1(\sigma_i)$. This gives, for all $l \le m$:
\begin{align*}
\sum_{i=1}^l (1-\norm{\bv{c}_i}_2^2) \sigma_i^2 &\le \frac{\epsilon\sigma_{k+1}^2}{2}\text{ and so}\\
\sum_{i=1}^l \sigma_i^2 - \frac{\epsilon\sigma_{k+1}^2}{2} &\le \sum_{i=1}^r \norm{\bv{c}_i}_2^2 \sigma_i^2.
\end{align*}
Converting these sums back to norms yields $\norm{\bv{\Sigma}_l}_F^2 - \frac{\epsilon\sigma_{k+1}^2}{2} \le \norm{\bv{C}^T \bv{\Sigma}_l}_F^2$ and therefore $\norm{\bv{A}_l}_F^2 - \frac{\epsilon\sigma_{k+1}^2}{2} \le \norm{\bv{Y}_1\bv{Y}_1^T \bv{A}_l}_F^2$ and 
\begin{align}\label{main_prop_1_bound}
 \norm{\bv{A}_l}_F^2  - \norm{\bv{Y}_1\bv{Y}_1^T \bv{A}_l}_F^2 \le \frac{\epsilon\sigma_{k+1}^2}{2}.
\end{align}

Now $\bv{Y}_1\bv{Y}_1^T \bv{A}_l$ is a rank $l$ approximation to $\bv{A}$ falling within the column span of $\bv{Y}$ and hence within the column span of $\bv{Q}$. By Lemma \ref{frobopt}, the best rank $l$ Frobenius approximation to $\bv{A}$ within $\bv{Q}$ is given by $\bv{Q}\bv{\bar U}_l(\bv{Q}\bv{\bar U}_l)^T \bv{A}$. So we have
\begin{align*}
 \norm{\bv{A}_l}_F^2-\norm{\bv{Q}\bv{\bar U}_l(\bv{Q}\bv{\bar U}_l)^T \bv{A}}_F^2 = \mathcal{E}(\bv{Z}_l, \bv{A}) \le\frac{\epsilon\sigma_{k+1}^2}{2},
\end{align*}
giving Property \ref{prop1}.

For Algorithm \ref{simultaneous_iteration}, we instead choose $p_1(x) =(1+\epsilon/2)\sigma_{k+1} \cdot \left ( \frac{x}{(1+\epsilon/2)\sigma_{k+1}}\right)^{2q+1}$. For $q = \Theta(\log d/\epsilon)$, this polynomial satisfies the necessary properties: for all $i \ge k+1$, $p_1(\sigma_i) \le O\left ( \frac{\epsilon}{2d^2}\sigma_{k+1}^2 \right )$ and for all $i \le m$, $\sigma_i \le p_1(\sigma_i)$. Further, up to a rescaling, $p_1(\bv{A})\bv{\Pi} = \bv{K}$ so $\bv{Y}_1$ spans the same space as $\bv{K}$. Therefore since Algorithm \ref{simultaneous_iteration} returns $\bv{Z}$ with $\bv{Z}_l$ equal to the best rank $l$ Frobenius norm approximation to $\bv{A}$ within the span of $\bv{K}$, for all $l$ we have:
\begin{align*}
\norm{\bv{Q}\bv{\bar U}_l(\bv{Q}\bv{\bar U}_l)^T \bv{A}}_F^2 \ge \norm{\bv{Y}_1\bv{Y}_1^T \bv{A}_l}_F^2 \ge  \norm{\bv{A}_l}_F^2 - \frac{\epsilon\sigma_{k+1}^2}{2},
\end{align*}
giving the proof.

\subsubsection*{Proof of Property \ref{prop2}}
Property \ref{prop1} and the fact that $\mathcal{E}(\bv{Z}_l,\bv{A})$ is always positive immediately gives Property \ref{prop2} for $l \le m$. So we need to show that it holds for $m < l \le k$. Note that if $w$, the number of singular values with $\frac{1}{1+\epsilon/2}\sigma_{k} \le \sigma_i < \sigma_{k}$ is equal to $0$, then $\sigma_{k+1} < \frac{1}{1+\epsilon/2}\sigma_{k}$, so $m=k$ and we are done. So we assume $w \ge 1$ henceforth.  Again, we first prove the statement for Algorithm \ref{blanczos} and then explain how the proof extends to the simpler case of Algorithm \ref{simultaneous_iteration}.

Intuitively, Property \ref{prop1} follows from the guarantee that there is a rank $m$ subspace of $span(\bv{K})$ that aligns with $\bv{A}$ nearly as well as the space spanned by $\bv{A}$'s top $m$ singular vectors. To prove Property \ref{prop2} we must show that there is also some rank $k$ subspace in $span(\bv{K})$ whose components all align nearly as well with $\bv{A}$ as $\bv{u}_k$, the $k^\text{th}$ singular vector of $\bv{A}$. The existence of such a subspace ensures that $\bv{Z}$ performs well, even on singular vectors in the intermediate range $[\sigma_k, (1+\epsilon/2)\sigma_{k+1}]$.

Let $p_2$ be the polynomial from Lemma \ref{actual_poly} with $\alpha = \frac{1}{1+\epsilon/2}\sigma_{k}$, $\gamma = \epsilon/2$, and $q \ge c \log(d/\epsilon)/\sqrt{\epsilon}$ for some fixed constant $c$. Let $\bv{Y}_2 \in \mathbb{R}^{n \times k}$ be an orthonormal basis for the span of $p_2(\bv{A})\bv{\Pi}$. Again, as long as we choose $q$ to be odd, $p_2(\bv{A})$ only contains odd powers of $\bv{A}$ and so $\bv{Y}_2$ falls within the span of the Krylov subspace from Algorithm \ref{blanczos}.
We wish to show that for every unit vector $\bv{x}$ in the column span of $\bv{Y}_2$, $\norm{\bv{x}^T \bv{A}}_2 \ge \frac{1}{1+\epsilon/2} \sigma_k$.

Let $\bv{A}_{inner}$ = $\bv{A}_{r \setminus k} - \bv{A}_{r \setminus (k+w)}$. $\bv{A}_{inner} = \bv{U}\bv{\Sigma}_{inner} \bv{V}^T$ where $\bv{\Sigma}_{inner}$ contains only the singular values $\sigma_{k+1},\ldots,\sigma_{k+w}$. These are the $w$ intermediate singular values of $\bv{A}$ falling in the range $\left [ \frac{1}{1+\epsilon/2} \sigma_k, \sigma_k \right )$. Let $\bv{A}_{outer} = \bv{A}-\bv{A}_{inner} = \bv{U}\bv{\Sigma}_{outer} \bv{V}^T$. $\bv{\Sigma}_{outer}$ contains all large singular values of $\bv{A}$ with $\sigma_i \ge \sigma_k$ and all small singular values with $\sigma_i < \frac{1}{1+\epsilon/2} \sigma_k$.

Let $\bv{Y}_{inner} \in \mathbb{R}^{n \times min \{k,w\}}$ be an orthonormal basis for the columns of $p_2(\bv{A}_{inner})\bv{\Pi}$. Similarly let $\bv{Y}_{outer} \in \mathbb{R}^{n \times k,}$ be an orthonormal basis for the columns of $p_2(\bv{A}_{outer})\bv{\Pi}$. 

Every column of $\bv{Y}_{inner}$ falls in the column span of $\bv{A}_{inner}$ and hence the column span of $\bv{U}_{inner} \in \mathbb{R}^{n \times w}$, which contains only the singular vectors of $\bv{A}$ corresponding to the inner singular values. Similarly, the columns of $\bv{Y}_{outer}$ fall within the span of $\bv{U}_{outer} \in \mathbb{R}^{n \times r-w}$, which contains the remaining left singular vectors of $\bv{A}$. So the columns of $\bv{Y}_{inner}$ are orthogonal to those of $\bv{Y}_{outer}$ and $\left [ \bv{Y}_{inner}, \bv{Y}_{outer}\right ]$ forms an orthogonal basis. For any unit vector $\bv{x} \in span(p_2(\bv{A})\bv{\Pi}) = span(\bv{Y}_2)$ we can write $\bv{x} = \bv{x}_{inner} + \bv{x}_{outer}$ where $\bv{x}_{inner}$ and $\bv{x}_{outer}$ are orthogonal vectors in the spans of $\bv{Y}_{inner}$ and $\bv{Y}_{outer}$ respectively. We have:
\begin{align}\label{inner_outer_split}
\norm{\bv{x}^T \bv{A}}_2^2 = \norm{\bv{x}_{inner}^T \bv{A}}_2^2 + \norm{\bv{x}_{outer}^T \bv{A}}_2^2.
\end{align}

We will lower bound $\norm{\bv{x}^T \bv{A}}_2^2$ by considering each contribution separately. First, any unit vector $\bv{x'} \in \mathbb{R}^n$ in the column span of $\bv{Y}_{inner}$ can be written as $\bv{x}' = \bv{U}_{inner} \bv{z}$ where $\bv{z} \in \mathbb{R}^{w}$ is a unit vector.
\begin{align}\label{y_inner_bound}
\norm{\bv{x'}^T \bv{A}}_2^2 = \bv{z}^T \bv{U}^T_{inner} \bv{A} \bv{A}^T \bv{U}_{inner} \bv{z} = \bv{z}^T \bv{\Sigma}^2_{inner} \bv{z} \ge \left (\frac{1}{1+\epsilon/2} \sigma_k \right )^2 \ge (1-\epsilon)\sigma_k^2.
\end{align}
Note that we're abusing notation slightly, using $\bv{\Sigma}_{inner} \in \mathbb{R}^{w \times w}$ to represent the diagonal matrix containing all singular values of $\bv{A}$ with $\frac{1}{1+\epsilon/2} \sigma_k \le \sigma_i \le \sigma_k$ without diagonal entries of $0$. 

We next apply the argument used to prove Property \ref{prop1} to $p_2(\bv{A}_{outer})\bv{\Pi}$. The $(k+1)^\text{th}$ singular value of $\bv{A}_{outer}$ is equal to $\sigma_{k+w+1} \le \frac{1}{1+\epsilon/2} \sigma_k = \alpha$. So applying \eqref{main_prop_1_bound} we have for all $l \le k$,
\begin{align}\label{y_outer_initial_bound}
 \norm{\bv{A}_l}_F^2  - \norm{\left (\bv{Y}_{outer}\right )_l\left (\bv{Y}_{outer}\right)^T_l \bv{A}_{l}}_F^2 \le \frac{\epsilon\sigma_{k}^2}{2}.
\end{align}
Note that $\bv{A}_{outer}$ has the same top $k$ singular vectors at $\bv{A}$ so $\left(\bv{A}_{outer} \right)_l = \bv{A}_l$.
Let $\bv{x'} \in \mathbb{R}^n$ be any unit vector within the column space of $\bv{Y}_{outer}$ and let $\bv{\overline Y}_{outer} = (\bv{I}-\bv{x'}\bv{x'}^T)\bv{Y}_{outer}$, i.e the matrix with $\bv{x}'$ projected off each column.
We can use \eqref{y_outer_initial_bound} and the optimality of the SVD for low-rank approximation to obtain:
\begin{align}\label{y_outer_bound}
 \norm{\bv{A}_k}_F^2  - \norm{\bv{Y}_{outer} \bv{Y}_{outer}^T\bv{A}_{k}}_F^2 &\le \frac{\epsilon\sigma_{k}^2}{2}\nonumber\\
 \norm{\bv{A}_k}_F^2  - \norm{\bv{\overline Y}_{outer} \bv{\overline Y}_{outer}^T\bv{A}_{k}}_F^2 - \norm{\bv{x'}\bv{x'}^T \bv{A}_k}_F^2 &\le \frac{\epsilon\sigma_{k}^2}{2}\nonumber\\
\norm{\bv{A}_k}_F^2 - \norm{\bv{A}_{k-1}}_F^2 - \frac{\epsilon\sigma_{k}^2}{2} &\le  \norm{\bv{x'}\bv{x'}^T \bv{A}_k}_F^2\nonumber\\
(1-\epsilon/2)\sigma_k^2 &\le \norm{\bv{x'}^T \bv{A}}_2^2.
\end{align}

Plugging \eqref{y_inner_bound} and \eqref{y_outer_bound} into \eqref{inner_outer_split} yields that, for any $\bv{x}$ in $span(\bv{Y}_2)$, i.e. $span(p_2(\bv{A})\bv{\Pi})$,
\begin{align}\label{good_k_subspace}
\norm{\bv{x}^T \bv{A}}_2^2 &= \norm{\bv{x}_{inner}^T \bv{A}}_2^2 + \norm{\bv{x}_{outer}^T \bv{A}}_2^2\nonumber\\
&\ge \left (\norm{\bv{x}_{inner}}_2^2 + \norm{\bv{x}_{outer}}_2^2 \right )(1-\epsilon)\sigma_{k}^2  \ge (1-\epsilon)\sigma_{k}^2. 
\end{align}
So, we have identified a rank $k$ subspace $\bv{Y}_2$ within our Krylov subspace such that every vector in its span aligns at least as well with $\bv{A}$ as $\bv{u}_k$. 

Now, for any $m \le l \le k$, consider $\mathcal{E}(\bv{Z}_l,\bv{A})$. We know that given $\bv{Z}_{l-1}$, we can form a rank $l$ matrix $\bv{\overline Z}_l$ in our Krylov subspace simply by appending a column $\bv{x}$ orthogonal to the $l-1$ columns of $\bv{Z}_{l-1}$ but falling in the span of $\bv{Y}_2$. Since $\bv{Y}_2$ has rank $k$, finding such a column is always possible. Since $\bv{Z}_l$ is the optimal rank $l$ Frobenius norm approximation to $\bv{A}$ falling within our Krylov subspace,
\begin{align*}
\mathcal{E}(\bv{ Z}_l,\bv{A}) \le \mathcal{E}(\bv{\overline Z}_l,\bv{A}) &= \norm{\bv{A}_l}_F^2 - \norm{\bv{\overline Z}_l \bv{\overline Z}_l^T \bv{A}}_F^2\\
&= \sigma_l^2 +  \norm{\bv{A}_{l-1}}_F^2 - \norm{\bv{ Z}_{l-1} \bv{ Z}_{l-1}^T \bv{A}}_F^2 - \norm{\bv{ x} \bv{ x}^T \bv{A}}_F^2\\
&= \mathcal{E}(\bv{ Z}_{l-1},\bv{A}) + \sigma_l^2 - \norm{\bv{ x} \bv{ x}^T \bv{A}}_F^2\\
&\le \mathcal{E}(\bv{ Z}_{l-1},\bv{A}) + (1+\epsilon/2)^2\sigma_{k+1}^2 - (1-\epsilon)\sigma_{k+1}^2 \\
&\le \mathcal{E}(\bv{ Z}_{l-1},\bv{A}) + 3\epsilon \cdot \sigma_{k+1}^2,
\end{align*}
which gives Property \ref{prop2}.

Again, a nearly identical proof applies for Algorithm \ref{simultaneous_iteration}. We just choose $p_2(x) =\sigma_{k} \left ( \frac{x}{\sigma_{k}}\right)^{2q+1}$. For $q = \Theta(\log d/\epsilon)$ this polynomial satisfies the necessary properties: for all $i \ge k$, $p_1(\sigma_i) \le O\left ( \frac{\epsilon}{2d^2}\sigma_{k}^2 \right )$ and for all $i \le k$, $\sigma_i \le p_2(\sigma_i)$. 

\subsubsection*{Proof of Property \ref{prop3}}

By Properties \ref{prop1} and \ref{prop2} we already have, for all $l \le k$, $\mathcal{E}(\bv{ Z}_l,\bv{A}) \le \epsilon \sigma_{k+1}^2 +(l-m)\cdot 3\epsilon \sigma_{k+1}^2 \le (1+k-m) \cdot 3\epsilon \cdot\sigma_{k+1}^2 $. So if $k-m \le w$ then we immediately have Property \ref{prop3}. 

Otherwise, $w < k-m$ so $w < k$ and thus $p_2(\bv{A}_{inner})\bv{\Pi} \in \mathbb{R}^{n\times k}$ only has rank $w$. It has a null space of dimension $k-w$. Choose any $\bv{z}$ in this null space. Then $p_2(\bv{A})\bv{\Pi}\bv{z} = p_2(\bv{A}_{inner})\bv{\Pi}\bv{z} + p_2(\bv{A}_{outer})\bv{\Pi}\bv{z} = p_2(\bv{A}_{outer})\bv{\Pi}\bv{z}$. In other words, $p_2(\bv{A})\bv{\Pi}\bv{z}$ falls entirely within the span of $\bv{Y}_{outer}$. So, there is a $k-w$ dimensional subspace of $span(\bv{Y}_2)$ that is entirely contained in $span(\bv{Y}_{outer})$. 

For $l \le m + w$, then Properties \ref{prop1} and \ref{prop2} already give us $\mathcal{E}(\bv{ Z}_l,\bv{A}) \le \epsilon \sigma_{k+1}^2 +(l-m)\cdot 3\epsilon \sigma_{k+1}^2 \le (w+1) \cdot 3\epsilon \cdot\sigma_{k+1}^2 $. So consider $m+w \le l \le k$. Given $\bv{Z}_{m}$, to form a rank $l$ matrix $\bv{\overline Z}_l$ in our Krylov subspace we need to append $l-m$ orthonormal columns. We can choose $\min \{k-w-m,l-m\}$ columns, $\bv{X}_1$, from the $k-w$ dimensional subspace within $span(\bv{Y}_2)$ that is entirely contained in $span(\bv{Y}_{outer})$. If necessary (i.e. $k-w-m \le l-m$), We can then choose the remaining $l-(k-w)$ columns $\bv{X}_2$ from the span of $\bv{Y}_2$.

Similar to our argument when considering a single vector in the span of $\bv{Y}_{outer}$, letting $\bv{\overline Y}_{outer} = \left (\bv{I}-\bv{X}_1\bv{X}_1^T \right)\bv{ Y}_{outer} $, we have by \eqref{y_outer_initial_bound}:

\begin{align*}
 \norm{\bv{A}_k}_F^2  - \norm{\bv{Y}_{outer}\bv{Y}_{outer}^T \bv{A}_{k}}_F^2 &\le \frac{\epsilon\sigma_{k}^2}{2}\\
 \norm{\bv{A}_k}_F^2  - \norm{\bv{\overline Y}_{outer} \bv{\overline Y}_{outer}^T\bv{A}_{k}}_F^2 - \norm{\bv{X}_1\bv{X}_1^T \bv{A}_k}_F^2 &\le \frac{\epsilon\sigma_{k}^2}{2}\\
\norm{\bv{A}_k}_F^2 - \norm{\bv{A}_{k-\min \{k-w-m,l-m\}}}_F^2 - \frac{\epsilon\sigma_{k}^2}{2} &\le  \norm{\bv{X}_1\bv{X}_1^T \bv{A}_k}_F^2\\
\sum_{i=k - \min \{k-w-m,l-m\}+1}^k \sigma_i^2 - \frac{\epsilon\sigma_{k}^2}{2} &\le  \norm{\bv{X}_1\bv{X}_1^T \bv{A}}_F^2.
\end{align*}
By applying \eqref{good_k_subspace} directly to each column of $\bv{X}_2$ we also have:
\begin{align*}
(l+w-k) \sigma_{k}^2 - (l+w-k)\epsilon\sigma_{k}^2 &\le  \norm{\bv{X}_2\bv{X}_2^T \bv{A}}_F^2\\
(l+w-k) \sigma_{k+1}^2 - (l+w-k)\epsilon\sigma_{k+1}^2 &\le  \norm{\bv{X}_2\bv{X}_2^T \bv{A}}_F^2.
\end{align*}

Assume that $\min \{k-w-m,l-m\} = k-w-m$. Similar calculations show the same result when $\min \{k-w-m,l-m\} = l-m$. We can use the above two bounds to obtain:
\begin{align*}
\mathcal{E}(\bv{ Z}_l,\bv{A}) &\le \mathcal{E}(\bv{\overline Z}_l,\bv{A})\\
&= \norm{\bv{A}_l}_F^2 - \norm{\bv{\overline Z}_l \bv{\overline Z}_l^T \bv{A}}_F^2\\
&= \sum_{i=m+1}^l \sigma_i^2 +  \norm{\bv{A}_{m}}_F^2 - \norm{\bv{ Z}_{m} \bv{ Z}_{m}^T \bv{A}}_F^2 - \norm{\bv{X}_1 \bv{X}_1^T \bv{A}}_F^2- \norm{\bv{X}_2 \bv{X}_2^T \bv{A}}_F^2\\
&\le \mathcal{E}(\bv{ Z}_{m},\bv{A}) + \sum_{i=m+1}^l \sigma_i^2 - \text{\hspace{-.5em}}\sum_{i=w+m+1}^k\text{\hspace{-.5em}} \sigma_i^2 + \frac{\epsilon\sigma_{k}^2}{2} - (l+w-k) \sigma_{k+1}^2 + (l+w-k)\epsilon\sigma_{k+1}^2\\
&\le \sum_{i=m+1}^{m+w} \sigma_i^2 - w \sigma_{k+1}^2 + (l+w-k + 3/2)\epsilon\sigma_{k+1}^2\\
&\le (l+3w-k + 3/2)\epsilon\sigma_{k+1}^2\\
&\le (w+1) \cdot 3\epsilon \cdot \sigma_{k+1}^2,
\end{align*}
giving Property \ref{prop3} for all $l \le k$.
\end{proof}

\subsection{Error Bounds for Simultaneous Iteration and Block Krylov Iteration}

With Lemma \ref{main_lemma} in place, we can easily prove that Simultaneous Iteration and Block Krylov Iteration both achieve the low-rank approximation and PCA guarantees \eqref{eq:frob_error}, \eqref{eq:spectral_error}, and \eqref{eq:vector_error}.

\begin{theorem}[Near Optimal Spectral Norm Error Approximation]\label{spectral_theorem}
With probability $99/100$, Algorithms \ref{simultaneous_iteration} and \ref{blanczos} return $\bv{Z}$ satisfying \eqref{eq:spectral_error}:
\begin{align*}
\norm{\bv{A}-\bv{ZZ}^T\bv{A}}_2 \le (1+\epsilon)\norm{\bv{A}-\bv{A}_k}_2.
\end{align*}
\end{theorem}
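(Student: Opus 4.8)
The plan is to derive the spectral bound directly from Property~\ref{prop1} of Lemma~\ref{main_lemma}, which for both algorithms controls the Frobenius deficit $\mathcal{E}(\bv{Z}_m,\bv{A})$ of the first $m$ columns of $\bv{Z}$, where $m$ is the number of singular values $\sigma_i\ge(1+\epsilon/2)\sigma_{k+1}$. If $m=0$ then $\norm{\bv{A}}_2=\sigma_1<(1+\epsilon/2)\sigma_{k+1}$ and the claim is immediate from $\norm{\bv{A}-\bv{ZZ}^T\bv{A}}_2\le\norm{\bv{A}}_2$, so assume $m\ge 1$. Since the columns of $\bv{Z}_m$ are among those of $\bv{Z}$, the matrix $\bv{Z}_m\bv{Z}_m^T\bv{A}$ lies in the column span of $\bv{Z}$, so by Lemma~\ref{spectralopt} it suffices to bound $\norm{\bv{A}-\bv{Z}_m\bv{Z}_m^T\bv{A}}_2$.

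The heart of the argument is to turn the Frobenius-norm guarantee of Property~\ref{prop1} into a spectral-norm bound \emph{without} the heavy tail $\sum_{i>k}\sigma_i^2$ polluting it. Write $\bv{N}=(\bv{I}-\bv{Z}_m\bv{Z}_m^T)\bv{A}$. The definition of $\mathcal{E}$ gives $\norm{\bv{N}}_F^2=\norm{\bv{A}-\bv{A}_m}_F^2+\mathcal{E}(\bv{Z}_m,\bv{A})=\sum_{i>m}\sigma_i^2+\mathcal{E}(\bv{Z}_m,\bv{A})$. Now $\bv{N}=\bv{A}-\bv{B}$ with $\bv{B}=\bv{Z}_m\bv{Z}_m^T\bv{A}$ of rank at most $m$, so Weyl's inequality for singular values yields $\sigma_i(\bv{N})\ge\sigma_{i+m}(\bv{A})$ for every $i$. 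Peeling off only the top singular value of $\bv{N}$, $\norm{\bv{N}}_F^2=\sigma_1(\bv{N})^2+\sum_{i\ge 2}\sigma_i(\bv{N})^2\ge\sigma_1(\bv{N})^2+\sum_{j\ge m+2}\sigma_j(\bv{A})^2$, and comparing this with the expression for $\norm{\bv{N}}_F^2$ above and cancelling the common tail gives $\norm{\bv{A}-\bv{Z}_m\bv{Z}_m^T\bv{A}}_2^2=\sigma_1(\bv{N})^2\le\sigma_{m+1}^2+\mathcal{E}(\bv{Z}_m,\bv{A})$.

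Finally I would plug in the two inputs: Property~\ref{prop1} gives $\mathcal{E}(\bv{Z}_m,\bv{A})\le(\epsilon/2)\sigma_{k+1}^2$, and by the definition of $m$ we have $\sigma_{m+1}<(1+\epsilon/2)\sigma_{k+1}$ (taking $\sigma_j:=0$ for $j>r$). Hence $\norm{\bv{A}-\bv{Z}_m\bv{Z}_m^T\bv{A}}_2^2<\big((1+\epsilon/2)^2+\epsilon/2\big)\sigma_{k+1}^2=\big(1+\tfrac32\epsilon+\tfrac14\epsilon^2\big)\sigma_{k+1}^2\le(1+\epsilon)^2\sigma_{k+1}^2$ for $\epsilon\in(0,1)$; taking square roots and using $\norm{\bv{A}-\bv{A}_k}_2=\sigma_{k+1}$ finishes the proof, with the probability $99/100$ inherited verbatim from Lemma~\ref{main_lemma}. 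The one place that takes a moment of thought is the Weyl step: one must recognize that although $\bv{Z}_m$'s Frobenius residual is genuinely large (comparable to $\norm{\bv{A}-\bv{A}_k}_F$), all but the single largest squared singular value of the residual is forced to lie below the corresponding tail singular values of $\bv{A}$, so the remaining spectral mass is only $\sigma_{m+1}^2$ plus the tiny slack $\mathcal{E}(\bv{Z}_m,\bv{A})$. Everything else is routine arithmetic.
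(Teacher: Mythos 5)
Your proof is correct and follows essentially the same route as the paper: handle $m=0$ trivially, invoke Property~\ref{prop1} of Lemma~\ref{main_lemma} for the Frobenius deficit of $\bv{Z}_m$, convert Frobenius slack to spectral slack via Weyl's inequality, and pass from $\bv{Z}_m$ to $\bv{Z}$ through Lemma~\ref{spectralopt}. The only difference is presentational: the paper factors the Weyl step out as a standalone lemma (``additive Frobenius error implies additive spectral error,'' Theorem~3.4 of~\cite{guNewPaper}, reproved as Lemma~\ref{additive_error} in the appendix), whereas you inline that argument — and your inlined version is a clean, correct rendering of exactly the same calculation, including the observation that $\sigma_i(\bv{N})\ge\sigma_{i+m}(\bv{A})$ because $\bv{B}=\bv{Z}_m\bv{Z}_m^T\bv{A}$ has rank at most $m$, so the tail cancels and only $\sigma_{m+1}^2$ plus the slack survives. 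Your final arithmetic is if anything slightly more careful than the paper's (you correctly use $\sigma_{m+1}^2<(1+\epsilon/2)^2\sigma_{k+1}^2$ and verify it still fits under $(1+\epsilon)^2$), so no issues there.
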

\begin{proof}

Let $m$ be the number of singular values with $\sigma_i \ge (1+\epsilon/2) \sigma_{k+1}$. If $m = 0$ then we are done since any $\bv{Z}$ will satisfy $\norm{\bv{A} - \bv{Z}\bv{Z}^T \bv{A}}_2 \le  \norm{\bv{A}}_2 = \sigma_1 \le (1+\epsilon/2)\sigma_{k+1} \le (1+\epsilon)\norm{\bv{A}-\bv{A}_k}_2$.
Otherwise, by Property \ref{prop1} of Lemma \ref{main_lemma},
\begin{align*}
\mathcal{E}(\bv{Z}_m,\bv{A}) &\le \frac{\epsilon\sigma_{k+1}^2}{2}\\
\norm{\bv{A} - \bv{Z}_m\bv{Z}_m^T \bv{A}}_F^2 &\le \norm{\bv{A} - \bv{A}_m}_F^2 + \frac{\epsilon\sigma_{k+1}^2}{2}.
\end{align*}
Additive error in Frobenius norm directly translates to additive spectral norm error. Specifically, applying Theorem 3.4 of \cite{guNewPaper}, which we also prove as Lemma \ref{additive_error} in Appendix \ref{app:cheby},
\begin{align}
\norm{\bv{A} - \bv{Z}_m\bv{Z}_m^T \bv{A}}_2^2 \le \norm{\bv{A} - \bv{A}_m}_2^2 + \frac{\epsilon\sigma_{k+1}^2}{2}
\le \sigma_{m+1}^2 + \frac{\epsilon\sigma_{k+1}^2}{2}\nonumber\\
\le (1+\epsilon/2)\sigma_{k+1}^2 + \frac{\epsilon\sigma_{k+1}^2}{2}
\le (1+\epsilon) \norm{\bv{A} - \bv{A}_k}_2^2.\label{second_to_final}
\end{align}
Finally, $\bv{Z}_m\bv{Z}_m^T \bv{A} = \bv{Z}\bv{Z}_m^T \bv{A}$ and so by Lemma \ref{spectralopt} we have $\norm{\bv{A} - \bv{Z}\bv{Z}^T \bv{A}}_2^2 \le \norm{\bv{A} - \bv{Z}_m\bv{Z}_m^T \bv{A}}_2^2$, which combines with \eqref{second_to_final} to give the result.
\end{proof}



\begin{theorem}[Near Optimal Frobenius Norm Error Approximation]\label{frob_theorem}
With probability $99/100$, Algorithms \ref{simultaneous_iteration} and \ref{blanczos} return $\bv{Z}$ satisfying \eqref{eq:frob_error}:
\begin{align*}
\norm{\bv{A}-\bv{ZZ}^T\bv{A}}_F \le (1+\epsilon)\norm{\bv{A}-\bv{A}_k}_F.
\end{align*}
\end{theorem}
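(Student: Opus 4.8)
The plan is to read everything off Property \ref{prop3} of Lemma \ref{main_lemma}. Since $\bv{Z}\in\R^{n\times k}$ has exactly $k$ columns we have $\bv{Z}=\bv{Z}_{k}$, so by the definition of the error function $\mathcal{E}$,
\[
\norm{\bv{A}-\bv{Z}\bv{Z}^T\bv{A}}_F^2 = \norm{\bv{A}-\bv{A}_k}_F^2 + \mathcal{E}(\bv{Z}_{k},\bv{A}),
\]
and Property \ref{prop3} with $l=k$ gives $\mathcal{E}(\bv{Z}_{k},\bv{A}) \le (w+1)\cdot 3\epsilon\cdot\sigma_{k+1}^2$ with probability $99/100$, where $w$ is the number of singular values with $\tfrac{1}{1+\epsilon/2}\sigma_k \le \sigma_i < \sigma_k$. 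If $\sigma_{k+1}=0$ this bound is $0$ and we are done, so assume $\sigma_{k+1}>0$, whence $\norm{\bv{A}-\bv{A}_k}_F^2 \ge \sigma_{k+1}^2 > 0$. It therefore suffices to show $(w+1)\sigma_{k+1}^2 = O\!\left(\norm{\bv{A}-\bv{A}_k}_F^2\right)$; the theorem then follows by taking square roots and running both algorithms with the accuracy parameter rescaled by a constant, which changes the iteration count $q$ by only a constant factor.

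The one step with content is the bound on $w\,\sigma_{k+1}^2$. I would argue that each singular value $\sigma_i$ counted by $w$ satisfies $\sigma_i \ge \tfrac{1}{1+\epsilon/2}\sigma_k \ge \tfrac{1}{1+\epsilon/2}\sigma_{k+1}$, so $\sigma_i^2 \ge \tfrac{1}{(1+\epsilon/2)^2}\sigma_{k+1}^2 \ge (1-\epsilon)\sigma_{k+1}^2$, and that each such $\sigma_i$ has index strictly larger than $k$ (since $\sigma_i<\sigma_k$ forces $i>k$ for a non-increasing sequence). Summing over the $w$ of them,
\[
w(1-\epsilon)\sigma_{k+1}^2 \;\le\; \sum_{i:\,\frac{1}{1+\epsilon/2}\sigma_k\le\sigma_i<\sigma_k}\sigma_i^2 \;\le\; \sum_{i>k}\sigma_i^2 \;=\; \norm{\bv{A}-\bv{A}_k}_F^2 .
\]
Combined with the trivial $\sigma_{k+1}^2 \le \norm{\bv{A}-\bv{A}_k}_F^2$, this yields $(w+1)\sigma_{k+1}^2 \le \left(\tfrac{1}{1-\epsilon}+1\right)\norm{\bv{A}-\bv{A}_k}_F^2$. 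This is exactly the intuition flagged in Section~\ref{per_vector_intuition}: the weight of the $w$ "intermediate" singular values that could be swapped in for true top-$k$ directions already counts towards $\bv{A}$'s tail.

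Plugging back in, $\mathcal{E}(\bv{Z}_{k},\bv{A}) \le 3\epsilon\!\left(\tfrac{1}{1-\epsilon}+1\right)\norm{\bv{A}-\bv{A}_k}_F^2 \le c\,\epsilon\,\norm{\bv{A}-\bv{A}_k}_F^2$ for an absolute constant $c$ (say $c=12$ once $\epsilon\le 1/2$), so $\norm{\bv{A}-\bv{Z}\bv{Z}^T\bv{A}}_F^2 \le (1+c\epsilon)\norm{\bv{A}-\bv{A}_k}_F^2$, and since $\sqrt{1+c\epsilon}\le 1+c\epsilon$ we get $\norm{\bv{A}-\bv{Z}\bv{Z}^T\bv{A}}_F \le (1+c\epsilon)\norm{\bv{A}-\bv{A}_k}_F$. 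Running both algorithms with parameter $\epsilon/c$ in place of $\epsilon$ (and applying Lemma \ref{main_lemma} accordingly) then produces exactly \eqref{eq:frob_error}. I do not expect a genuine obstacle: all the hard work lives in Property \ref{prop3}, which is already established, and what remains is the indexing observation above plus a routine constant rescaling of $\epsilon$.
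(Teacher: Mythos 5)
Your proof is correct and follows essentially the same route as the paper: apply Property \ref{prop3} of Lemma \ref{main_lemma} with $l=k$, then bound $(w+1)\sigma_{k+1}^2 = O(\norm{\bv{A}-\bv{A}_k}_F^2)$ by observing that the $w$ intermediate singular values all have index $> k$ and are each at least $\tfrac{1}{1+\epsilon/2}\sigma_k$, so their squared sum is already part of the tail $\norm{\bv{A}-\bv{A}_k}_F^2$. The paper phrases the lower bound as $\norm{\bv{A}-\bv{A}_k}_F^2 \ge w\left(\tfrac{1}{1+\epsilon/2}\sigma_k\right)^2$ and you phrase it in terms of $\sigma_{k+1}$, but these are the same inequality up to the cosmetic step $\sigma_k\ge\sigma_{k+1}$; your extra care (the $\sigma_{k+1}=0$ case, the explicit indexing remark, the square-root step, and the explicit constant rescaling) is all fine and matches the paper's terse ``adjusting constants on the $\epsilon$ gives us the result.''
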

\begin{proof}
By Property \ref{prop3} of Lemma \ref{main_lemma} we have:
\begin{align}\label{frob_theorem_first_equation}
\mathcal{E}(\bv{Z}_{l}, \bv{A}) &\le (w+1) \cdot 3\epsilon \cdot \sigma_{k+1}^2\nonumber\\
\norm{\bv{A}-\bv{ZZ}^T\bv{A}}_F^2 &\le \norm{\bv{A}-\bv{A}_k}_F^2 + (w+1) \cdot 3\epsilon \cdot \sigma_{k+1}^2.
\end{align}
$w$ is defined as the number of singular values with $\frac{1}{1+\epsilon/2} \sigma_k \leq \sigma_i < \sigma_k$. So $\norm{\bv{A}-\bv{A}_k}_F^2 \ge w\cdot \left( \frac{1}{1+\epsilon/2} \sigma_k \right )^2$. Plugging into \eqref{frob_theorem_first_equation} we have:
\begin{align*}
\norm{\bv{A}-\bv{ZZ}^T\bv{A}}_F^2 &\le \norm{\bv{A}-\bv{A}_k}_F^2 + (w+1) \cdot 3\epsilon \cdot \sigma_{k+1}^2 \le (1+10 \epsilon)\norm{\bv{A}-\bv{A}_k}_F^2.
\end{align*}
Adjusting constants on the $\epsilon$ gives us the result.
\end{proof}

\begin{theorem}[Per Vector Quality Guarantee]\label{per_vector_theorem}
With probability $99/100$, Algorithms \ref{simultaneous_iteration} and \ref{blanczos} return $\bv{Z}$ satisfying \eqref{eq:vector_error}:
\begin{align*}
\forall i\text{, }  \left|\bv{u}_i^T \bv{A} \bv{A}^T \bv{u}_i - \bv{z}_i^T \bv{A} \bv{A}^T \bv{z}_i \right |\leq \epsilon \sigma_{k+1}^2.
\end{align*}
\end{theorem}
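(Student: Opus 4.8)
The plan is to derive everything from Lemma \ref{main_lemma} by rewriting the per vector quantity as a single-step increment of the error function $\mathcal{E}$. First I would record the elementary identity that, since $\bv{Z}_l$ has orthonormal columns, $\norm{\bv{Z}_l\bv{Z}_l^T\bv{A}}_F^2 = \tr(\bv{Z}_l^T\bv{A}\bv{A}^T\bv{Z}_l) = \sum_{j=1}^l \bv{z}_j^T\bv{A}\bv{A}^T\bv{z}_j$, while $\norm{\bv{A}_l}_F^2 = \sum_{j=1}^l \sigma_j^2 = \sum_{j=1}^l \bv{u}_j^T\bv{A}\bv{A}^T\bv{u}_j$. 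Subtracting consecutive values of $\mathcal{E}$ then telescopes to
\[
\mathcal{E}(\bv{Z}_i,\bv{A}) - \mathcal{E}(\bv{Z}_{i-1},\bv{A}) \;=\; \sigma_i^2 - \bv{z}_i^T\bv{A}\bv{A}^T\bv{z}_i \;=\; \bv{u}_i^T\bv{A}\bv{A}^T\bv{u}_i - \bv{z}_i^T\bv{A}\bv{A}^T\bv{z}_i ,
\]
using the convention $\mathcal{E}(\bv{Z}_0,\bv{A}) = 0$ (empty $\bv{Z}_0$, so $\bv{Z}_0\bv{Z}_0^T = \bv{0}$ and $\bv{A}_0 = \bv{0}$). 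Thus the quantity the theorem controls is exactly one increment of $\mathcal{E}$.

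For the upper side of the absolute value, Property \ref{prop2} of Lemma \ref{main_lemma} gives $\mathcal{E}(\bv{Z}_i,\bv{A}) - \mathcal{E}(\bv{Z}_{i-1},\bv{A}) \le 3\epsilon\sigma_{k+1}^2$ for every $i \le k$, which covers all $k$ columns of $\bv{Z}$; hence $\bv{u}_i^T\bv{A}\bv{A}^T\bv{u}_i - \bv{z}_i^T\bv{A}\bv{A}^T\bv{z}_i \le 3\epsilon\sigma_{k+1}^2$. For the lower side I would observe that this increment is automatically nonnegative. Both algorithms output $\bv{z}_i = \bv{Q}\bv{\bar U}_i$ with $\bv{\bar u}_i$ the $i^\text{th}$ eigenvector of $\bv{M} = \bv{Q}^T\bv{A}\bv{A}^T\bv{Q}$, so $\bv{z}_i^T\bv{A}\bv{A}^T\bv{z}_i = \bv{\bar u}_i^T\bv{M}\bv{\bar u}_i = \lambda_i(\bv{M})$; and since $\bv{Q}$ has orthonormal columns, the Courant–Fischer characterization gives $\lambda_i(\bv{Q}^T\bv{A}\bv{A}^T\bv{Q}) \le \lambda_i(\bv{A}\bv{A}^T) = \sigma_i^2$ (eigenvalue interlacing for compressions, because $\bv{Q}\bv{x}$ ranges over an $i$-dimensional subspace with $\norm{\bv{Q}\bv{x}}_2 = \norm{\bv{x}}_2$ as $\bv{x}$ ranges over one). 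Combining, $0 \le \bv{u}_i^T\bv{A}\bv{A}^T\bv{u}_i - \bv{z}_i^T\bv{A}\bv{A}^T\bv{z}_i \le 3\epsilon\sigma_{k+1}^2$ for all $i$, and running the algorithm with accuracy parameter $\epsilon/3$ in place of $\epsilon$ — i.e. inflating $q$ by a constant factor, exactly as in the proofs of Theorems \ref{spectral_theorem} and \ref{frob_theorem} — yields the stated bound $\epsilon\sigma_{k+1}^2$.

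I do not expect a genuine obstacle here: all of the real work (constructing the shifted Chebyshev / power polynomials and handling singular values in the intermediate band $[\sigma_k,(1+\epsilon/2)\sigma_{k+1}]$) is already absorbed into Lemma \ref{main_lemma}, and Property \ref{prop2} is precisely tailored to bound per-column losses. The only points needing care are the telescoping bookkeeping above and, for the lower bound, recognizing that the per-vector quantity equals $\lambda_i(\bv{M})$ so that it is intrinsically an \emph{under}estimate of $\sigma_i^2$ — this makes the absolute value collapse to one side rather than forcing a separate two-sided argument. It is also worth double-checking that Property \ref{prop2} is only invoked for indices $i \le k$, which is automatic since $\bv{Z}$ has exactly $k$ columns.
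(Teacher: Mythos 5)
Your proposal is correct and follows essentially the same route as the paper: the paper also combines Property \ref{prop2} of Lemma \ref{main_lemma} (unrolled into exactly the increment $\sigma_i^2 - \bv{z}_i^T\bv{A}\bv{A}^T\bv{z}_i \le 3\epsilon\sigma_{k+1}^2$) with the observation that $\bv{z}_i^T\bv{A}\bv{A}^T\bv{z}_i = \sigma_i(\bv{QQ}^T\bv{A})^2 \le \sigma_i(\bv{A})^2$ via Courant--Fischer, so the absolute value collapses to one side, and then adjusts constants on $\epsilon$. Your telescoping bookkeeping and the identification of the per-vector quantity with $\lambda_i(\bv{M})$ are just more explicit renderings of the same two steps.
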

\begin{proof}
First note that $\bv{z}_i^T \bv{A} \bv{A}^T \bv{z}_i \le \bv{u}_i^T \bv{A} \bv{A}^T \bv{u}_i$. This is because $\bv{z}_i^T \bv{A} \bv{A}^T \bv{z}_i = \bv{z}_i^T \bv{Q} \bv{Q}^T \bv{A} \bv{A}^T  \bv{Q} \bv{Q} ^T\bv{z}_i = \sigma_i( \bv{Q} \bv{Q}^T\bv{A})^2$ by our choice of $\bv{z}_i$. $\sigma_i( \bv{Q} \bv{Q}^T\bv{A})^2 \le \sigma_i(\bv{A})^2$ since applying a projection to $\bv{A}$ will decrease each of its singular values (which follows for example from the Courant-Fischer min-max principle).
Then by Property \ref{prop2} of Lemma \ref{main_lemma} we have, for all $i \le k$,
\begin{align*}
\norm{\bv{A}_i}_F^2 - \norm{\bv{Z}_i\bv{Z}_i^T}_F^2 \le \norm{\bv{A}_{i-1}}_F^2 - \norm{\bv{Z}_{i-1}\bv{Z}_{i-1}^T}_F^2 + 3\epsilon \sigma_{k+1}^2\\
\sigma_{i}^2 \le \norm{\bv{z}_i \bv{z}_i^T \bv{A}}_F^2 + 3\epsilon \sigma_{k+1}^2 = \bv{z}_i^T \bv{A}\bv{A}^T \bv{z}_i + 3\epsilon \sigma_{k+1}^2.
\end{align*}
$\sigma_i^2 = \bv{u}_i^T \bv{A} \bv{A}^T \bv{u}_i$, so simply adjusting constants on $\epsilon$ gives the result.
\end{proof}

\section{Improved Convergence With Spectral Decay}\label{decay}

In addition to the implementations of Simultaneous Iteration and Block Krylov Iteration given in Algorithms \ref{simultaneous_iteration} and \ref{blanczos}, our analysis applies to the common modification of running the algorithms with $\bv{\Pi} \in \R^{n \times p}$ for $p \ge k$ \cite{RokhlinTygertPCA,HalkoRokhlin:2011,Halko:2011}. This technique can significantly accelerate both methods for matrices with decaying singular values.  For simplicity, we focus on Block Krylov Iteration, although as usual all arguments immediately extend to the simpler Simultaneous Iteration algorithm.

In order to avoid inverse dependence on the potentially small singular value gap $\frac{\sigma_k}{\sigma_{k+1}} -1$, the number of Block Krylov iterations inherently depends on $1/\sqrt{\epsilon}$. This ensures that our matrix polynomial sufficiently separates small singular values from larger ones. However, when $\sigma_k > (1+\epsilon) \sigma_{k+1}$ we can actually use $q = \Theta \left (\log(d/\epsilon)/\sqrt{\min\{1,\frac{\sigma_k}{\sigma_{k+1}} -1\}} \right )$ iterations, which is sufficient for separating the top $k$ singular values significantly from the lower values. Specifically, if we set $\alpha = \sigma_{k+1}$ and $\gamma = \frac{\sigma_k}{\sigma_{k+1} }-1$, we know that with $q = \Theta \left (\log(d/\epsilon)/\sqrt{\min\{1,\frac{\sigma_k}{\sigma_{k+1}} -1\}} \right )$, \eqref{small_tail_bound} still holds. We can then just follow the proof of Lemma \ref{main_lemma} and show that Property \ref{prop1} holds for \emph{all} $l \le k$ (not just for $l\le m$ as originally proven). This gives Property \ref{prop2} and  Property \ref{prop3} trivially.

Further, for $p \ge k$, the exact same analysis shows that $q = \Theta \left (\log(d/\epsilon)/\sqrt{\min\{1,\frac{\sigma_k}{\sigma_{p+1}} -1\}} \right )$ suffices. When $\bv{A}$'s spectrum decays rapidly, so $\sigma_{p+1} \le c \cdot \sigma_k$ for some constant $c < 1$ and some $p$ not much larger than $k$, we can obtain significantly faster runtimes. Our $\epsilon$ dependence becomes logarithmic, rather than polynomial:

\begin{theorem}[Gap Dependent Convergence]\label{decay_theorem}
With probability $99/100$, for any $p \ge k$, Algorithm \ref{simultaneous_iteration} or \ref{blanczos} initialized with $\bv{\Pi} \sim \mathcal{N}(0,1)^{d \times p}$ returns $\bv{Z}$ satisfying guarantees \eqref{eq:frob_error}, \eqref{eq:spectral_error}, and \eqref{eq:vector_error} as long as we set $q = \Theta \left ( \log(d/\epsilon)/\left (\min\{1,\frac{\sigma_k}{\sigma_{p+1}} -1\} \right ) \right )$ or $\Theta \left ( \log(d/\epsilon)/\sqrt{\min\{1,\frac{\sigma_k}{\sigma_{p+1}} -1\}} \right )$, respectively.
\end{theorem}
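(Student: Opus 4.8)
The plan is to reduce everything to the Main Approximation Lemma \ref{main_lemma}: all three target guarantees \eqref{eq:frob_error}, \eqref{eq:spectral_error}, \eqref{eq:vector_error} were derived (in Theorems \ref{spectral_theorem}, \ref{frob_theorem}, \ref{per_vector_theorem}) purely from its Properties \ref{prop1}--\ref{prop3}, so it suffices to re-establish those under the new, gap-tuned choice of $q$. In fact I would prove something stronger: that Property \ref{prop1} now holds for \emph{all} $l\le k$ rather than only $l\le m$, which makes Properties \ref{prop2} and \ref{prop3} trivial (their right-hand sides are at least $3\epsilon\sigma_{k+1}^2$, while $\mathcal{E}(\bv{Z}_l,\bv{A})$ is nonnegative and now bounded by $\tfrac{\epsilon}{2}\sigma_{k+1}^2$). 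Concretely, set $\gamma=\min\{1,\tfrac{\sigma_k}{\sigma_{p+1}}-1\}$ (if $\sigma_k\le\sigma_{p+1}$ there is no gap to exploit and one just invokes the gap-independent Theorem \ref{intro_theorem}), and re-run the proof of Property \ref{prop1} with the minimizing polynomial $p_1$ of Lemma \ref{actual_poly} taken at $\alpha=\sigma_{p+1}$ with this $\gamma$ and odd degree $q=\Theta(\log(dp/\epsilon)/\sqrt{\gamma})$ for Block Krylov Iteration (for Simultaneous Iteration use instead the shifted monomial $p_1(x)=(1+\gamma)\sigma_{p+1}\bigl(x/((1+\gamma)\sigma_{p+1})\bigr)^{2q+1}$ with $q=\Theta(\log(dp/\epsilon)/\gamma)$, exactly as in the base case). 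As usual we may assume $1/\epsilon,p=O(\poly d)$, so $q$ matches the claimed bound up to constants, and since $p_1$ has only odd monomials of degree $\le q$, $p_1(\bv{A})\bv{\Pi}$ — with $\bv{\Pi}$ now $d\times p$ — still lies in $span(\bv{K})$; let $\bv{Y}$ be an orthonormal basis for it.

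Then the Property \ref{prop1} argument goes through with only two bookkeeping changes. First, Lemma \ref{frobnorm} applied to $p_1(\bv{A})$ with the wider $\bv{\Pi}$ costs only a $\poly(d)$ factor: $\norm{p_1(\bv{A})-\bv{Y}\bv{Y}^T p_1(\bv{A})}_F^2\le c\,dp\,\norm{p_1(\bv{A})-p_1(\bv{A})_p}_F^2\le c\,dp\sum_{i=p+1}^d p_1(\sigma_i)^2$, and since $\sigma_i\le\sigma_{p+1}=\alpha$ for $i\ge p+1$, property 3 of Lemma \ref{actual_poly} gives $|p_1(\sigma_i)|\le\sigma_{p+1}/2^{q\sqrt{\gamma}-1}$; with $q=\Theta(\log(dp/\epsilon)/\sqrt{\gamma})$ this reproduces the analogue of \eqref{small_tail_bound}, $\norm{p_1(\bv{A})-\bv{Y}\bv{Y}^T p_1(\bv{A})}_F^2\le\tfrac{\epsilon}{2}\sigma_{p+1}^2\le\tfrac{\epsilon}{2}\sigma_{k+1}^2$. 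Second, the threshold above which $p_1$ dominates the identity is now $(1+\gamma)\alpha=(1+\gamma)\sigma_{p+1}\le\sigma_k$, so property 2 of Lemma \ref{actual_poly} yields $p_1(\sigma_i)\ge\sigma_i$ for \emph{every} $i\le k$. Writing $\bv{Y}=\bv{U}\bv{C}$ with orthonormal $\bv{C}$ and repeating the manipulations after \eqref{small_tail_bound} verbatim then gives $\norm{\bv{A}_l}_F^2-\norm{\bv{Y}\bv{Y}^T\bv{A}_l}_F^2\le\tfrac{\epsilon}{2}\sigma_{k+1}^2$ for all $l\le k$, and by Lemma \ref{frobopt} together with the optimality of $\bv{Z}_l$ within $span(\bv{K})$ we conclude $\mathcal{E}(\bv{Z}_l,\bv{A})\le\tfrac{\epsilon}{2}\sigma_{k+1}^2$ for all $l\le k$.

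With Property \ref{prop1} valid for all $l\le k$, Properties \ref{prop2} and \ref{prop3} follow immediately, so Theorems \ref{spectral_theorem}, \ref{frob_theorem}, and \ref{per_vector_theorem} apply unchanged and deliver \eqref{eq:frob_error}, \eqref{eq:spectral_error}, \eqref{eq:vector_error}. For Algorithm \ref{simultaneous_iteration} the same argument runs with the monomial $p_1$ above, which satisfies the identical two properties for $q=\Theta(\log(dp/\epsilon)/\gamma)$ and for which $p_1(\bv{A})\bv{\Pi}$ spans the same space as the algorithm's $\bv{K}$; hence the $1/\gamma$ rather than $1/\sqrt{\gamma}$ dependence.

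The main obstacle is not conceptual but the bookkeeping in the middle paragraph: one must verify that widening the block from $k$ to $p$ columns degrades Lemma \ref{frobnorm} by only a $\poly(d)$ factor (so $q$ stays $\Theta(\log d/\sqrt{\gamma})$, using $p=O(\poly d)$), and that $\bv{Y}$ genuinely lies in the degree-$q$ Krylov span of the wider block (which is where oddness and degree $\le q$ of $p_1$ matter). Beyond that there is a single substantive observation: centering the minimizing polynomial at $\sigma_{p+1}$ with gap $\sigma_k/\sigma_{p+1}-1$ empties the ``inner'' band of singular values in $[\tfrac{1}{1+\epsilon/2}\sigma_k,\sigma_k)$ that Properties \ref{prop2}--\ref{prop3} were built to handle, which is precisely why those properties become vacuous and the $1/\sqrt{\epsilon}$ (resp. $1/\epsilon$) dependence collapses to $1/\sqrt{\gamma}$ (resp. $1/\gamma$).
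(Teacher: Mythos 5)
Your proof takes essentially the same route as the paper: choose the Chebyshev (or monomial) polynomial with $\alpha=\sigma_{p+1}$ and $\gamma=\min\{1,\sigma_k/\sigma_{p+1}-1\}$, re-derive the analogue of \eqref{small_tail_bound}, observe that $(1+\gamma)\alpha\le\sigma_k$ so Property~\ref{prop1} now holds for all $l\le k$, and then note that Properties~\ref{prop2}--\ref{prop3} and Theorems~\ref{spectral_theorem}--\ref{per_vector_theorem} follow unchanged. You supply slightly more bookkeeping than the paper's sketch (the $\mathrm{poly}(d)$ cost of widening $\bv{\Pi}$ to $p$ columns and the check that the odd-degree polynomial stays inside the Krylov span), but the decomposition and key lemma invocations are identical.
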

This theorem may prove especially useful in practice because, on many architectures, multiplying a large $\bv{A}$ by $2k$ or even $10k$ vectors is not much more expensive than multiplying by $k$ vectors. Additionally, it should still be possible to perform all steps for post-processing $\bv{K}$ in memory, again limiting additional runtime costs due to its larger size.

Finally, we note that while Theorem \ref{decay_theorem} is more reminiscent of classical gap-dependent bounds, it still takes substantial advantage of the fact that we're looking for nearly optimal low-rank approximations and principal components instead of attempting to converge precisely to $\bv{A}$'s true singular vectors. This allows the result to avoid dependence on the gap between \emph{adjacent} singular values, instead varying only with $\frac{\sigma_k}{\sigma_{p+1}}$, which should be much larger.

\section{Experiments}
\label{sec:experiments}
We close with several experimental results. A variety of empirical papers, not to mention widespread adoption, already justify the use of randomized SVD algorithms. Prior work focuses in particular on benchmarking Simultaneous Iteration \cite{HalkoRokhlin:2011,tygertImpl} and, due to its improved accuracy over sketch-and-solve approaches, this algorithm is popular in practice \cite{scikit-learn,facebookBlog}. As such, we focus on demonstrating that for many data problems Block Krylov Iteration can offer significantly better convergence.

We implement both algorithms in MATLAB using Gaussian random starting matrices with exactly $k$ columns. We explicitly compute $\bv{K}$ for both algorithms, as described in Section \ref{sec:runtimes}, and use reorthonormalization at each iteration to improve stability  \cite{martinsson2010normalized}. We test the algorithms with varying iteration count $q$ on three common datasets, \textsc{SNAP/amazon0302} \cite{Davis:2011:UFS:2049662.2049663,Leskovec:2007}, \textsc{SNAP/email-Enron}  \cite{Davis:2011:UFS:2049662.2049663,Leskovec:2005}, and \textsc{20 Newsgroups} \cite{newsGroupsDownload}, computing column principal components in all cases. We plot error vs. iteration count for metrics \eqref{eq:frob_error}, \eqref{eq:spectral_error}, and \eqref{eq:vector_error} in Figure \ref{convergence_rates}. For per vector error \eqref{convergence_rates}, we plot the maximum deviation amongst all top $k$ approximate principal components (relative to $\sigma_{k+1}$).
\begin{figure}[h]
\centering
  	\subcaptionbox{\textsc{SNAP/amazon0302}, $k = 30$}{\includegraphics[width=.49\textwidth]{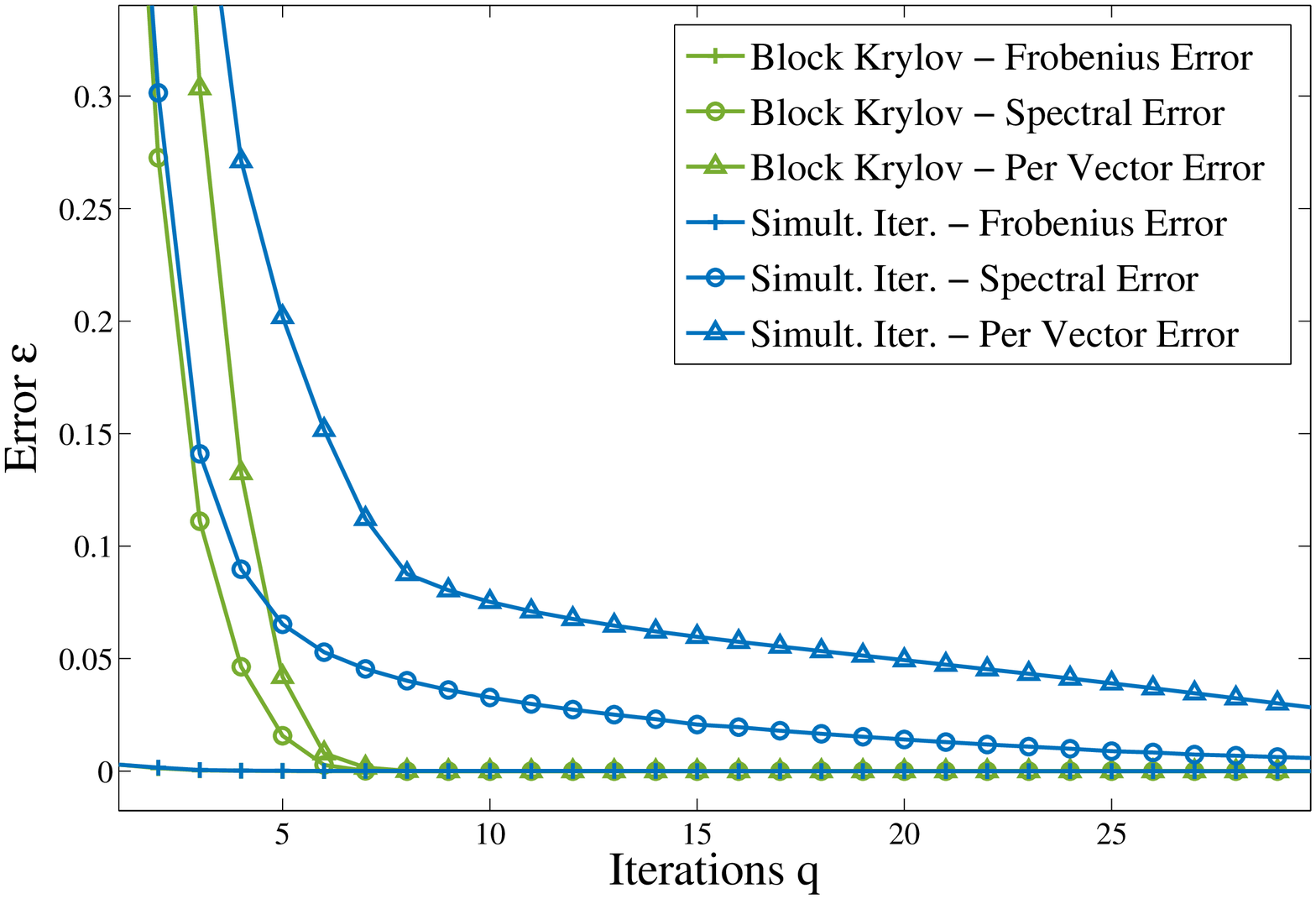}}
  	\subcaptionbox{\textsc{SNAP/email-Enron}, $k = 10$}{\includegraphics[width=.49\textwidth]{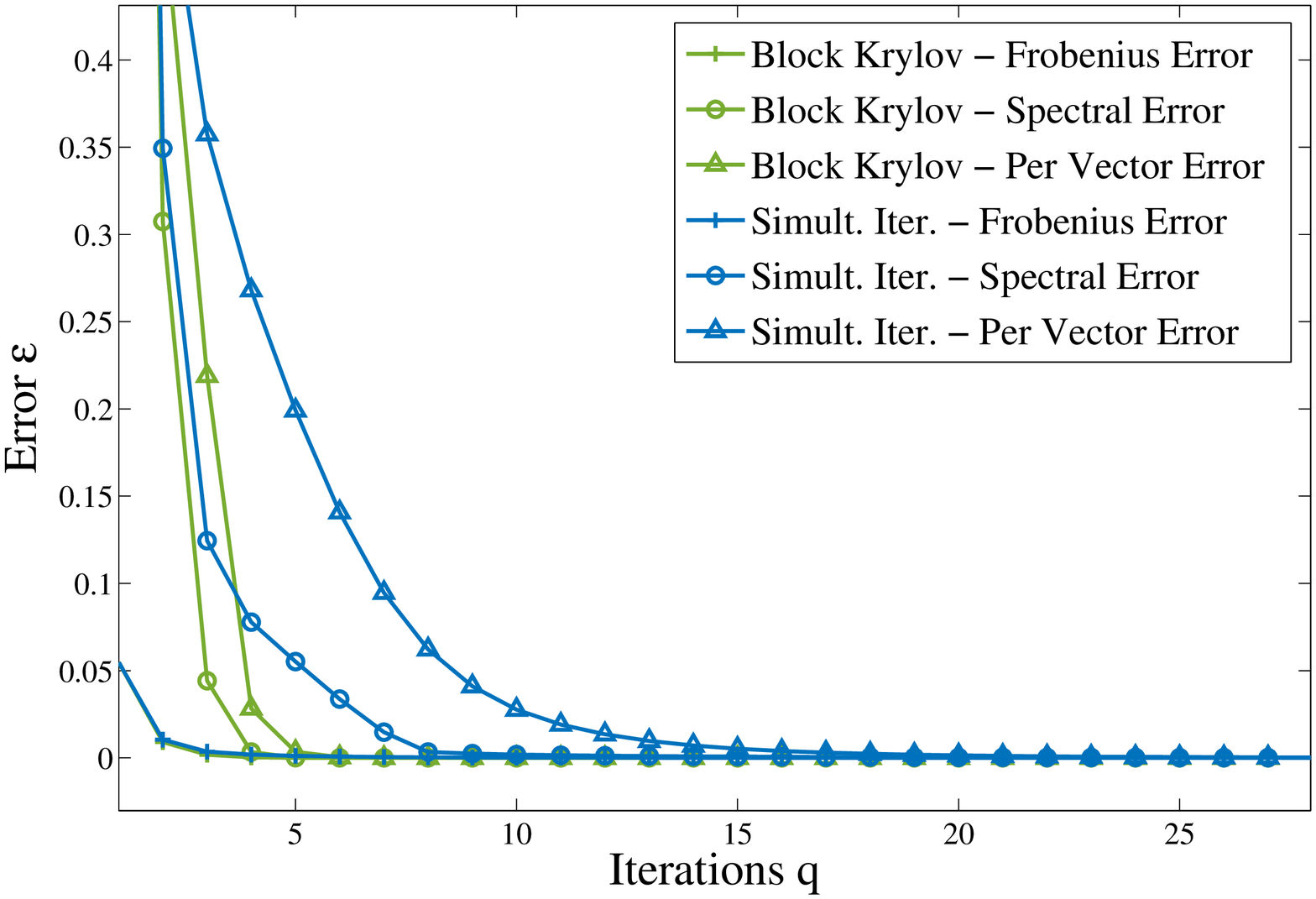}}
  	\subcaptionbox{ \textsc{20 Newsgroups}, $k = 20$}{\includegraphics[width=.49\textwidth]
{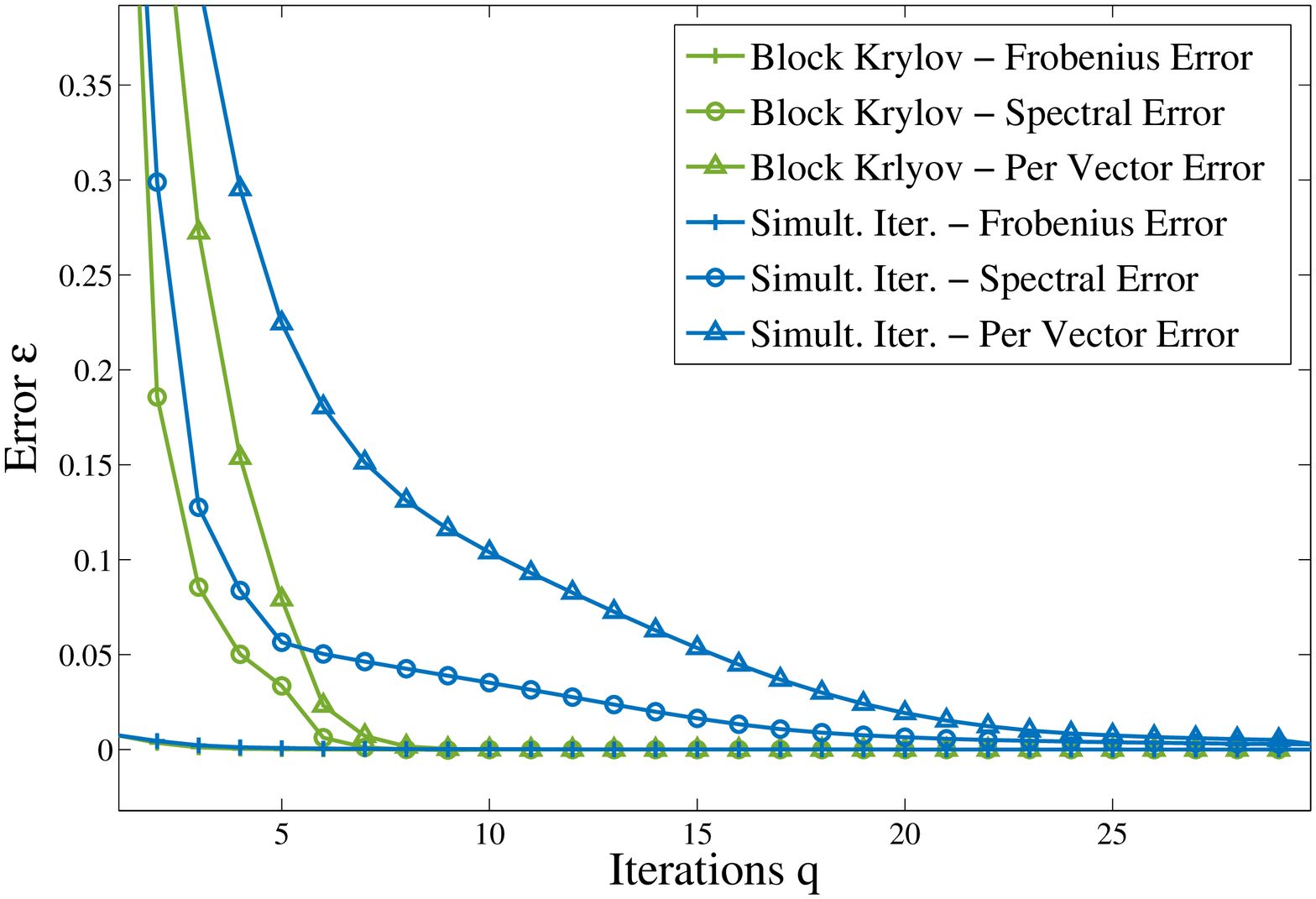}}
  	\subcaptionbox{ \textsc{20 Newsgroups}, $k = 20$, runtime cost}{\includegraphics[width=.49\textwidth]{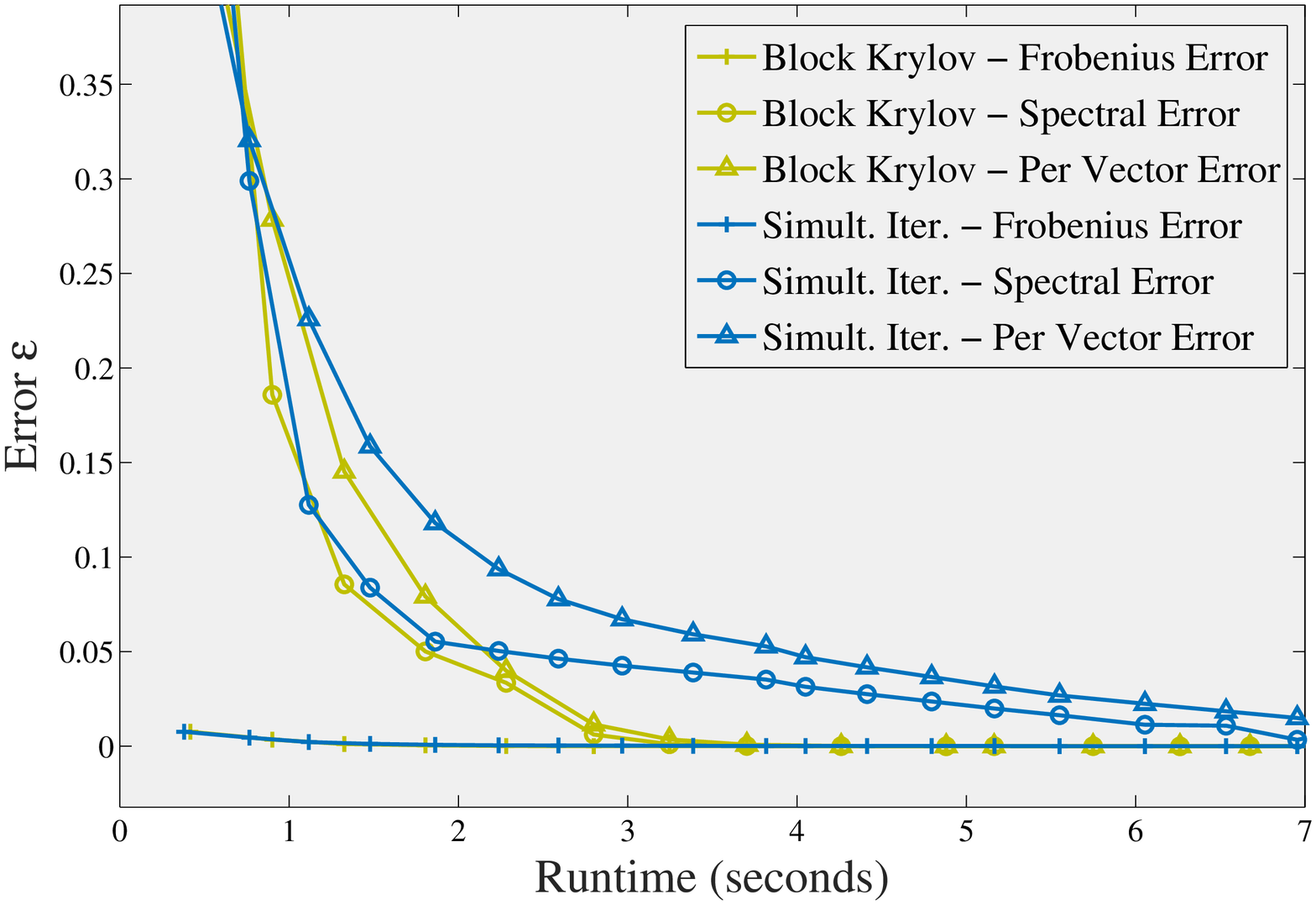}}
\caption{Low-rank approximation and per vector error convergence rates for Algorithms \ref{simultaneous_iteration} and \ref{blanczos}.}
\label{convergence_rates}
\vspace{-1em}
\end{figure}

Unsurprisingly, both algorithms obtain very accurate Frobenius norm error, $\|\bv{A} - \bv{Z}\bv{Z}^T \bv{A}\|_F / \|\bv{A} - \bv{A}_k\|_F$, with very few iterations. This is our intuitively weakest guarantee and, in the presence of a heavy singular value tail, both iterative algorithms will outperform the worst case analysis.

On the other hand, for spectral norm low-rank approximation and per vector error, we confirm that Block Krylov Iteration converges much more rapidly than Simultaneous Iteration, as predicted by our theoretical analysis.
It it often possible to achieve nearly optimal error with $<8$ iterations where as getting to within say $1\%$ error with Simultaneous Iteration can take much longer.

The final plot in Figure \ref{convergence_rates} shows error verses runtime for the $11269\times 15088$ dimensional \textsc{20 Newsgroups} dataset. We averaged over 7 trials and ran the experiments on a commodity laptop with 16GB of memory. As predicted, because its additional memory overhead and post-processing costs are small compared to the cost of the large matrix multiplication required for each iteration, Block Krylov Iteration outperforms Simultaneous Iteration for small $\epsilon$. 

More generally, these results justify the importance of convergence bounds that are independent of singular value gaps. Our analysis in Section \ref{decay} predicts that, once $\epsilon$ is small in comparison to the gap $\frac{\sigma_k}{\sigma_{k+1}} - 1$, we should see much more rapid convergence since $q$ will depend on $\log(1/\epsilon)$ instead of $1/\epsilon$. However, for Simultaneous Iteration, we do not see this behavior with \textsc{SNAP/amazon0302} and it only just begins to emerge for \textsc{20 Newsgroups}. 

While all three datasets have rapid singular value decay, a careful look confirms that their singular value gaps are actually quite small! For example, $\frac{\sigma_k}{\sigma_{k+1}} - 1$ is .004 for \textsc{SNAP/amazon0302} and .011 for \textsc{20 Newsgroups}, in comparison to .042 for \textsc{SNAP/email-Enron}. Accordingly, the frequent claim that singular value gaps can be taken as constant is insufficient, even for small $\epsilon$.

\subsubsection*{Acknowledgments}
We thank David Woodruff, Aaron Sidford, Richard Peng and Jon Kelner for several valuable conversations. Additionally, Michael Cohen was very helpful in discussing many details of this project, including the ultimate form of Lemma \ref{main_lemma}. This work was partially
supported by NSF Graduate Research Fellowship Grant No. 1122374, AFOSR
grant FA9550-13-1-0042, DARPA grant FA8650-11-C-7192, and the NSF Center for Science of Information.

\small{
\bibliography{blanczos}{}
\bibliographystyle{unsrt}
}

\appendix
\section{Appendix}
\label{app:cheby}

\subsection*{Frobenius Norm Low-Rank Approximation}

We first give a deterministic Lemma, from which the main approximation result follows.  

\begin{lemma}[Special case of Lemma 4.4 of \cite{woodruff2014sketching}, originally proven in \cite{boutsidis2014near}]\label{detbound}
Let $\bv{A} \in \mathbb{R}^{n \times d}$ have SVD $\bv{A} = \bv{U\Sigma V}^T$, let $\bv{S} \in \mathbb{R}^{d \times k}$ be any matrix such that $\rank \left(\bv{V}_k^T \bv{S} \right) = k$, and let $\bv{C} \in \mathbb{R}^{n \times k}$ be an orthonormal basis for the column span of $\bv{AS}$. Then:
\begin{align*}
\norm{\bv{A} - \bv{C}\bv{C}^T \bv{A}}_F^2 \le \norm{\bv{A}-\bv{A}_k}_F^2 + \norm{\left(\bv{A}-\bv{A}_k\right )\bv{S}\left(\bv{V}_k^T \bv{S}\right)^+}_F^2.
\end{align*} 
\end{lemma}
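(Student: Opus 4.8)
The plan is to reduce the statement to the optimality of orthogonal projection and then exhibit a single well-chosen competitor. First I would note that $\bv{C}\bv{C}^T\bv{A}$ is the best Frobenius-norm approximation to $\bv{A}$ among all matrices whose columns lie in $span(\bv{C}) = span(\bv{AS})$; since the columns of $\bv{AS}\bv{X}$ are linear combinations of the columns of $\bv{AS}$, they lie in $span(\bv{C})$, and hence for \emph{every} matrix $\bv{X}$ of compatible dimension $\norm{\bv{A} - \bv{C}\bv{C}^T\bv{A}}_F \le \norm{\bv{A} - \bv{AS}\bv{X}}_F$. It therefore suffices to bound $\norm{\bv{A} - \bv{AS}\bv{X}}_F^2$ for one good $\bv{X}$, and the natural guess — designed so that $\bv{AS}\bv{X}$ reproduces the optimal rank-$k$ approximation on the head of the spectrum — is $\bv{X} = (\bv{V}_k^T\bv{S})^+\bv{V}_k^T$.

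With this choice I would split $\bv{A} = \bv{A}_k + (\bv{A}-\bv{A}_k)$ and expand $\bv{AS}\bv{X}$ termwise. Writing $\bv{A}_k = \bv{U}_k\bv{\Sigma}_k\bv{V}_k^T$, the head term is $\bv{A}_k\bv{S}\bv{X} = \bv{U}_k\bv{\Sigma}_k(\bv{V}_k^T\bv{S})(\bv{V}_k^T\bv{S})^+\bv{V}_k^T$; because $\bv{V}_k^T\bv{S}$ is a $k\times k$ matrix of rank $k$ by hypothesis, it is invertible, so $(\bv{V}_k^T\bv{S})(\bv{V}_k^T\bv{S})^+ = \bv{I}_k$ and this term equals exactly $\bv{A}_k$. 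Hence the residual collapses entirely into the tail: $\bv{A} - \bv{AS}\bv{X} = (\bv{A}-\bv{A}_k) - (\bv{A}-\bv{A}_k)\bv{S}(\bv{V}_k^T\bv{S})^+\bv{V}_k^T$.

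The final step is to apply the matrix Pythagorean theorem from the Preliminaries to these two terms. They satisfy $\bv{M}\bv{N}^T = \bv{0}$ with $\bv{M} = \bv{A}_{r\setminus k} = \bv{A}-\bv{A}_k$ and $\bv{N} = -(\bv{A}-\bv{A}_k)\bv{S}(\bv{V}_k^T\bv{S})^+\bv{V}_k^T$, since $\bv{M}\bv{N}^T$ carries the factor $\bv{A}_{r\setminus k}\bv{V}_k = \bv{U}_{r\setminus k}\bv{\Sigma}_{r\setminus k}(\bv{V}_{r\setminus k}^T\bv{V}_k) = \bv{0}$. This yields $\norm{\bv{A}-\bv{AS}\bv{X}}_F^2 = \norm{\bv{A}-\bv{A}_k}_F^2 + \norm{(\bv{A}-\bv{A}_k)\bv{S}(\bv{V}_k^T\bv{S})^+\bv{V}_k^T}_F^2$, and since right-multiplication by $\bv{V}_k^T$ (which has orthonormal rows) preserves the Frobenius norm, the second term equals $\norm{(\bv{A}-\bv{A}_k)\bv{S}(\bv{V}_k^T\bv{S})^+}_F^2$. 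Chaining with the optimality inequality of the first paragraph gives the claimed bound.

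Conceptually there is essentially one idea: picking $\bv{X}$ so that the head cancels exactly, while the surviving tail factor ends in a $\bv{V}_k^T$ that is simultaneously annihilated by $\bv{A}_{r\setminus k}$ (enabling the Pythagorean split) and harmless for the Frobenius norm (enabling its removal). Everything else is routine, so I expect the only mild obstacle to be keeping the dimensions, transposes, and pseudoinverse identities straight through the Pythagorean step, and making sure the rank hypothesis is invoked precisely where invertibility of $\bv{V}_k^T\bv{S}$ is required.
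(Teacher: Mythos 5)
The paper does not prove Lemma \ref{detbound} itself; it simply cites \cite{woodruff2014sketching} and \cite{boutsidis2014near}. Your proposal correctly reconstructs the standard argument from those references: use optimality of the orthogonal projector onto $span(\bv{AS})$ to replace $\bv{CC}^T\bv{A}$ by the competitor $\bv{ASX}$ with $\bv{X} = (\bv{V}_k^T\bv{S})^+\bv{V}_k^T$, observe that $\bv{A}_k\bv{SX} = \bv{A}_k$ exactly because $\bv{V}_k^T\bv{S}$ is invertible, split the residual Pythagoreanly via $(\bv{A}-\bv{A}_k)\bv{V}_k = \bv{0}$, and drop the trailing $\bv{V}_k^T$ since it has orthonormal rows. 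All steps check out — including the two places where the rank hypothesis is genuinely used (cancellation of the head term and validity of the pseudoinverse manipulation) — so this is a correct and complete proof, consistent with the cited source.
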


\begin{replemma}{frobnorm}[Frobenius Norm Low-Rank Approximation] For any $\bv{A} \in \mathbb{R}^{n \times d}$ and $\bv{\Pi} \in \mathbb{R}^{d \times k}$ where the entries of $\bv{\Pi}$ are independent Gaussians drawn from $\mathcal{N}(0,1)$. If we let $\bv{Z}$ be an orthonormal basis for $span\left(\bv{A\Pi}\right)$, then with probability at least $99/100$, for some fixed constant $c$,
\begin{align*}
\norm{\bv{A} - \bv{ZZ}^T \bv{A}}_F^2 \le c \cdot dk \norm{\bv{A} - \bv{A}_k}_F^2.
\end{align*}
\end{replemma}
\begin{proof}
We follow \cite{woodruff2014sketching}. Apply Lemma \ref{detbound} with $\bv{S} = \bv{\Pi}$. With probability $1$, $\bv{V}_k^T \bv{S}$ has full rank. So, to show the result we need to show that $\norm{\left(\bv{A}-\bv{A}_k\right )\bv{S}\left(\bv{V}_k^T \bv{S}\right)^+}_F^2 \le c \norm{\bv{A}-\bv{A}_k}_F^2$ for some fixed $c$. For any two matrices $\bv{M}$ and $\bv{N}$, $\norm{\bv{MN}}_F \le \norm{\bv{M}}_{F} \norm{\bv{N}}_2 $. This property is known as \emph{spectral submultiplicativity}.
Noting that $\norm{\bv{U}_{r \setminus k} \bv{\Sigma}_{r \setminus k} }_F^2 = \norm{\bv{A}-\bv{A}_k}_F^2$ and applying submultiplicativity,
\begin{align*}
\norm{\left(\bv{A}-\bv{A}_k\right )\bv{S}\left(\bv{V}_k^T \bv{S}\right)^+}_F^2 \le \norm{\bv{U}_{r \setminus k} \bv{\Sigma}_{r \setminus k} }_F^2\norm{\bv{V}^T_{r \setminus k}\bv{S}}_2^2\norm{\left(\bv{V}_k^T \bv{S}\right)^+}_2^2.
\end{align*}

By the rotational invariance of the Gaussian distribution, since the rows of $\bv{V}^T$ are orthonormal, the entries of $\bv{V}_k^T \bv{S}$ and  $\bv{V}^T_{r \setminus k}\bv{S}$ are independent Gaussians. By standard Gaussian matrix concentration results (Fact 6 of \cite{woodruff2014sketching}, also in \cite{rudelson2010non}), with probability at least $99/100$, $\norm{\bv{V}^T_{r \setminus k}\bv{S}}_2^2 \le c_1 \cdot \max \{k,r-k\} \le c_1 \dot d$ and $\norm{\left(\bv{V}_k^T \bv{S}\right)^+}_2^2 \le c_2k$ for some fixed constants $c_1,c_2$. So,
\begin{align*}
\norm{\bv{U}_{r \setminus k} \bv{\Sigma}_{r \setminus k} }_F^2\norm{\bv{V}^T_{r \setminus k}\bv{S}}_2^2\norm{\left(\bv{V}_k^T \bv{S}\right)^+}_2^2 \le c \cdot dk\norm{\bv{A}-\bv{A}_k}_F^2
\end{align*}
for some fixed $c$, yielding the result. Note that we choose probability $99/100$ for simplicity -- we can obtain a result with higher probability by simply allowing for a higher constant $c$, which in our applications of Lemma \ref{frobnorm} will only factor into logarithmic terms.
\end{proof}

\subsection*{Chebyshev Polynomials}

\begin{replemma}{actual_poly}[Chebyshev Minimizing Polynomial]
Given a specified value $\alpha > 0$, gap $\gamma \in (0,1]$, and $q \geq 1$, there exists a degree $q$ polynomial $p(x)$ such that:
\begin{enumerate}
\item $p(\left(1+\gamma)\alpha\right) = (1+\gamma)\alpha$
\item $p(x) \geq x$ for all $x \geq (1+\gamma)\alpha$
\item $|p(x)| \leq \frac{\alpha}{2^{q\sqrt{\gamma}-1}}$ for all $x \in[0,\alpha]$
 \end{enumerate}
Furthermore, when $q$ is odd, the polynomial only contains odd powered monomials.
\end{replemma}

\begin{proof}
The required polynomial can be constructed using a standard Chebyshev polynomial of degree $q$, $T_q(x)$, which is defined by the three term recurrence: 
\begin{align*}
T_0(x) &= 1\\
T_1(x) &= x\\
T_{q}(x) &= 2xT_{q-1}(x) - T_{q-2}(x)
\end{align*}
Each Chebyshev polynomial satisfies the well known property that $T_q(x) \leq 1$ for all $x \in [-1,1]$ and, for $x > 1$, we can write the polynomials in closed form \cite{mason2002chebyshev}:
\begin{align}
\label{closed_form}
T_q(x) = \frac{(x+\sqrt{x^2-1})^q + (x-\sqrt{x^2-1})^q}{2}.
\end{align}
For Lemma \ref{actual_poly}, we simply set:
\begin{align}
p(x) = (1+\gamma)\alpha \frac{T_q(x/\alpha)}{T_q(1+\gamma)},
\end{align}
which is clearly of degree $q$ and well defined since, referring to \eqref{closed_form}, $T_q(x) > 0$ for all $x > 1$. Now,
\begin{align*}
 p(\left(1+\gamma)\alpha\right) = (1+\gamma)\alpha \frac{T_q(1+\gamma)}{T_q(1+\gamma)} = (1+\gamma)\alpha,
\end{align*}
so $p(x)$ satisfies property 1. 
With property 1 in place, to prove that $p(x)$ satisfies property 2, it suffices to show that $p^\prime(x) \geq 1$ for all $x \geq (1+\gamma)\alpha$. By chain rule, 
\begin{align*}
p^\prime(x) = \frac{(1+\gamma)}{T_q(1+\gamma)} T_q^\prime(x/\alpha).
\end{align*}
Thus, it suffices to prove that, for all $x \geq (1+\gamma)$,
\begin{align}
\label{final_deriv_rec}
(1+\gamma)T^\prime_q(x) \geq T_q(1+\gamma).
\end{align}
We do this by showing that $(1+\gamma)T^\prime_q(1+\gamma) \geq T_q(1+\gamma)$ and then claim that $T^{\prime\prime}_q(x) \geq 0$ for all $x > (1+\gamma)$, so \eqref{final_deriv_rec} holds for $x > (1+\gamma)$ as well.
A standard form for the derivative of the Chebyshev polynomial is
\begin{align}
\label{cheb_deriv}
T_q^\prime = \begin{cases}
				2q\left(T_{q-1} + T_{q-3} + \ldots + T_1\right)  &\text{ if $q$ is even,}\\
				2q\left(T_{q-1} + T_{q-3} + \ldots + T_2\right) + q &\text{ if $q$ is odd.}
			\end{cases}
\end{align}
\eqref{cheb_deriv} can be verified via induction once noting that the Chebyshev recurrence gives $T_q^\prime = 2xT^\prime_{q-1} + 2T_{q-1} - T^\prime_{q-2}$. Since $T_i(x) > 0$ when $x \geq 1$, we can conclude that $T^\prime_q(x) \geq 2qT_{q-1}(x)$. So proving \eqref{final_deriv_rec} for $x = (1+\gamma)$ reduces to proving that
\begin{align}
\label{more_final_deriv_rec}
 (1+\gamma)2qT_{q-1}(1+\gamma) \geq T_q(1+\gamma).
\end{align}
Noting that, for $x\geq 1$, $(x+\sqrt{x^2-1}) > 0$ and $(x-\sqrt{x^2-1}) > 0$, it follows from \eqref{closed_form} that 
\begin{align*}
T_{q-1}(x)\left((x+\sqrt{x^2-1}) + (x-\sqrt{x^2-1})\right) \geq T_{q}(x),
\end{align*}
and thus 
\begin{align*}
\frac{T_{q}(x)}{T_{q-1}(x)} \leq 2x.
\end{align*}
So, to prove \eqref{more_final_deriv_rec}, it suffices to show that $2(1+\gamma) \leq (1+\gamma)2q$, which is true whenever $q \geq 1$. So \eqref{final_deriv_rec} holds for all $x = (1+\gamma)$.

Finally, referring to \eqref{cheb_deriv}, we know that $T_q^{\prime\prime}$ must be some positive combination of lower degree Chebyshev polynomials. Again, since $T_i(x) > 0$ when $x \geq 1$, we conclude that  $T_q^{\prime\prime}(x) \geq 0$ for all $x \geq 1$. It follows that $T^\prime_q(x)$ does not decrease above $x = (1+\gamma)$, so \eqref{final_deriv_rec} also holds for all $x > (1+\gamma)$ and we have proved property 2.

To prove property 3, we first note that, by the well known property that $T_i(x) \leq 1$ for $x\in[-1,1]$, $T_q(x/\alpha) \leq 1$ for $x \in[0,\alpha]$. So, to prove $p(x) \leq \frac{\alpha}{2^{q\sqrt{\gamma}-1}}$, we just need to show that
\begin{align}
\label{eq:shrink}
\frac{1}{T_q(1+\gamma)} \leq \frac{1}{2^{q\sqrt{\gamma}-1}}.
\end{align}
Equation \eqref{closed_form} gives $T_q(1+\gamma) \geq \frac{1}{2}(1+\gamma+\sqrt{(1+\gamma)^2-1})^{q} \geq \frac{1}{2}(1+\sqrt{\gamma})^{q}$. When $\gamma \leq 1$,  $(1+\sqrt{\gamma})^{1/\sqrt{\gamma}} \geq 2$. Thus, $(1+\sqrt{\gamma})^{q} \geq 2^{q\sqrt{\gamma}}$. Dividing by 2 gives $T_q(1+\gamma)\geq 2^{q\sqrt{\gamma}-1}$, which gives \eqref{eq:shrink} and thus property 3.

Finally, we remark that it is well known that odd degree Chebyshev polynomials of the first kind only contain monomials of odd degree (and this is easy to verify inductively). Accordingly, since $p_q(x)$ is simply a scaling of $T_q(x)$, if we choose $q$ to be odd, $p_q(x)$ only contains odd degree terms.
\end{proof}

\subsection*{Additive Frobenius Norm Error Implies Additive Spectral Norm Error}

\begin{lemma}[Theorem 3.4 of \cite{guNewPaper}]\label{additive_error}
For any $\bv{A}\in \mathbb{R}^{n \times d}$, let $\bv{B}\in \mathbb{R}^{n \times d}$ be any rank $k$ matrix satisfying $\norm{\bv{A}-\bv{B}}_F^2 \le \norm{\bv{A}-\bv{A}_k}_F^2 + \eta$. Then $$\norm{\bv{A}-\bv{B}}_2^2 \le \norm{\bv{A}-\bv{A}_k}_2^2 + \eta.$$
\end{lemma}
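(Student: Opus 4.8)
The plan is to bound the singular values of $\bv{A}-\bv{B}$ from below by those of $\bv{A}$, using only that $\rank(\bv{B}) \le k$, and then turn the Frobenius-norm hypothesis into the spectral-norm conclusion by a single cancellation. Throughout, write $\sigma_i = \sigma_i(\bv{A})$ (with $\sigma_j = 0$ for $j > \rank(\bv{A})$) and $s = \norm{\bv{A}-\bv{B}}_2 = \sigma_1(\bv{A}-\bv{B})$, and recall $\norm{\bv{A}-\bv{A}_k}_2^2 = \sigma_{k+1}^2$ and $\norm{\bv{A}-\bv{A}_k}_F^2 = \sum_{i>k}\sigma_i^2$.

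The one substantive step is the interlacing bound $\sigma_i(\bv{A}-\bv{B}) \ge \sigma_{i+k}$ for every $i \ge 1$. I would obtain it either from Weyl's inequality for singular values of a sum, $\sigma_{a+b-1}(\bv{X}+\bv{Y}) \le \sigma_a(\bv{X}) + \sigma_b(\bv{Y})$, applied with $\bv{X} = \bv{A}-\bv{B}$, $\bv{Y} = \bv{B}$, $b = k+1$, and $\sigma_{k+1}(\bv{B}) = 0$; or, self-containedly, from the Courant--Fischer characterization of singular values: if $S^*$ is an $(i+k)$-dimensional subspace of the domain on which $\min_{\norm{\bv{x}}=1}\norm{\bv{A}\bv{x}} = \sigma_{i+k}$, then $S^* \cap \ker(\bv{B})$ has dimension at least $(i+k) + (d-k) - d = i$, and on that intersection $\bv{A}$ and $\bv{A}-\bv{B}$ agree, so any $i$-dimensional subspace of it witnesses $\sigma_i(\bv{A}-\bv{B}) \ge \sigma_{i+k}$. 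Crucially this needs no singular-value gaps or genericity of $\bv{A}$.

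With the interlacing bound in hand, I would expand the Frobenius norm as a sum of squared singular values, isolate the top term (which is exactly $s^2$), and apply the bound to the remaining terms:
\[
\norm{\bv{A}-\bv{B}}_F^2 = \sigma_1(\bv{A}-\bv{B})^2 + \sum_{i\ge 2}\sigma_i(\bv{A}-\bv{B})^2 \ge s^2 + \sum_{i\ge 2}\sigma_{i+k}^2 = s^2 + \sum_{j\ge k+2}\sigma_j^2 .
\]
On the other hand, the hypothesis gives $\norm{\bv{A}-\bv{B}}_F^2 \le \sum_{i>k}\sigma_i^2 + \eta = \sigma_{k+1}^2 + \sum_{j\ge k+2}\sigma_j^2 + \eta$. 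Subtracting the common (finite) tail $\sum_{j\ge k+2}\sigma_j^2$ from both sides yields $s^2 \le \sigma_{k+1}^2 + \eta$, i.e. $\norm{\bv{A}-\bv{B}}_2^2 \le \norm{\bv{A}-\bv{A}_k}_2^2 + \eta$, as claimed.

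I do not expect a real obstacle: the whole argument hinges on the elementary interlacing inequality, and everything else is bookkeeping with finite sums. The only points requiring a little care are matching indices so the tail cancellation is legitimate (finite dimensions make this immediate) and, if one wants the self-contained route, stating the Courant--Fischer min-max cleanly. It is worth noting the bound is tight, as seen by taking $\bv{B} = \bv{A}_k$ and $\eta = 0$.
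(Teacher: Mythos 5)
Your proof is correct and follows essentially the same route as the paper's: both derive the singular-value interlacing bound $\sigma_i(\bv{A}-\bv{B}) \ge \sigma_{i+k}(\bv{A})$ from Weyl's inequality with $\sigma_{k+1}(\bv{B})=0$, then peel off the top squared singular value from the Frobenius expansion and cancel the tail. The one small addition in your write-up is the optional self-contained derivation of the interlacing bound via Courant--Fischer and a dimension count against $\ker(\bv{B})$, which is just unpacking the Weyl lemma the paper cites.
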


\begin{proof}
We follow the proof given in \cite{guNewPaper} nearly exactly, including it for completeness. By Weyl's monotonicity theorem (Theorem 3.2 in \cite{guNewPaper}), for any two matrices $\bv{X},\bv{Y} \in \mathbb{R}^{n \times d}$ with $n \ge d$, for all $i,j$ with $i+j-1 \le n$ we have $\sigma_{i+j-1}(\bv{X}+\bv{Y}) \le \sigma_i(\bv{X}) + \sigma_j(\bv{X})$. If we write $\bv{A} = (\bv{A}-\bv{B}) + \bv{B}$ and apply this theorem, then for all $1 \ge i \ge n-k$,
\begin{align*}
\sigma_{i+k}(\bv{A}) \le \sigma_{i}(\bv{A}-\bv{B})+\sigma_{k+1}(\bv{B}).
\end{align*}
Note that if $n < d$, we can just work with $\bv{A}^T$ and $\bv{B}^T$. Now, $\sigma_{k+1}(\bv{B}) = 0$ since $\bv{B}$ is rank $k$. Using the resulting inequality and recalling that $\|\bv{A} - \bv{A}_k\|_F^2 = \sum_{i = k+1}^n \sigma_i^2(\bv{A})$, we see that:
\begin{align*}
\norm{\bv{A}-\bv{B}}_F^2 \le \norm{\bv{A}-\bv{A}_k}_F^2 + \eta\\
\sum_{i=1}^n \sigma_i^2(\bv{A}-\bv{B}) \le \sum_{i=k+1}^n\sigma_i^2(\bv{A}) + \eta\\
\sum_{i=1}^{n-k} \sigma_i^2(\bv{A}-\bv{B}) \le \sum_{i=k+1}^n\sigma_i^2(\bv{A}) + \eta\\
\sigma_1^2(\bv{A}-\bv{B}) + \sum_{i=2}^{n-k} \sigma_i^2(\bv{A}) \le \sum_{i=k+1}^n\sigma_i^2(\bv{A}) + \eta\\
\sigma_1^2(\bv{A}-\bv{B}) \le \sum_{i=k+1}^n\sigma_i^2(\bv{A}) - \sum_{i=2}^{n-k} \sigma_i^2(\bv{A}) + \eta\\
\sigma_1^2(\bv{A}-\bv{B}) \le \sigma_{k+1}^2(\bv{A})+ \eta.
\end{align*}
$\sigma_{k+1}^2(\bv{A})$ is equal to the squared top singular value of $\bv{A}-\bv{A}_k$ (i.e. $\|\bv{A}-\bv{A}_k\|_2^2$, so the lemma follows.
\end{proof}

\end{document}